\newcommand{\m}[1]{\ensuremath{#1}\xspace}
\newcommand{\trval}[1]{\m{{\bf #1}}}
\newcommand{\limplies}{\m{\Rightarrow}}
\newcommand{\lrule}{\m{\leftarrow}}
\newcommand{\cause}{\m{\stackrel{c}{\lrule}}}
\newcommand{\ltrue}{\trval{t}}
\newcommand{\lfalse}{\trval{f}}
\newcommand{\lunkn}{\trval{u}}
\newcommand{\lincon}{\trval{i}}
\newcommand{\bigor}{\m{\bigvee}}
\newcommand{\true}{\m{\top}}
\newcommand{\false}{\m{\bot}}
\newcommand{\ra}{\m{\rightarrow}}
\newcommand{\voc}{\m{\Sigma}}
\newcommand{\struct}{\m{I}}
\newcommand{\I}{\m{\mathcal{I}}}
\newcommand{\Iin}{\m{\I_{in}}}
\newcommand{\theory}{\m{\mathcal{T}}}
\newcommand{\D}{\m{\Delta}}
\newcommand{\f}{\m{\varphi}}
\NewDocumentCommand\inter{g+g}{%
  \IfNoValueTF{#1}
	{\struct}
	{\m{#1^{#2}}}}
\newcommand{\defined}[1]{\m{#1_{def}}}
\newcommand{\open}[1]{\m{#1_{open}}}
\newcommand{\xxx}{\m{\overline{x}}}
\newcommand{\yyy}{\m{\overline{y}}}
\newcommand{\ddd}{\m{\overline{d}}}
\newcommand{\bracketddd}{\m{\big(\overline{d}\big)}}
\newcommand{\DDD}{\m{\overline{D}}}
\newcommand{\elim}{\m{\backslash}}
\newcommand{\nat}{\m{\mathbb{N}}}
\renewcommand{\int}{\m{\mathbb{Z}}}
\newcommand{\typed}[2]{\m{#1\in #2:}}
\NewDocumentCommand\subs{g+g}{%
  \IfNoValueTF{#1}
	{\m{/}}
	{\m{#1/ #2}}}
\newcommand{\tuple}[1]{\m{\left \langle #1 \right \rangle }}
\newcommand{\logicname}[1]{\text{\sc #1}\xspace}
\newcommand{\idp}{\logicname{IDP}}
\newcommand{\idpthree}{\logicname{IDP$^3$}}
\newcommand{\fodotidp}{\logicname{FO(\ensuremath{\cdot})\ensuremath{^{\mathtt{IDP}}}}}
\newcommand{\foidp}{\fodotidp}
\newcommand{\fodot}{\logicname{FO(\ensuremath{\cdot})}}
\newcommand{\pcdot}{\logicname{PC(\ensuremath{\cdot})}}
\newcommand{\foid}{\logicname{FO(\ensuremath{ID})}}
\newcommand{\esoid}{\logicname{\ensuremath{\exists}SO(\ensuremath{ID})}}
\newcommand{\ouracronym}[3]{%
	\newacronym{#1}{#2}{#3}
	\expandafter\newcommand\csname #1\endcsname{\gls{#1}\xspace}%
}
\def\ifenv#1{
\def\@tempa{#1}%
\def\@ttempa{#1*}%
\ifx\@tempa\@currenvir
\expandafter\@firstoftwo
\else
\expandafter\@secondoftwo
\fi
}
\newcommand{\ddrule}[4]{\ensuremath{#1 \leftarrow #2 & \{#3\} & #4}}
\newcommand{\drule}[2]{\ensuremath{#1 & \leftarrow & #2}}
\newcommand{\darule}[4]{\ensuremath{#1 \leftarrow #2 & \{#3\} & #4}}
\newcommand{\arule}[2]{\ensuremath{#1 \, &\leftarrow \, #2}}
\newenvironment{ldef}{\left\{\begin{array}{l@{ \,}l@{\,}l}}{\end{array}\right\}}
\newenvironment{ltheo}{\[\begin{array}{l}}{\end{array}\]\ignorespacesafterend}
\newcommand{\LNDRule}[2]{
\ifenv{array}
{\drule{#1}{#2}}
{ \ifenv{align}
	{\arule{#1}{#2}}
	{\ifenv{align*}
	{\arule{#1}{#2}}
	{ERROR: using LDRule in unsupported environment: \@currenvir}
	}
}
}
\newcommand{\LDRule}[4]{
\ifenv{array}
{\ddrule{#1}{#2}{#3}{#4}}
{ \ifenv{align}
	{\darule{#1}{#2}{#3}{#4}}
	{\ifenv{align*}
	{\darule{#1}{#2}{#3}{#4}}
	{ERROR: using LDRule in unsupported environment: \@currenvir}
	}
}
}
\NewDocumentCommand\LRule{m+g+g+g}{%
	\IfNoValueTF{#2}%
	{#1.&}{%
	\IfNoValueTF{#3}
	{\LNDRule{#1}{#2.}}
	{\LDRule{#1}{#2.}{#3}{#4}}%
	}
}
\NewDocumentCommand\CLRule{m+g}{%
\ifenv{array}
{\cdrule{#1}{#2}}
{ \ifenv{align}
	{\carule{#1}{#2}}
	{\ifenv{align*}
		{\carule{#1}{#2}}
		{ERROR: using CLRule in unsupported environment: \@currenvir}
	}
}
}
\NewDocumentCommand\carule{m+g}{%
	\IfNoValueTF{#2}
		{\ensuremath{#1.}}
		{\ensuremath{#1 \, &\cause \, #2}}}
\NewDocumentCommand\cdrule{m+g}{%
	\IfNoValueTF{#2}
		{\ensuremath{#1.}}
		{\ensuremath{#1 & \cause & #2}}}
\newcommand{\algrule}[4]{
\hbox{{#1}:}& 
\quad #2 ~\longrightarrow~ #3 
\hbox{~ if } #4\\
}
\newcommand{\AlgoRule}[4]{
\ifenv{array}
{\algrule{#1}{#2}{#3}{#4}}
	{ERROR: using AlgoRule in unsupported environment: \@currenvir}
}
\newcommand\setcitation[2]{%
  \csdef{mycommoncitation#1}{#2}}
\newcommand\getcitation[1]{%
  \csuse{mycommoncitation#1}}
\newcommand\mycite[1]{%
      \ifcsname mycommoncitation#1\endcsname%
   \cite{\getcitation{#1}}%
  \else%
    \cite{#1}
  \fi%
}	
\theoremstyle{plain}
\newtheorem{thm}{Theorem}[section]
\newtheorem{theorem}[thm]{Theorem}
\newtheorem*{lem*}{Lemma}
\newtheorem{corollary}[thm]{Corollary}
\newtheorem{proposition}[thm]{Proposition}
\theoremstyle{definition}
\newtheorem{definition}[thm]{Definition}
\newtheorem*{nota*}{Notation}
\newtheoremstyle{example_basic} 
{\topsep} %ABOVESPACE
{\topsep} %BELOWSPACE 
{\normalfont}% Body font 
{0pt}% Indent amount (empty = no indent, \parindent = para indent) 
{\bfseries}% Thm head font 
{.}% Punctuation after thm head 
{5pt plus 1pt minus 1pt}% Space after thm head (\newline = linebreak) 
{}% Thm head spec
\newtheoremstyle{example_contd}
{\topsep} %ABOVESPACE
{\topsep} %BELOWSPACE 
{\normalfont}% Body font 
{0pt}% Indent amount (empty = no indent, \parindent = para indent) 
{\bfseries}% Thm head font 
{.}% Punctuation after thm head 
{5pt plus 1pt minus 1pt}% Space after thm head (\newline = linebreak) 
{\thmname{#1} \thmnumber{ #2}\thmnote{#3} (continued)}% Thm head spec 
\theoremstyle{example_basic} 
\newtheorem{example}[thm]{Example}
\newtheorem{ex*}{Example}
\theoremstyle{example_contd}
\theoremstyle{plain}
\newcolumntype{Y}{>{\centering\arraybackslash}X}
\tikzstyle{circ} = [circle, draw,scale=0.85]
\tikzstyle{line} = [draw, -latex']
\tikzstyle{block} = [rectangle, draw]
\newcommand{\groundone}{\textsf{one\_step\_ground}\xspace}
\newcommand{\ground}{\textsf{ground}\xspace}
\newcommand{\Dg}{\ensuremath{\D_\text{g}}\xspace}
\newcommand{\Dd}{\ensuremath{\D_\text{d}}\xspace}
\newcommand{\Dgd}{\ensuremath{\D_\text{gd}}\xspace} 
\newcommand{\Sg}{\ensuremath{T_\text{g}}\xspace}
\newcommand{\pt}{\ensuremath{P_\text{\theory}}\xspace}
\newcommand{\reach}				{\texttt{Reachability}\xspace}
\newcommand{\stablemarriage}	{\texttt{Stable Marriage}\xspace}
\newcommand{\colouring}			{\texttt{Graph Colouring}\xspace}
\newcommand{\packing}			{\texttt{Packing}\xspace}
\newcommand{\sokoban}			{\texttt{Sokoban}\xspace}
\newcommand{\disjsched}			{\texttt{Disjunctive Scheduling}\xspace}
\newcommand{\labyrinth}			{\texttt{Labyrinth}\xspace}
\newcommand{\airportpickup}		{\texttt{Airport Pickup}\xspace}
\newcommand{\reachs}			{\texttt{Reachability}\xspace}
\newcommand{\stablemarriages}	{\texttt{Stable Marr.}\xspace}
\newcommand{\colourings}		{\texttt{Graph Col.}\xspace}
\newcommand{\packings}			{\texttt{Packing}\xspace}
\newcommand{\sokobans}			{\texttt{Sokoban}\xspace}
\newcommand{\disjscheds}		{\texttt{Disj. Sched.}\xspace}
\newcommand{\labyrinths}		{\texttt{Labyrinth}\xspace}
\newcommand{\tbf}[1]{\textbf{#1}}
\newcommand{\change}[1]{#1}
\title{Lazy Model Expansion:\\ Interleaving Grounding with Search}
\author{\name Broes De Cat \email broes.decat@gmail.com\\
		\addr OM Partners, Belgium \\ \\
         \name Marc Denecker \email Marc.Denecker@cs.kuleuven.be\\
         \addr Dept.\ Computer Science, KULeuven, Belgium\\ \\
        \name  Peter Stuckey \email pstuckey@unimelb.edu.au\\
        \addr National ICT Australia and \\
        \addr Dept.\ of Computing and Information Systems\\
        \addr The University of Melbourne,  Australia \\ \\
       \name Maurice Bruynooghe \email Maurice.Bruynooghe@cs.kuleuven.be\\
       \addr Dept.\ Computer Science, KULeuven, Belgium
      }
\begin{document}

\maketitle

\begin{abstract}
Finding satisfying assignments for the variables involved in a set of constraints can be cast as a (bounded) \emph{model generation} problem: search for (bounded) models of a theory in some logic.
The state-of-the-art approach for bounded model generation for rich
knowledge representation languages like \ASP and \fodot and \change{a
  CSP modeling language such as} Zinc, is \emph{ground-and-solve}: reduce the theory to a ground or propositional one and apply a search algorithm to the resulting theory.

An important bottleneck is the blow-up of the size of the theory caused by the \change{grounding} phase. \emph{Lazily grounding} the theory during search is a way to overcome this bottleneck. We present a theore\-tical framework and an implementation in the context of the \fodot knowledge representation language.
Instead of grounding all parts of a theory, {\em justifications} are derived for some parts of it. Given a partial assignment for the grounded part of the theory and valid justifications for the formulas of the non-grounded part, the justifications provide a recipe to construct a complete assignment that satisfies the non-grounded part. When a justification for a particular formula becomes invalid during search, a new one is derived; if that fails, the formula is split in a part to be grounded and a part that can be justified. Experimental results illustrate the power and generality of this approach.
\end{abstract}

\section{Introduction}\label{sec:intro}

\change{The world is filled with combinatorial problems. These include important combinatorial optimization tasks such as planning, scheduling and rostering, combinatorics problems such as extremal graph theory, and countless puzzles and games. Solving combinatorial problems is hard, and all the methods we know to tackle them involve some kind of search.}

\change{Various \emph{declarative paradigms} have been developed to solve such problems. In such approaches, objects and attributes that are searched for are represented by symbols, and constraints to be satisfied by those objects are represented as expressions over these symbols in a declarative language. Solvers then search for values for these symbols that satisfy the constraints. This idea is found in the fields of \CP~\mycite{CP}, \ASP~\mycite{ASP}, SAT, Mixed Integer Programming (MIP), etc. In the terminology of logic, the declarative method amounts to expressing the desired properties of a problem class by sentences in a \emph{logical theory}. The data of a particular problem instance corresponds naturally to a \emph{partial interpretation (or structure)}. The solving process is to apply \emph{model generation}, or more specifically \emph{model expansion}~\cite{MitchellT05}, the task of finding a structure that expands the input partial structure and satisfies the theory. The resulting structure is a solution to the problem. Model generation/expansion, studied for example in the field of \KR~\mycite{KR}, is thus analogous to the task of solving constraint satisfaction problems, studied in \CP, and that of generating answer sets of logic programs, studied in \ASP.}

The similarities between these areas go deeper and extend to the level of the used techniques. State-of-the-art approaches often follow a two-phase solving methodology. 
In a first phase, the input theory, in the rich language at hand, is reduced into a fragment of the language that is supported by some search algorithm. In the second phase, the search algorithm is applied to the reduced theory to effectively search for models. For example, model generation for the language
MiniZinc~\shortcite{conf/cp/NethercoteSBBDT07} is performed by reducing to the ground language Flat\-Zinc, for which search algorithms are available. Similarly, the language \fodot~\mycite{fodot} is reduced to its propositional fragment \pcdot~\cite<see, e.g.,>{jair/WittocxMD10}, and ASP is reduced to propositional ASP~\cite<see, e.g.,>{lpnmr/GebserST07}. As the reduced theory is often in a ground fragment of the language, we refer to the resulting reduced theory as the \emph{grounding} and to the first phase as the \emph{grounding} phase (where quantifiers are instantiated with elements in the domain). In other fields, grounding is also referred to as flattening, unrolling, splitting or propositionalization. The solving methodology itself is generally referred to as \emph{ground-and-solve}.

Grounding becomes a bottleneck as users turn to applications with large domains and complex constraints. Indeed, it is easy to see that the grounding size of an FO formula is exponential in the nesting depth of quantifiers and in the arity of predicates and polynomial in the size of the universe of discourse. There is an increasing number of applications where the size of the grounded theory is so large that it does not fit in memory. \change{For example,~\citeA{padl/SonPL14} discuss several \ASP applications where the ground-and-solve approach turns out to be inadequate.}

In this paper, we present a novel approach to remedy this bottleneck, called \emph{lazy model expansion}, where the grounding is generated lazily (on-the-fly) during search, instead of up-front.
The approach works by associating {\em justifications} to the non-ground parts of the theory. A valid justification for a non-ground formula is a recipe to
expand a partial structure into a more precise (partial) structure that satisfies the
formula. 
\change{Of course, it is crucial that the recipe is a lot more compact
  than the grounding of the formula.}
Given a partial structure and a valid justification for each of
the non-ground formulas, a (total) structure 
\change{is obtained by extending the partial structure with the
  literals in the justifications of the non-ground
  formulas. Justifications are selected in such a way that this total
  structure is a model for the whole initial theory.}
Consequently, model generation can be limited to the grounded part of
the theory; if a model is found for that part, it can be extended to a
model of the whole theory. However, a new assignment during model generation can conflict with one of the justifications. In that case, an alternative justification
needs to be sought. If none is found, the associated formula can 
be split in two parts, one part that is grounded and one part for which a
valid justification is still available.

\begin{example}\label{ex:sokoban}
Consider the \texttt{Sokoban} problem, a planning problem where a robot
has to push blocks around on a 2-D grid to arrange them in a given goal
configuration.
A constraint on the move action is that the target position $p \in P$ of the moved block $b \in B$ is currently (at time $t \in T$) empty, which can be expressed as 
\begin{align}
\forall (t,b,p) \in T \times B \times P: move(b, p,t) \limplies empty(p, t).\label{ex:soko:1}
\end{align} 
As it is not known in advance how many time steps are needed, one ideally wants to assume a very large or even infinite number of steps. Using ground-and-solve, this blows up the size of the grounding.
\change{Incremental grounding, iteratively extending the time domain until it is large enough to allow for a plan, has been developed to avoid the blow-up in the context of planning~\shortcite{iclp/GebserKKOST08}. Our approach is more general and does not depend on the presence of one domain that can be incrementally increased.}

\change{Returning to the example, instead of grounding
  sentence~(\ref{ex:soko:1}), we associate with it a justification, a
  recipe to satisfy it. ``Make $move(b,p,t)$ false for all
  $b$, $p$ and $t$'' is such a recipe. When the search finds a model
  for the grounded part of the problem that is not in conflict with
  the recipe, the model can be extended with the literals in the
  recipe to obtain a model of the whole theory. However, if the
  search would decide to move block $b_1$ to position $p_1$ at time $t_1$,
  a conflict is created with the recipe. To resolve it, the instance
  of sentence~(\ref{ex:soko:1}) that is in conflict with the partial
  model of the search is split off and sentence~(\ref{ex:soko:1})
  is replaced by the equivalent sentences:
\begin{align}
&move(b_1, p_1,t_1) \limplies empty(p_1, t_1)\label{ex:soko:2} \\
&\forall (t,b,p) \in T \times B \times P\elim (t_1,b_1,p_1): move(b, p,t) \limplies empty(p, t)\label{ex:soko:3}
\end{align}
Sentence~(\ref{ex:soko:2}) is grounded and passed to the search
component which will use it to check that $empty(p_1, t_1)$ holds.
Sentence~(\ref{ex:soko:3}) is non-ground and can be satisfied by
the recipe ``$move(b,p,t)$ is false except for
$move(b_1,p_1,t_1)$''. When the search makes more moves, more
instances will be grounded, until the search finds a partial plan for
the problem at hand. Then the literals in the recipe of the remaining
non-ground formula ---making  $move(b,p,t)$ false for all 
instances of sentence~(\ref{ex:soko:1}) that have not been grounded--- will complete the plan.
}
\end{example}

\noindent The main contributions of this paper are:
\begin{itemize}
  \item A theoretical framework for \emph{lazy model expansion}. By aiming at minimally instantiating quantified variables, it paves the way for a solution to the long-standing problem of handling quantifiers in search problems, encountered, e.g., in the fields of \ASP~\cite{lpnmr/LefevreN09a} and SAT Modulo Theories~\cite{SMTQuantProblem2009}. The framework also generalizes existing approaches that are related to the grounding bottleneck such as incremental domain extension~\cite{model/ClaessenS03} and lazy clause generation~\mycite{lazyclausegeneration}.
  \item A complete algorithm for lazy model expansion for the logic \foid, the extension of \FO with inductive definitions~\shortcite<a language closely related to \ASP as shown in>{DeneckerLTV12}. This includes efficient algorithms to derive consistent sets of justifications and to maintain them throughout changes in a partial structure (e.g., during search).
  \item An implementation extending the \idp knowledge-base system~\shortcite{WarrenBook/DeCatBBD14} and experiments that illustrate the power and generality of lazy grounding.
\end{itemize}
\change{Lazy grounding is a new step in our ability to solve complex
combinatorial problems. By avoiding the up-front grounding step of previous
approaches, lazy grounding can ground enough of the problem to solve it.
% problems that are too large to ground can be tackled by relying on
% lazy grounding to ground enough of the specification to definitively solve the
% problem. 
While our method is developed for the logic \foid, as will become clear, justifications are associated with rules, and these rules are very similar to the rules used by ASP systems. Hence, as discussed towards the end of the paper, our framework and algorithms can be applied also in the context of \ASP.}

The paper is organized as follows. In Section~\ref{sec:preliminaries}, the necessary background and notations are introduced. Formal definitions of lazy grounding with \foid are presented in Section~\ref{sec:theory}, followed by a presentation of the relevant algorithms and heuristics in Sections~\ref{sec:algorithms} and~\ref{sec:optimizations}. Experimental evaluation is provided in Section~\ref{sec:experiments}, followed by a discussion on related and future work and a conclusion.
A preliminary version of the paper appeared as the work of
\citeA{iclp/DeCatDS12} and of \citeA[ch.~7]{DeCatPhd14}.

 \section{Preliminaries}\label{sec:preliminaries}

In this section, we provide the necessary background on the logic \foid, on the inference tasks model generation and model expansion for \foid and on the ground-and-solve approach to model expansion.

\newcommand{\evali}[1]{#1^{\I}}
\newcommand{\eval}[2]{#1^{#2}}
\newcommand\restr[2]{\ensuremath{\left.#1\right|_{#2}}}
\newcommand\opensof[1]{\ensuremath{\textrm{open}(#1)}}

\renewcommand{\defined}[1]{\m{\mathit{defined}(#1)}}
\renewcommand{\open}[1]{\m{\mathit{open}(#1)}}
\newcommand{\head}[1]{\m{\mathit{head}(#1)}}
\newcommand{\body}[1]{\m{\mathit{body}(#1)}}
\newcommand{\vocf}[1]{\m{\mathit{voc}(#1)}}

\newcommand{\M}{\ensuremath{\mathcal{M}}\xspace}

\subsection{\foid}
First, we define syntax and semantics of the logic
\emph{\foid}~\mycite{fodot}, the extension of \glsreset{FO}\FO with
inductive definitions. We assume familiarity with \FO. Without loss of
generality, we limit \foid to the function-free fragment.
Function symbols can always be eliminated using graph
predicates~\cite{Enderton01}.

A (function-free) vocabulary \voc consists of a set of predicate symbols. Propositional symbols are 0-ary predicate symbols; these include the symbols \true and \false, denoting \emph{true} and \emph{false} respectively. 
Predicate symbols are usually denoted by $P$, $Q$, $R$; atoms by $a$, literals (atoms or their negation) by $l$; variables by $x$, $y$; and domain elements by $d$. With
$\bar{e}$ we denote an ordered set of objects $e_1, \ldots, e_n$; with $P/n$ a predicate $P$ of arity $n$.

\change{The methods for model generating developed below require that a
(possibly infinite) domain $D$ is given and fixed.} Given a (function-free)
vocabulary $\voc$, a \emph{domain atom} is an atom of the form
$P\bracketddd$ with $P/n\in\voc$ and $\ddd\in D^n$, an $n$-tuple of
domain elements. Likewise, we consider {\em domain literals}.

A structure \I interpreting \voc consists of the domain $D$ and an $n$-ary relation $P^\I\subseteq D^n$ for all predicate symbols $P/n \in \voc$.  Alternatively, an $n$-ary relation can be viewed as a function $D^n\to\{\ltrue,\lfalse\}$. The propositional symbols \true and \false are respectively interpreted as \ltrue and \lfalse.

Model generation algorithms maintain {\em partial} structures and may
(temporarily) find themselves in an {\em inconsistent} state, for example when a conflict arises. To represent such states, three-valued and four-valued structures $\I$ are introduced; they consist of the domain $D$ and, for each $n$-ary predicate $P$ in \voc, of a three- or four-valued relation $P^\I$. This is a function $D^n\to\{\ltrue,\lfalse,\lunkn,\lincon\}$. A structure is {\em two-valued} if the range of its relations is $\{\ltrue,\lfalse\}$, {\em partial} or {\em three-valued} if the range is
$\{\ltrue,\lfalse,\lunkn\}$) and four-valued in general. \change{Thus,
two-valued structures are also three-valued and four-valued. When
unqualified, the term {\em structure} stands for the most general, four-valued case.}

\change{Given a fixed $D$ and $\voc$, an alternative way to represent $\I$ is as a set $S$ of domain literals.} Indeed, there is a one-to-one correspondence between such sets $S$ and $\voc$-structures $\I$ with domain $D$ such that for a domain atom $a$, $a^\I=\lincon$ (inconsistent) if both $a$ and $\lnot a$ are in $S$, $a^\I=\ltrue$ if only $a$ is in $S$, $a^\I=\lfalse$ if only $\lnot a$ is in $S$ and $a^\I=\lunkn$ (unknown) otherwise. \change{Hence, we may treat four-valued structures as sets of domain literals and vice versa. A structure is {\em inconsistent} if at least one domain atom is inconsistent.}

\change{A structure $\I$ of a vocabulary $\voc$ can be naturally viewed as a structure of a larger vocabulary $\voc'\supset\voc$, namely by setting $a^\I=\lunkn$ for any domain atom of a predicate in $\voc'\setminus\voc$.}

\change{For a set   $\sigma$ of predicate symbols,  we use $\restr{\I}{\sigma}$ to
denote the restriction of $\I$ to the symbols of $\sigma$.}  For a set $S$ of
domain atoms, we use $\restr{\I}{S}$ to denote the restriction of
$\I$ to $S$: $a^{\restr{\I}{S}}=a^\I$ if $a\in S$ and
$a^{\restr{\I}{S}}=\lunkn$ otherwise.  We call $\I$ a two-valued
structure of $S$ if $\I$ is two-valued on domain atoms of $S$ and
unknown otherwise.

\change{The inverse $v^{-1}$ of a truth value $v$ is defined as follows:
$\ltrue^{-1} = \lfalse$, $\lfalse^{-1} = \ltrue$, $\lunkn^{-1}
=\lunkn$ and $\lincon^{-1} =\lincon$.}  The \emph{truth} order $>_t$ on
truth values is defined by $\ltrue >_t \lunkn >_t \lfalse$ and
$\ltrue>_t \lincon >_t \lfalse$.  The \emph{precision} order $>_p$ is
defined by $\lincon >_p \ltrue >_p \lunkn$ and $\lincon >_p \lfalse
>_p \lunkn$. Both orders are pointwise extended to arbitrary
\voc-structures. We say that \I' is an \emph{expansion} of \I if
$\I'\geq_p \I$, that is if for each domain atom $a$, $a^{\I'} \geq_p
a^\I$.  Viewing structures as sets of domain literals, this
corresponds to $\I' \supseteq \I$.

We assume familiarity with the syntax of (function-free) \FO. 
\change{To facilitate the reasoning with partially grounded formulas,
  we deviate from standard \FO and quantify  over explicitly
specified subsets of the domain $D$. This is denoted as $\exists
\typed{x}{D'} \f$ and $\forall \typed{x}{D'} \f$, with $D'\subseteq
D$.}
We sometimes abbreviate $\exists x_1\in D_1: \ldots \exists x_n \in D_n: \f$ as $\exists \xxx\in\DDD: \f$, and similarly for $\forall$. Given a formula \f, $\f[\xxx]$ indicates that $\xxx$ are the free variables of \f. Substitution of a variable $x$ in formula \f by a term $t$ is denoted by $\f[x\subs t]$.  A \emph{ground formula} (in domain $D$) is a formula without variables (hence without quantifiers). Similar properties and notations are used for \emph{rules} (introduced below).

We denote by $\vocf{T}$ the set of all predicate symbols that occur in theory $T$. For a structure $\I$, $\vocf{\I}$ is the set of symbols interpreted by $\I$. Unless specified otherwise, theories and structures range over the vocabulary \voc. 

The language \foid extends FO with (inductive) \emph{definitions}. A
theory in \foid is  a (finite) set of sentences and definitions. A
definition \D is a (finite) set of rules of the form $\forall \xxx\in\DDD:
P(x_1,\dots,x_n) \lrule \f$, with $P$ a predicate symbol and \f an FO
formula. The atom $P(\xxx)$ is referred to as the \emph{head} of the
rule and \f as the \emph{body}.  Given a rule $r$, we let $\head{r}$
and $\body{r}$ denote respectively the head and the body of $r$.
Given a definition \D, a domain atom $P\bracketddd$ is {\em defined}
by $\D$ if there exists a rule $\forall \xxx \in \DDD: P(\xxx)\lrule
\f$ in $\D$ such that $\ddd\in \DDD$. Otherwise $P\bracketddd$ is
\emph{open} in \D. A domain literal $\lnot P\bracketddd$ is defined by
\D if $P\bracketddd$ is defined by \D. The sets of defined and open
domain atoms of \D are denoted as $\defined{\D}$ and $\open{\D}$,
respectively.

Without loss of generality, we assume that in any definition a domain
atom is defined by at most one rule. \change{Technically, this means that rules $\forall \xxx\in\DDD_1: P(\xxx) \lrule \f_1$, $\forall
\xxx\in\DDD_2: P(\xxx) \lrule \f_2$ are pairwise disjunct, that is
$\DDD_1\cap\DDD_2=\emptyset$. Rules can always be made disjunct by
transforming them in $\forall \xxx\in\DDD_1\cap\DDD_2: P(\xxx)
\lrule \f_1 \lor \f_2$, $\forall \xxx\in\DDD_1\setminus\DDD_2: P(\xxx)
\lrule \f_1$, $\forall \xxx\in\DDD_2\setminus\DDD_1: P(\xxx) \lrule \f_2$.}

\subsubsection{Model Semantics}
\change{The semantics of \foid is a two-valued model semantics. Nevertheless, we
introduce concepts of three- and four-valued semantics which are
useful in defining the semantics of definitions and in 
formalizing lazy grounding. We use the standard four-valued truth
assignment function, defined by structural induction for pairs of \FO
domain formulas $\f$ and structures $\I$ that interpret
$\f$:}
\begin{itemize}
\item $P\bracketddd^\I = P^\I(\ddd^\I)$,
\item $(\psi\land\phi)^\I=min_{<_t}(\psi^\I,\phi^\I)$,
\item $(\psi\lor\phi)^\I=max_{<_t}(\psi^\I,\phi^\I)$,
\item $(\neg\psi)^\I=(\psi^\I)^{-1}$,
\item $(\exists x \in D: \psi)^\I=max_{<_t}(\{\psi[x\subs d]^\I \mid d\in D\})$,
\item $(\forall x \in D: \psi)^\I=min_{<_t}(\{\psi[x\subs d]^\I \mid d\in D\})$.
\end{itemize}
The assignment function is monotonic in the precision order: if $\I\leq_p \I'$, then $\f^\I\leq_p\f^{\I'}$.  Hence, if a formula is true in a partial structure, it is true in all two-valued expansions of it. \change{Also, if $\I$ is two-valued (respectively three-valued, four-valued) then $\f^\I$ is two-valued (respectively three-valued, four-valued).}

A structure \I is a \emph{model} of / \emph{satisfies} a sentence \f (notation $\I \models \f$) if \I is two-valued and $\evali{\f}=\ltrue$. The satisfaction relation can be defined for definitions as well. The semantics of definitions is based on the parametrized well-founded semantics, an extension of the well-founded semantics of logic programs informally described first in the work of~\citeA{VanGelder93}, and formally defined for \foid's definitions by~\citeA{Denecker:CL2000}. This semantics formalizes the informal semantics of rule sets as (inductive) definitions~\cite{Denecker98,tocl/DeneckerBM01,KR/DeneckerV14}.  A structure \I is a \emph{model} of / \emph{satisfies} a definition \D (notation $\I \models \D$) if \I is two-valued and is the well-founded model ---denoted as $wf_\D(\restr{\I}{\open{\D}})$--- of \D in the structure \restr{\I}{\open{\D}}~\cite{tocl/DeneckerT08}. In case $wf_\D(\restr{\I}{\open{\D}})$ is not two-valued, $\D$ has no model expanding $\restr{\I}{\open{\D}}$. A structure \I satisfies a theory \theory if \I is two-valued and \I is a model of all sentences and definitions in \theory. In the next subsection, we present a formalization of the well-founded semantics using the notion of {\em justification}.

\change{According to \foid's methodology, (formal) definitions are used to
express informal definitions. In the work of~\citeA{KR/DeneckerV14}, it was shown
that \foid definitions offer a uniform representation of the most
important types of informal definitions and that expressing informal
definitions leads to rule sets that are {\em total}. Formally, a
definition \D is called \emph{total} if the well-founded model of \D
in each two-valued structure \I of $open(\D)$ is
two-valued~\cite{tocl/DeneckerT08}. In general, totality is undecidable; however broad,
syntactically defined classes of definitions have been proven to be
total~\cite<e.g., non-recursive, positive, stratified and locally
stratified definitions, see>{tocl/DeneckerT08}. Inspection of current
\foid applications shows that in practice, non-total definitions occur
rarely and almost always contain a modeling error. Also, in most
cases totality can be established through a simple syntactic
check. Totality can be usefully
exploited during computation. The lazy grounding techniques introduced
below exploit totality and should be applied only to total
definitions. This restriction matches with \foid's design and methodology and, in practice,  this does not impose a strong limitation. In case the input theory does contain definitions that are not known to be total, all is not lost: those definitions can be grounded completely up-front, in which case lazy grounding can be applied safely to the remaining sentences and total definitions in the input.}

\paragraph{Equivalence.} 
% Two theories $T$ and $T'$, possibly over different vocabularies, are \emph{$\Sigma$-equivalent} if each model of $T$ restricted to $\Sigma$ can be expanded to a model of $T'$ and vice versa. 
Two theories $T$ and $T'$, which can be over different vocabularies, are \emph{$\Sigma$-equivalent} if each model of $T$ restricted to $\Sigma$ can be expanded to a model of $T'$ and vice versa.
 Two theories $T$ and $T'$ are \emph{strongly $\Sigma$-equivalent} if the above expansions are also unique.  By extension, (strong) $\Sigma$-equivalence \emph{in a structure \I} is defined similarly: if each model of $T$ expanding $\I$ can be expanded to a model of $T'$ expanding \I and vice versa; to obtain strong equivalence, these expansions have to be unique. From a theory $T$, we often derive a \emph{strongly} $\vocf{T}$-equivalent theory $T'$ in a given structure $\I$.  Such transformations preserve satisfiability and number of models \emph{and} each model of $T'$ can be directly mapped to a model of $T$ by projection on $\vocf{T}$.

\paragraph{Canonical theories.} To simplify the presentation, the lazy
grounding techniques are presented here for theories of the form
$\{\pt, \D\}$, with $\pt$ a propositional symbol, and $\D$ a single
definition with function-free rules. This is without loss of
generality. First, as mentioned above, standard
techniques~\cite{Enderton01} allow one to make a theory function-free.
Second, multiple definitions can always be combined into one as
described by~\citeA{tocl/DeneckerT08} and~\citeA{jelia/MarienGD04}. \change{This is
 achieved by renaming defined predicates in some of the definitions,
 merging all rules into one set and adding equivalence constraints
 between predicates and their renamings.  % \change{Third,
%   $\{\pt, \D\}$ is strongly $\vocf{\theory}$-equivalent with 
% $\{ \pt, \D \cup \{pt \lrule \theory\}\}$ with \pt a new propositional symbol.}
Third, the  theory $\theory=
  \{\varphi_1,\dots,\varphi_n, \D\}$ resulting from the previous step can be translated to the  strongly $\vocf{\theory}$-equivalent theory 
$\{ \pt, \D \cup \{\pt \lrule \varphi_1\land\dots\land\varphi_n\}\}$ with \pt a new propositional symbol. This transformation results in a ground set of sentences and a definition consisting of a set of (ground and non-ground) rules, so lazy grounding has only to cope with non-ground rules.
Furthermore, we assume that rule bodies are in negation normal form
(negation only occurs in front of atoms) and that, for each defined
domain atom $P\bracketddd$, there is a unique rule $\forall
\xxx\in\DDD: P(\xxx)\lrule \f \in \D$ such that $\ddd\in\DDD$ .  }

The methods proposed below can be extended to full \foid with functions, and such
extended methods have been implemented in our system. However, this introduces a
number of rather irrelevant technicalities which we want to avoid
here.

\subsubsection{Justifications}\label{sec:justifications}

\newcommand{\jgraph}{\ensuremath{J}\xspace}
\newcommand{\justification}{{justification}\xspace}

\change{We assume the presence of a domain $D$ and a canonical theory  $\theory=\{\varphi,\D\}$ as explained above. Recall, structures with domain $D$ correspond one-to-one to sets of domain literals. }

\begin{definition}[Direct justification]
A \emph{direct justification} for a defined domain literal $P\bracketddd$ (respectively $\neg P\bracketddd$) is a consistent \change{non-empty} set $S$ of domain literals such that, for the rule $ \forall \typed{\xxx}{\DDD} P(\xxx) \lrule \f$ of $\D$ such that $\ddd \in \DDD$, it holds that $\f[\xxx\subs\ddd]^S = \ltrue$ (respectively $\f[\xxx\subs\ddd]^S = \lfalse$).
\end{definition}

Any consistent superset $S'$ of a direct justification $S$ of
$P\bracketddd$ is a direct justification as well. Indeed, a body $\f[\xxx\subs\ddd]$ true
in $S$ is true in the more precise $S'$. \change{Also, a direct justification
$S$ is not empty by definition; if $\f$ is true in every structure,
then a minimal direct justification is $\{\true\}$.}  

\newcommand{\ar}{\rightarrow}
\begin{example}\label{ex:justif1}
Consider a domain $D=\{d_1, \ldots, d_n\}$ and the definition \D
\[\left\{\begin{array}{ll}
\forall x\in D: P(x) &\lrule Q(x) \lor  R(x) \\
\forall x\in D: Q(x) &\lrule P(x)
\end{array}\right\}\]
A direct justification for $Q(d_i)$ is $\{P(d_i)\}$ and for $\neg Q(d_i)$ is $\{\neg P(d_i)\}$. Both domain literals have many other direct justifications, but those  are the unique minimal ones under the subset relation. Minimal direct justifications for $P(d_i)$ are both $\{Q(d_i)\}$ and $\{R(d_i)\}$ while the only minimal direct justification for $\lnot P(d_i)$ is $\{\lnot Q(d_i),\lnot R(d_i)\}$. Atoms $R(d_i)$ are open and have no direct justification.
\end{example}

\change{A (directed) graph $G$ is a pair $\tuple{V,E}$ of a set $V$ of
  nodes and a set $E$ of directed edges, i.e., ordered pairs
  $(v_i,v_j)$ of nodes. For any node $v\in V$, we denote by $G(v)$ the
  set of children of $v$, i.e., $G(v) = \{ w \mid (v,w)\in E\}$.}

\begin{definition}[Justification]
\change{A \emph{justification} over a definition \D is a  graph $\jgraph$ over the set of domain literals of \D such that for each domain literal $l$,
$\jgraph(l)$ is either empty or a  direct justification of $l$.}
\end{definition} 
Thus, a justification is a graph that encodes for every defined domain
literal none or one direct justification. 
In the sequel we say that
$\jgraph$ is \emph{defined in $l$} if $\jgraph(l)\neq\emptyset$. 
% In this case, $l$ is a defined literal. 
A \justification is denoted as a set of pairs $l
\rightarrow S$, 
with $S$ a direct justification of $l$.

\begin{definition}[Justification subgraph]
Let \jgraph be a \justification over \D.  The \justification \emph{for a literal} $l$ is the subgraph $\jgraph_l$ of nodes and edges of $\jgraph$ reachable from $l$. The \justification \emph{for a set of literals} $L$ is the subgraph $\jgraph_L$ of nodes and edges of $\jgraph$ reachable from any $l\in L$.

A \justification \jgraph over \D is \emph{total for} $l$  if $\jgraph$  is defined in each literal that is reachable from $l$ and  defined in \D; it is \emph{total for a set of literals} $L$ if it is total for each literal in $L$. A \justification \jgraph is \emph{consistent with} a structure \I if \I is consistent and none of the literals for which \jgraph is defined is false in \I.
\end{definition}
\change{If \jgraph is total for $l$, then the leaves of $\jgraph_l$ are open domain literals.}
 
\begin{definition}
\change{A path in a \justification \jgraph is a sequence $l_0\ra l_1 \ra \dots$ such that, if $l_i\ra l_{i+1}$, then there is an edge from $l_i$ to $l_{i+1}$ in \jgraph. 
A path is \emph{positive} if it consists of only positive literals; it is \emph{negative} if it consists of only negative literals; it is \emph{mixed} otherwise. 
A \emph{cycle} in a \justification \jgraph is a set of domain literals
on a path in \jgraph %which  
that starts and ends in the same domain literal. A cycle is positive (respectively, negative) if all domain literals are positive literals (respectively, negative literals); otherwise the cycle is mixed.}
\end{definition}
\change{An infinite path may be cyclic or not. If $D$ is finite, every infinite path is cyclic.}

\change{Intuitively, a justification $\jgraph$ containing a domain literal $l$ provides an argument for the  truth of $l$. The strength of  this argument depends  on the truth of the leaves  and on the infinite paths and cycles in $\jgraph_l$.  If all leaves are true and every infinite path is negative, $\jgraph_l$ provides the argument that $l$ is true. If a leaf is false or unknown, or $\jgraph_l$ contains a positive or mixed loop, the argument for $l$ is weak. Notice that other justifications for $l$ may still argue $l$'s truth. }

\begin{definition}[Justifies]\label{def:justifies}
\change{We say that a defined literal $l$ is \emph{well-founded} in the justification $\jgraph$  if every infinite path in $\jgraph_l$ is negative. Otherwise $l$ is \emph{unfounded} in \jgraph.} 

A \justification $\jgraph$ over \D \emph{justifies} a set of literals $L$ defined in \D (the set $L$ of literals \emph{has a justification} $\jgraph$) if (\tbf{i}) $\jgraph_L$ is total for $L$; (\tbf{ii}) \change{each literal of $L$ is well-founded in $\jgraph$;} (\tbf{iii}) the set of literals in $\jgraph_L$ is consistent.
\end{definition} 

\begin{example}\label{ex:justif2}
\begin{figure}[!htp]
\begin{minipage}{.24\textwidth}
\centering
\begin{tikzpicture}[->,>=stealth',shorten >=1pt, node distance=2cm]
    \node [circ, anchor=west] (pd1) {$P(d)$};
    \node [circ, right=of pd1.west, anchor=west] (qd1) {$Q(d)$};

    \path [line] (pd1) edge[->,bend left=-50] (qd1);
    \path [line] (qd1) edge[->,bend left=-50] (pd1);
    
    \node [rectangle, below right=1.7cm and 0.1cm of pd1] (i) {(i)};
    \node [circ, right=of qd1.west, anchor=west] (pd2) {$P(d)$};
    \node [circ, right=of pd2.west, anchor=west] (rd2) {$R(d)$};
    \node [circ, below=of rd2.west, anchor=west] (qd2) {$Q(d)$};

    \path [line] (pd2) edge[->,bend left=0] (rd2);
    \path [line] (qd2) edge[->,bend left=0] (pd2);
    
    \node [rectangle, below right=1.7cm and 0cm of pd2] (ii) {(ii)};
    \node [circ, right=of rd2.west, anchor=west] (pd3) {$P(d)$};
    \node [circ, right=of pd3.west, anchor=west] (qd3) {$Q(d)$};
    \path [line] (qd3) edge[->,bend left=-50] (pd3);
    
    \node [rectangle, below right=1.7cm and 0cm of pd3] (iii) {(iii)};
    \node [circ, right=of qd3.west, anchor=west] (npd4) {$\lnot P(d)$};
    \node [circ, right=of npd4.west, anchor=west] (nqd4) {$\lnot Q(d)$};
    \node [circ, below=of nqd4.west, anchor=west] (nrd4) {$\lnot R(d)$};

    \path [line] (npd4) edge[->,bend left=0] (nqd4);
    \path [line] (npd4) edge[->,bend left=0] (nrd4);
    \path [line] (nqd4) edge[->,bend left=-50] (npd4);
    
    \node [rectangle, below right=1.7cm and 0cm of npd4] (iv) {(iv)};
\end{tikzpicture}
\end{minipage}
\caption{Justifications for definition \D in Example~\ref{ex:justif1}, with $d\in D$.}
\label{fig:justex}
\end{figure}
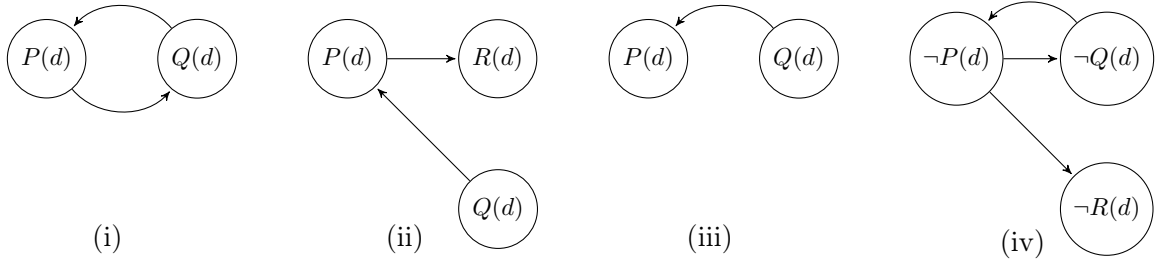
In Figure~\ref{fig:justex}, we show a few possible justifications (ordered (i)-(iv) from left to right) over definition \D in Example~\ref{ex:justif1} that contain the defined domain atoms $P(d)$ and $Q(d)$ ($d\in D$). Justification (ii) justifies $P(d)$ and $Q(d)$ and (iv) justifies $\lnot P(d)$ and $\lnot Q(d)$; (iii), however, is not total for $P(d)$ nor $Q(d)$ and (i) has a positive cycle and is unfounded for both $P(d)$ and $Q(d)$.
\end{example}

The relationship between justifications and the well-founded semantics has been investigated in different publications~\cite{DeneckerS93,Denecker92e,phd/Marien09}. Below we recall the results on which this paper relies. The first result states that if \jgraph justifies all literals in $L$, then any model $\I$ of $\D$ in which the leaves of $\jgraph_L$ are true, satisfies all literals in $L$ and in $\jgraph_L$.  

\begin{proposition}\label{prop:just}
If \jgraph is a \justification over \D that justifies a set of domain literals $L$ then all literals in  $\jgraph_L$ are true in every model of $\D$ in which the (open) leaves of $\jgraph_L$ are true. 
\end{proposition}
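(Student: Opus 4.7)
The plan is to prove the proposition by leveraging the alternating fixpoint characterization of the well-founded semantics, combined with a rank argument on $\jgraph_L$. Fix a model $\I$ of $\D$ whose open leaves of $\jgraph_L$ are true. Since $\I \models \D$, we have $\I = wf_\D(\restr{\I}{\open{\D}})$, so $\I$ is built up as the least fixed point (in the precision order) of alternating applications of a monotone operator that jointly derives positive facts and falsifies atoms via unfounded sets. The goal is to show that every positive literal in $\jgraph_L$ is derived, and every atom whose negation lies in $\jgraph_L$ belongs to an unfounded set witnessed by $\jgraph_L$ itself.

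First I would observe that from any positive literal $l \in \jgraph_L$, every path in $\jgraph_L$ must be finite: an infinite path starting at a positive node would be either all positive or mixed, both forbidden by well-foundedness of $L$. This enables a well-founded rank function on the positive literals of $\jgraph_L$, defined as the supremum of the ranks of the positive literals reachable in $\jgraph(l)$ plus one. I would then prove by transfinite induction on this rank that each positive $l = P(\ddd) \in \jgraph_L$ is true in $\I$: since $\jgraph(l)$ is a direct justification with $\f[\xxx \subs \ddd]^{\jgraph(l)} = \ltrue$, and, by induction, all positive literals in $\jgraph(l)$ are true in $\I$, it suffices to know that all negative literals in $\jgraph(l)$ are true in $\I$. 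The latter is dispatched by a separate unfounded-set argument.

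For the negative literals, let $U = \{P(\ddd) \mid \neg P(\ddd) \in \jgraph_L\}$. I would verify that $U$ is unfounded with respect to the partial structure $\I^*$ in which the positive literals of $\jgraph_L$ and the open leaves are set to true: for each atom $a \in U$, the body of its defining rule is made false by $\jgraph(\neg a)$ (by the direct-justification condition), and each literal in $\jgraph(\neg a)$ is either an open leaf (true by hypothesis), a positive literal (handled by the rank argument), or a negative literal $\neg b$ with $b \in U$. By monotonicity of the body evaluation and the standard unfounded-set condition, the well-founded model must set every atom in $U$ to false, so every negative literal in $\jgraph_L$ is true in $\I$. Combining both arguments closes the induction and yields the claim.

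The main obstacle is the circular dependency between the two induction steps: the positive case needs truth of the relevant negative literals in the direct justifications, while the negative case uses truth of positive literals in $\jgraph_L$. The cleanest way to break this circularity is to align the argument with the alternating fixpoint semantics: positive literals are derived monotonically relative to a conservative estimate of false atoms, which is precisely provided by $U$, while $U$ being unfounded depends only on the bodies being falsified after supplementing the open leaves with \emph{any} over-approximation of truth, including the set of positive literals of $\jgraph_L$. This matches the structure of the proof given by Denecker and Schreye and of Mari\"en's thesis, on which this result can be built; if desired, one can avoid re-proving the equivalence and appeal directly to those established characterizations.
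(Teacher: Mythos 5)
First, a point of comparison: the paper does not actually prove this proposition. It presents it as a recalled result, citing the justification-semantics papers of Denecker and De Schreye (1992, 1993) and Mari\"en's thesis (2009), and only remarks that it ``follows from the fact that if a justification $\jgraph$ justifies $L$ and all leaves of $\jgraph$ are true in $\I_{open}$, then all literals of $L$ are true in $wf_{\D}(\I_{open})$.'' Your reconstruction --- an ordinal rank induction on the positive literals of $\jgraph_L$ (legitimate, since well-foundedness forbids infinite positive or mixed paths from any positive node of $\jgraph_L$) combined with an unfounded-set argument for the set $U$ of atoms whose negations occur in $\jgraph_L$ --- is exactly the architecture of the proofs in those cited sources, so in spirit you are supplying what the paper delegates to its references.

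The gap is in your handling of the circularity, which you correctly identify but do not actually resolve. The claim that ``$U$ being unfounded depends only on the bodies being falsified after supplementing the open leaves with \emph{any} over-approximation of truth'' is not sound as stated: the four-valued truth assignment is monotone in the \emph{precision} order, not antitone in the set of true atoms, so from $\f^{S}=\lfalse$ with $S=\jgraph(\lnot a)$ you may conclude $\f^{J}=\lfalse$ only for structures $J\geq_p S$, i.e., only once every literal of $S$ --- including the positive defined literals it draws from $\jgraph_L$ --- is already known to hold in the structure against which unfoundedness is checked. Establishing those positive literals is precisely the other half of the proof, so the two inductions cannot be run one after the other, and ``aligning with the alternating fixpoint'' is named but not carried out. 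The standard repair is a single interleaved transfinite construction starting from $\restr{\I}{\open{\D}}$: at each stage, add the positive literals of $\jgraph_L$ whose direct justifications are already contained in the current partial structure, and falsify, as one unfounded set relative to that structure, those $a\in U$ whose direct justifications consist only of already-established literals and of literals $\lnot b$ with $b\in U$; the well-foundedness condition guarantees this process exhausts $\jgraph_L$, and each step is sound for the well-founded semantics, which yields the statement for $wf_{\D}(\I_{open})$ and hence for every model $\I$ of $\D$ with true leaves. Alternatively --- and this is what the paper itself does --- one can simply appeal to the cited equivalence results rather than reprove them.
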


For an interpretation $\I_{open}$ that is two-valued for $open(\D)$, the well-founded model $wf_{\D}(\I_{open})$ can be computed in time polynomial in the size of the domain, as shown by~\citeA{jacm/ChenW96}. In general, $wf_{\D}(\I_{open})$ is a three-valued structure. If  $wf_{\D}(\I_{open})$ is two-valued, then it is the unique model of $\D$ that expands $\I_{open}$; otherwise, \D has no model that expands $\I_{open}$. The above proposition follows from the fact that if a justification \jgraph justifies $L$ and all leaves of \jgraph are true in $\I_{open}$, then all literals of $L$ are true in $wf_{\D}(\I_{open})$.

\begin{example}[Continued from Example~\ref{ex:justif2}]
\change{Justification (ii) justifies $L=\{Q(d)\}$ and has a unique open leaf $R(d)$. For any structure $\I_{open}$ interpreting the open predicates of $\D$, if $R(d)$ is true in $\I_{open}$, then $Q(d)$ is true in $wf_{\D}(\I_{open})$. In particular, in any model of \D in which $R(d)$ is true, $Q(d)$ is true.}
\end{example}

\begin{proposition}\label{prop:justbis}
\change{If $\I$ is a model of \D, then a \justification \jgraph over \D exists that consists of literals true in \I, is defined for all defined domain literals true in $\I$ and justifies each of them.}
\end{proposition}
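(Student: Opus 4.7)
The plan is to construct $\jgraph$ by exploiting the stage structure of the inflationary iteration that produces the well-founded model $wf_{\D}$. Because $\I$ is a model of $\D$, $\I = wf_{\D}(\restr{\I}{\open{\D}})$, and the iteration assigns to each defined domain literal $l$ true in $\I$ an ordinal rank $r(l)$ recording the stage at which the truth of $l$ is first established. For a true positive literal $P\bracketddd$ at stage $\alpha$, the body becomes $\ltrue$ using only literals whose truth has been settled strictly before $\alpha$; for a true negative literal $\lnot P\bracketddd$ at stage $\alpha$, the atom $P\bracketddd$ joins an unfounded set witnessed at stage $\alpha$, so the body evaluates to $\lfalse$ in the partial interpretation of that stage, witnessed by literals of rank at most $\alpha$, with any such witness of rank exactly $\alpha$ being necessarily negative (another member of the same unfounded set).

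For every defined literal $l$ true in $\I$ I would then pick a direct justification $S_l$ of $l$ consisting entirely of literals true in $\I$ and respecting the rank order, in the sense that every positive literal occurring in $S_l$ has rank strictly less than $r(l)$. Existence of such an $S_l$ follows directly from the stage-wise description above together with $\I$ being two-valued. Setting $\jgraph(l) := S_l$ for every defined literal $l$ true in $\I$ and $\jgraph(l') := \emptyset$ otherwise produces a graph whose nodes are all literals true in $\I$, so $\jgraph$ is consistent; which is defined, and hence $\jgraph_l$ total, on every defined true literal; and in which each $S_l$ is, by construction, a legitimate direct justification of $l$.

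For the well-foundedness clause, observe that along every edge of $\jgraph$ the rank of the target is at most that of the source, and strictly smaller as soon as the source or the target is a positive literal. Hence no cycle in $\jgraph_l$ can contain a positive literal, so every cycle reachable from $l$ is purely negative; combined with the fact that any infinite path in $\jgraph_l$ must either reach an open leaf (and terminate) or eventually enter such a cycle, this delivers the ``only negative loops'' reading of the well-foundedness condition articulated around Definition~\ref{def:justifies}. The principal obstacle is the second step: extracting from the alternating-fixpoint construction of $wf_{\D}$ direct justifications that are simultaneously compatible with $\I$ and respect the rank order on positive literals. This is exactly where one invokes the correspondence between justifications and the well-founded semantics referenced earlier in the paper; once the rank-respecting choice is in hand, the remainder of the verification is routine bookkeeping.
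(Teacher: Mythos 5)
First, note that the paper does not actually prove Proposition~\ref{prop:justbis}: it is recalled, together with Proposition~\ref{prop:just}, as a known result on the correspondence between justifications and the well-founded semantics, with citations to Denecker and De Schreye and to Mari\"en's thesis. So there is no in-paper proof to match against. Your construction --- assign each true defined literal the ordinal stage at which the well-founded computation settles it, pick for each a direct justification whose witnesses come from earlier stages (allowing same-stage witnesses only when both ends are negative, via the unfounded-set mechanism), and conclude that ranks are non-increasing along edges and strictly decreasing whenever a positive literal is involved --- is exactly the classical argument from those references, and its core is sound: it immediately rules out positive and mixed cycles, and consistency and totality follow because every node of the constructed graph is true in the two-valued $\I$.

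Two points need repair. The genuine one is your treatment of infinite paths: you argue that an infinite path "must either reach an open leaf (and terminate) or eventually enter such a cycle," but the paper explicitly works with a possibly infinite domain $D$ and itself remarks that an infinite path need not be cyclic when $D$ is infinite, so this dichotomy fails. The fix is already latent in your rank machinery: a non-increasing sequence of ordinals is eventually constant, so beyond some point no edge decreases the rank, hence beyond that point every literal on the path is negative. (This establishes "eventually negative" rather than the letter of Definition~\ref{def:justifies}, which asks that an infinite path consist \emph{only} of negative literals; your loop-based reading is the one the paper itself uses later, e.g.\ in the gloss "negative cycles are allowed," so this mismatch is attributable to the paper's definition, not to you.) The second, lesser point is that your closing paragraph defers the key extraction step to "the correspondence between justifications and the well-founded semantics," which is precisely the statement being proved; since your first two paragraphs already carry out that extraction from the stage-wise construction, you should delete the deferral rather than lean on it.
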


\begin{corollary}\label{cor:voila}
In case \D is total, if a \justification \jgraph over \D justifies a set of domain literals $L$, then every two-valued \open{\D}-structure consistent with $\jgraph_L$ can be extended in a unique way to a model of \D that satisfies all literals of $L$. 
\end{corollary}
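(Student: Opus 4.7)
The plan is to combine the totality hypothesis with Proposition~\ref{prop:just} to obtain both existence of an extending model satisfying $L$ and its uniqueness. Fix a two-valued $\open{\D}$-structure $\I_{open}$ that is consistent with $\jgraph_L$. I would first argue uniqueness: since $\D$ is total, the well-founded model $wf_{\D}(\I_{open})$ is two-valued, and by the property of the parametrised well-founded semantics recalled after Proposition~\ref{prop:just}, it is the \emph{unique} model of $\D$ that expands $\I_{open}$. So it suffices to show that this particular model satisfies every literal of $L$.

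For existence of such a model satisfying $L$, the natural candidate is $wf_{\D}(\I_{open})$ itself, and I would apply Proposition~\ref{prop:just} directly. That proposition requires that the (open) leaves of $\jgraph_L$ are true in the model. Because $\jgraph$ justifies $L$, $\jgraph_L$ is total for $L$, so by the remark following Definition~5 its leaves are open domain literals. Consistency of $\I_{open}$ with $\jgraph_L$ means no literal occurring in $\jgraph_L$ is false in $\I_{open}$; since $\I_{open}$ is two-valued on $\open{\D}$, this forces each open leaf to be true in $\I_{open}$, hence true in $wf_{\D}(\I_{open})$. Proposition~\ref{prop:just} then yields that every literal appearing in $\jgraph_L$, and in particular every literal of $L$, is true in $wf_{\D}(\I_{open})$.

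Combining the two parts, $wf_{\D}(\I_{open})$ is the unique model of $\D$ expanding $\I_{open}$ and it satisfies $L$. Any other model of $\D$ expanding $\I_{open}$ would contradict the uniqueness granted by totality, so this is the unique extension satisfying $L$ as claimed.

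The only subtle step is reconciling the phrase ``consistent with $\jgraph_L$'' with the hypothesis of Proposition~\ref{prop:just}: one must be careful that ``not false'' on a two-valued $\open{\D}$-structure collapses to ``true'' for the open leaves of $\jgraph_L$, so that Proposition~\ref{prop:just} applies. Once this is spelled out, the argument is essentially a direct invocation of totality and of Proposition~\ref{prop:just}.
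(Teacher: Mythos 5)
Your argument is correct and is essentially the derivation the paper intends: the corollary is stated without an explicit proof, but the two ingredients you combine --- totality making $wf_\D(\I_{open})$ two-valued and hence the \emph{unique} model of $\D$ expanding $\I_{open}$, and Proposition~\ref{prop:just} applied to that model --- are exactly what the surrounding text supplies. Your bridging observation, that on a structure two-valued on $\open{\D}$ the consistency requirement ``not false'' collapses to ``true'' for the open leaves of $\jgraph_L$, is precisely the step needed to make Proposition~\ref{prop:just} applicable, so the proof is complete as written.
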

Hence, for a canonical theory $\{\pt, \D\}$ (recall, \D is total), the theory is satisfiable iff a justification \jgraph exists that justifies \pt.

\subsection{Generating Models}
\emph{Model generation} is the inference task that takes as input a theory \theory and returns as output a model of \theory. \MX was defined by~\shortciteA{MitchellTHM06} as the inference task that takes as input a theory \theory over vocabulary \voc and a two-valued structure \I over a subvocabulary of $\voc$, and returns an expansion \M of \I that satisfies \theory. Here, it will be the more general inference problem as defined by~\citeA{lash08/WittocxMD08} that takes as input a (potentially partial) structure \I over \voc, and returns an expansion \M of \I that satisfies \theory.

As already mentioned, the state-of-the-art approach to model expansion in \foid is (similar to \ASP) grounding \theory in the context of \I and afterwards applying search to the resulting ground theory. The latter can, e.g., be accomplished by the SAT(ID) search algorithm~\shortcite{sat/MarienWDB08}.

Below, we present the grounding algorithm that is the basis of the lazy \MX algorithm.
We assume familiarity with the basic \CDCL algorithm of SAT solvers~\cite{faia/SilvaLM09}.

\subsubsection{Grounding}
For an overview of intelligent grounding techniques in \foid, we refer the reader to the work of~\citeA{acm/wittocx} and of~\citeA{jair/WittocxMD10}. Below we present the basic principle.

A grounder takes as input a theory \theory over vocabulary \voc, a partial structure \I with domain $D$, interpreting at least \true and \false, and returns a ground theory $\theory'$ that is strongly $\voc$-equivalent with \theory in \I.  Theory $\theory'$ is then called a \emph{grounding} of \theory given \I. Recall that we assume that  \theory is a canonical theory of the form $\{\pt, \D\}$. 

One way to compute the  grounding is using a top-down process on the theory,
iteratively applying grounding steps to direct subformulas of the rule or formula at hand. The grounding algorithm may replace subformulas by new predicate symbols as follows. Let $\f[\xxx]$ be a formula in \theory and let \DDD be the domains of \xxx. A \emph{Tseitin transformation} replaces \f by the atom $T_{\f}(\xxx)$, with $T_\f$ a new $|\xxx|$-ary predicate symbol called a \emph{Tseitin} symbol,\footnote{\change{\citeA{Tseitin68eng} introduced such symbols as part of his normal form transformation.}} and extends \D with the rule $\forall \typed{\xxx}{\DDD} T_{\f}(\xxx) \lrule \f$. The new theory is strongly \voc-equivalent to the original one~\shortcite{VennekensMWD07a}.

The procedure \groundone, outlined in Figure \ref{fig:onestepground}, performs one step in the grounding process.  Called with a formula or rule $\f$ in canonical form, the algorithm replaces all direct subformulas with Tseitin symbols and returns a pair consisting of a ground part $G$ (rules or formulas) and a possibly non-ground part $R$ (rules).
If $\f$ is a formula, then $G$ consists of ground formulas. Replacing  $\f$ by the returned ground formulas and extending \D with the returned rules produces a theory that is strongly $\vocf{\theory}$-equivalent to the original. 
% If $\f$ is a rule from \D, it is replaced by both sets of returned rules and again, the new theory is strongly $\vocf{\theory}$-equivalent to the original.
\change{If $\f$ is a rule from \D, $G$ consists of ground rules, and replacing
\f by both sets of returned rules results again in a theory that is
strongly $\vocf{\theory}$-equivalent to the original.}

\begin{algorithm}
\caption{The \groundone algorithm.}
\label{fig:onestepground}
\SetKwFunction{go}{\groundone}
\myproc{\go{formula or rule \f}}{
\Switch{\f}{
\lCase{$[\neg] P\bracketddd$}{\textbf{return} $\langle\{\f\}, \emptyset\rangle$}
\uCase{$P\bracketddd \lrule \psi$}{
	$\langle G,\D\rangle$ := \groundone{}($\psi$)\;
	\Return{$\langle \{ P\bracketddd \lrule \bigwedge_{g \in G}~g \}, \D\rangle$}\;
}
\uCase{$\psi_1 \lor \ldots \lor \psi_n$}{
	\Return{$\langle \{ \bigor_{i \in [1,n]} T_{\psi_i} \}, \{ T_{\psi_i} \lrule \psi_i \mid i \in [1,n] \}\rangle$}\;
}
\uCase{$\psi_1 \land \ldots \land \psi_n$}{
	\Return{$\langle \{ T_{\psi_i} \mid i \in [1,n] \}, \{ T_{\psi_i} \lrule \psi_i \mid i \in [1,n] \}\rangle$}\;
}
\uCase{$\forall \typed{\xxx}{\DDD} P(\xxx) \lrule \psi$}{
	\Return{$\langle \emptyset, \{ P(\xxx)[\xxx\subs \ddd] \lrule \psi[\xxx\subs\ddd] \mid \ddd \in \DDD\}\rangle$}\;
}
\uCase{$\exists \typed{\xxx}{\DDD} \psi[\xxx]$}{
	\Return{$\langle \{ \bigor_{d \in D} T_{\psi[\xxx\subs \ddd]} \}, \{ T_{\psi[\xxx\subs \ddd]} \lrule \psi[\xxx\subs \ddd] \mid \ddd \in \DDD \} \rangle$}\;
}
\Case{$\forall \typed{\xxx}{\DDD} \psi[\xxx]$}{
	\Return{$\langle \{ T_{\psi[\xxx\subs \ddd]} \mid \ddd \in \DDD\}, \{ T_{\psi[\xxx\subs \ddd]} \lrule \psi[\xxx\subs \ddd] \mid \ddd \in \DDD \} \rangle$}\;
}
}
}
\end{algorithm}

Grounding a theory then boils down to applying \groundone on the sentence \pt (which copies \pt to the ground part) and on each rule of the theory and repeatedly applying \groundone on the returned rules $R$ (all returned sentences and rules in $G$ are ground). We use \ground to refer to the algorithm for this overall process.

Various improvements exist, such as returning \true/\false for atoms interpreted in \I and returning \false from conjunctions whenever a false conjunct is encountered (analogously for disjunctions and quantifications).

Also, algorithm \groundone introduces a large number of Tseitin symbols. State-of-the-art grounding algorithms use a number of optimizations to reduce the number of such symbols. As these optimizations are not directly applicable to the techniques presented in this paper, we start from the naive \groundone algorithm. In Section~\ref{sec:optimizations}, we present an optimized version of \groundone that introduces fewer Tseitin symbols and hence results in smaller groundings. 

\section{Lazy Grounding and Lazy Model Expansion}\label{sec:theory}
We use the term \emph{lazy grounding} to refer to the process of partially grounding a theory and the term \emph{lazy model expansion} (lazy MX) for the process that interleaves lazy grounding with model expansion over the grounded part. In Section~\ref{sec:lazy-foid}, we formalize a framework for lazy model expansion of \foid theories;  in Section~\ref{sec:default}, we formalize the instance of this framework that is the basis of our current implementation; in Section~\ref{sec:lazyExample}, we illustrate its operation. % \change{Last, in Section~\ref{sec:asp}, we discuss how to instantiate the framework within \ASP.}

\subsection{Lazy Model Expansion for \foid Theories}\label{sec:lazy-foid}

\change{Given a canonical theory $\theory = \{ \pt, \D\}$  and an input structure \Iin, models expanding \Iin are searched for by interleaving lazy grounding with search on the already grounded part. We first focus on the lazy grounding.}

\change{Apart from the initial step that moves \pt to the grounded part, the input of each step consists of a set of rules still to be grounded, an already grounded theory and a three-valued structure that is an expansion of the initial input structure.}

Each subsequent grounding step can replace non-ground rules by ground
rules and might introduce new rules. Hence, the state of the grounding
includes a set \Dg of ground rules and a set \Dd \change{(the \emph{delayed
  definition})} of (possibly) non-ground rules. The definitions have the property that $\Dg \cup \Dd$ (in what follows abbreviated as \Dgd) is \vocf{\D}-equivalent with the original definition \D and hence, \Dgd is total.
The grounding procedure will guarantee that, at all times, \Dg and \Dd are total.
%Note: totality of a def does not imply totality of its subdefs! 

Given a partial structure \Iin and the rule sets \Dg and \Dd, the
key idea behind lazy model expansion is
(\tbf{i}) to use a search algorithm to search for a model \I of \Dg that is an expansion of \Iin in which \pt is true; 
(\tbf{ii}) to maintain a \justification \jgraph such that the
literals true in \I and defined in \Dd are justified over \Dgd and
that \jgraph is consistent with \I;
(\tbf{iii}) to interleave steps (\tbf{i}) and (\tbf{ii}) and to move parts of \Dd to \Dg when some literal defined in \Dd that needs to be justified cannot be justified.
 
Thus, to control lazy model expansion, it suffices to maintain a state $\tuple{\Dg,\Dd,\jgraph,\I}$ consisting of the grounded rules \Dg, the rules \Dd yet to be grounded, a \justification \jgraph, and a three-valued structure \I. Initially, $\I$ is $\Iin$, $\Dd$ is $\D$, $\Dg=\emptyset$, and \jgraph is the empty graph.

Lazy model expansion searches over the space of \emph{acceptable} states.

\begin{definition}[Acceptable state]\label{def:acc_state} 
%Note: NOT intended to be backtrack-safe!
A tuple $\tuple{\Dg,\Dd,\jgraph,\I}$ of a theory with an atomic sentence \pt, a total definition \D, and an input structure \Iin is an \emph{acceptable} state if
(\tbf{i}) \Dgd, \Dg and \Dd are total definitions and \Dgd is strongly \vocf{\D}-equivalent with \D, 
(\tbf{ii}) no domain atom is defined in both \Dg and \Dd, 
(\tbf{iii}) \jgraph is a \justification over \Dgd,
(\tbf{iv}) \I is an expansion of \Iin,
(\tbf{v}) the set $L$  of literals true in \I and defined in \Dd is justified by \jgraph, and
(\tbf{vi}) $\jgraph_L$, \change{the justification of the literals in $L$,} is consistent with \I.
\end{definition}

\begin{example}\label{ex:cons}
Consider the theory $\{\pt, \D\}$, with \D the definition
\begin{ltheo}
\begin{ldef}
\LRule{\pt}{ T_1 \lor T_2 \lor T_3} \\
\LRule{T_1}{ \forall x \in D: Q(x) } \\
\LRule{T_2}{ \forall x \in D: R(x) } \\
\LRule{T_3}{ \exists x \in D: \lnot Q(x) } \\
\end{ldef}
\end{ltheo}
Let \I be the structure $\{\pt, T_1\}$ (hence, $T_2$ and $T_3$ are
unknown), and \Dg and \Dd the definitions consisting of the first rule
and the remaining rules, respectively.  Furthermore, let \jgraph be $\{T_1 \rightarrow \{Q(d) \mid d \in D\}\}$. The tuple $\tuple{\Dg, \Dd, \jgraph, \I}$ is then an acceptable state. Indeed, $T_1$ is the only literal in \I that
is defined in \Dd and it is justified by \jgraph.
\end{example}

\change{As already said, the lazy model expansion algorithm starts from the initial state $\D_g=\emptyset, \D_d=\D, J=\emptyset, \I=\Iin$, which is acceptable if defined literals are unknown in $\Iin$.  In each  state, it either refines $\I$ by  propagation or choice, or it backjumps. If the resulting state is unacceptable, a repair operation restores acceptability; these steps are described in Section~\ref{sec:default}. } The algorithm tries to compute  an acceptable state in which \pt is justified in $\Dgd$. By Corollary~\ref{cor:voila}, this would entail that a model of \theory exists; it can be computed efficiently through well-founded model computation. In intermediate states, the justification may be non-total for \pt, contain unfounded literals, or be inconsistent.

Note that, in (\tbf{iii}), the \justification must be over \Dgd. Indeed,
assume some literal $l$ is justified over \Dd. Its justification graph can
have a leaf that is defined in \Dg and that depends positively or negatively
on $l$. Then every attempt to extend this justification graph to a total
justification graph that justifies $l$ over \Dgd might fail, e.g., because
of a forbidden cycle. \change{Consider, e.g., the definitions $\Dg = \{ P \lrule Q \}$ and $\Dd=\{ Q \lrule P\}$. In that case, it would not be correct to take $P$ as justification for $Q$ being true, even though it is a valid justification within \Dd. Indeed, no model exists that justifies $Q$ in the full definition \Dgd.}

\begin{proposition}\label{prop:model}
  Let $\tuple{\Dg,\Dd,\jgraph,\I}$ be an acceptable state. \Dgd has a
  well-founded model that expands the literals that are true in $\I$
  and defined in the (delayed) definition \Dd.
\end{proposition}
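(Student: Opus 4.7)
The plan is to exploit Corollary~\ref{cor:voila}, which states that for a total definition, if a justification \jgraph justifies a set of literals $L$, then every two-valued \open{\Dgd}-structure consistent with $\jgraph_L$ extends uniquely to a model of \Dgd satisfying all literals in $L$. The task thus reduces to producing such an open structure, and in particular one that expands (the open part of) \I.

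First I would introduce $L$, the set of literals that are true in \I and defined in \Dd. By condition (i) of Definition~\ref{def:acc_state}, \Dgd is total and by condition (v), \jgraph justifies $L$ over \Dgd. Hence \jgraph is total for $L$, so the leaves of $\jgraph_L$ are open in \Dgd. By condition (vi), no literal of $\jgraph_L$ is false in \I; in particular, no leaf of $\jgraph_L$ is false in \I.

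Next, I would construct a two-valued \open{\Dgd}-structure $\J$ as follows: start from the restriction $\restr{\I}{\open{\Dgd}}$ and, for each open domain atom that is unknown under this restriction, assign it a truth value consistent with $\jgraph_L$ (i.e., if it appears as the atom of a leaf literal of $\jgraph_L$, pick the value that makes that leaf true; otherwise pick an arbitrary value, say \lfalse). Consistency of $\jgraph_L$ with \I guarantees this assignment is well-defined (no open atom is forced both ways). By construction, $\J$ is two-valued, expands $\restr{\I}{\open{\Dgd}}$, and is consistent with the leaves of $\jgraph_L$.

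Finally, since \Dgd is total and \jgraph justifies $L$, Corollary~\ref{cor:voila} yields a unique model \M of \Dgd expanding $\J$ and satisfying every literal of $L$. This \M is the desired well-founded model of \Dgd, and since it expands $\J$ which in turn expands $\restr{\I}{\open{\Dgd}}$, while also containing all literals of $L$ (the literals true in $\I$ and defined in \Dd), it expands all the literals of $\I$ defined in \Dd, as required. The only potential obstacle is verifying that the extension of \I to $\J$ is well-defined—i.e., that no open atom is simultaneously pinned to \ltrue{} and \lfalse—but this is exactly what consistency of $\jgraph_L$ (together with condition (vi)) ensures.
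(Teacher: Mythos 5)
Your argument is correct and follows essentially the same route as the paper: let $L$ be the set of literals true in $\I$ and defined in \Dd, observe that acceptability gives that \jgraph justifies $L$, and invoke Corollary~\ref{cor:voila}. The explicit construction of a two-valued $\open{\Dgd}$-structure is extra detail the paper leaves implicit (and in fact more than the statement requires, since the model need only expand $L$, not $\restr{\I}{\open{\Dgd}}$); the existence of such a structure already follows from the consistency of the literal set of $\jgraph_L$ guaranteed by the definition of \emph{justifies}.
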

\begin{proof}
  Let $L$ be the set of literals true in $\I$ and defined in \Dd.
  As the state is acceptable, \jgraph justifies the literals of
  $L$. Hence, by Corollary~\ref{cor:voila}, there exists a
  well-founded model that expands $L$.
\end{proof}

\begin{example}[Continued from Example~\ref{ex:cons}]
The well-founded evaluation, after assigning \true to the open
literals of \jgraph (i.e., to $\{Q(d)\mid d \in D\}$), derives that
$T_1$ is true. Moreover, because \I is a model of \Dg, \pt is also true
in such a well-founded model. Note that $R$ can be interpreted 
randomly, as no $R$-atoms occur in \I or \jgraph.
\end{example}

The following theorem states when the obtained expansion is also a model of \theory.

\begin{theorem}\label{theo:acceptablemodel}
Let $\tuple{\Dg,\Dd,\jgraph,\I}$ be an acceptable state of a theory
$\theory = (\pt,\D)$ with input structure \Iin such that \pt is
true in \I and $\restr{\I}{\vocf{\Dg}}$ is a model of \Dg.
Then there exists a model \M of \theory that expands $\restr{\I}{\vocf{\Dg}}$.
\end{theorem}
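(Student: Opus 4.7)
My plan is to apply Corollary~\ref{cor:voila} to the combined definition \Dgd and then use the strong \vocf{\D}-equivalence of \Dgd with \D to extract the desired model of \theory. Let $L$ be the set of literals true in \I and defined in \Dd; by acceptability conditions~(\tbf{v}) and~(\tbf{vi}), \jgraph justifies $L$ and $\jgraph_L$ is consistent with \I. Since $\jgraph_L$ is total for $L$ over \Dgd, its leaves lie in $\mathrm{open}(\Dgd)$.

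First I would construct a two-valued structure \J on $\mathrm{open}(\Dgd)$ that (a) assigns the leaves of $\jgraph_L$ the values prescribed by \jgraph, (b) agrees with \I wherever \I is two-valued on $\mathrm{open}(\Dgd)$, and (c) is arbitrary (say, \lfalse) elsewhere. Consistency of $\jgraph_L$ with \I ensures that (a) and (b) do not conflict. Since \Dgd is total by~(\tbf{i}), Corollary~\ref{cor:voila} then produces a unique model \M of \Dgd that expands \J and satisfies every literal of $L$. By the strong \vocf{\D}-equivalence of \Dgd with \D, the restriction $\restr{\M}{\vocf{\D}}$ is a model of \D.

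The heart of the proof is then to show that \M expands $\restr{\I}{\vocf{\Dg}}$, from which \pt being true in \M follows (\pt is defined in \Dg and true in \I). Since $\restr{\I}{\vocf{\Dg}}$ is a two-valued model of \Dg, \I is two-valued on $\vocf{\Dg}$, and I would split $\vocf{\Dg}$ using property~(\tbf{ii}) into its intersections with $\mathrm{open}(\Dgd)$, with $\mathrm{defined}(\Dd)$, and with $\mathrm{defined}(\Dg)$. Agreement on the first is immediate from the construction of \J; agreement on the second follows because every true literal of \I there lies in $L$, and \M satisfies $L$; agreement on the third then reduces to the observation that the defining rules for atoms of $\mathrm{defined}(\Dg)$ are inherited unchanged by \Dgd from \Dg, so once \M and \I coincide on $\mathrm{open}(\Dg) \cap \vocf{\Dg}$, totality of \Dg forces them to coincide on $\mathrm{defined}(\Dg)$ as well. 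Projecting \M to $\vocf{\D}$ via strong equivalence then delivers the desired model of \theory expanding $\restr{\I}{\vocf{\Dg}}$.

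The main obstacle is this last step of the agreement argument: I must rule out that the well-founded evaluation inside \Dgd assigns values to atoms of $\mathrm{defined}(\Dg)$ different from those given by \Dg alone, despite possible recursive references that pass through \Dd. The argument rests on totality of \Dg together with property~(\tbf{ii}), which guarantees that the rules for atoms of $\mathrm{defined}(\Dg)$ are exactly the rules of \Dg, so once the open atoms of \Dg (i.e., $(\mathrm{open}(\Dgd) \cup \mathrm{defined}(\Dd)) \cap \vocf{\Dg}$) have been pinned down identically in \M and \I, the two-valued well-founded model of \Dg on these opens is uniquely determined, and this determination must coincide with both $\restr{\I}{\vocf{\Dg}}$ and $\restr{\M}{\vocf{\Dg}}$.
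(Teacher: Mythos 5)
Your overall strategy --- apply Corollary~\ref{cor:voila} directly to \jgraph and the set $L$ of true \Dd-defined literals to build a model \M of \Dgd, and then argue that \M expands $\restr{\I}{\vocf{\Dg}}$ --- is genuinely different from the paper's, and it breaks down exactly at the point you flag as ``the main obstacle.'' To get agreement on $\defined{\Dg}$ you need that $\restr{\M}{\vocf{\Dg}}$ is itself a model of \Dg with the atoms of $\defined{\Dd}$ treated as open; only then does agreement on $\open{\Dg}$, together with uniqueness of the two-valued well-founded model of the total \Dg, yield agreement on $\defined{\Dg}$. That is a splitting (modularity) property of the well-founded semantics across the partition of \Dgd into \Dg and \Dd, and it does not follow merely from ``the rules for $\defined{\Dg}$ are inherited unchanged'' plus totality: modularity of the well-founded semantics is not unconditional (already $\{P\lrule Q\}\cup\{Q\lrule P\}$ against the structure making both atoms true shows the converse direction fails), and the direction you need, while true, requires a separate argument --- e.g.\ that a two-valued well-founded model of \Dgd is a stable model of \Dgd, hence of \Dg with $\defined{\Dd}$ opened, and that the two-valued well-founded model of a total definition is below, hence equal to, every stable model. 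None of that machinery appears in the paper or in your write-up; as stated, ``totality of \Dg forces them to coincide'' is a restatement of the claim rather than a proof.

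The paper sidesteps this entirely by staying inside the justification calculus: Proposition~\ref{prop:justbis} converts the hypothesis $\restr{\I}{\vocf{\Dg}}\models\Dg$ into a justification $\jgraph_g$ over \Dg consisting of literals true in \I; this is merged with \jgraph (preferring \jgraph where both are defined), and the merged justification is checked to be total, well-founded and consistent for \pt, after which Proposition~\ref{prop:just} and Corollary~\ref{cor:voila} give the model. The well-foundedness check on the merged graph is precisely what excludes bad loops crossing the \Dg/\Dd boundary --- i.e., it does the work of the splitting property you are missing. Your proof can be repaired either by replacing the $\defined{\Dg}$ case with this combination of justifications, or by supplying an independent proof of the splitting property for total definitions; as submitted, that step is a genuine gap.
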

\begin{proof}
$\restr{\I}{\vocf{\Dg}}$ is a model of $\Dg$. It follows from
Proposition~\ref{prop:justbis}  that there exists a justification
$\jgraph_g$ over $\Dg$ that justifies  every true defined literal of $\Dg$
and that consists of only domain literals true in
$\restr{\I}{\vocf{\Dg}}$. We now have two justifications: $\jgraph$ and
$\jgraph_g$. We combine them in one $\jgraph_c$ as follows: for each defined
literal $l$ of $\Dgd$, if $\jgraph$ is defined in $l$, we set
$\jgraph_c(l)=\jgraph(l)$; otherwise, we set  $\jgraph_c(l)=\jgraph_g(l)$. 
\change{As $J_c$ takes edges from either $J$ or $J_g$ for each defined literal, it is a justification for \Dgd.}

We verify that $\jgraph_c$ justifies \pt. First, it is total in \pt. Indeed, any path from \pt either consists of literals defined in \Dg, and then it is a branch of the total $\jgraph_g$ over \Dg, or it passes to a literal $l'$ defined in \Dd, which is justified by $\jgraph$ according to condition (\tbf{v}) and hence $(\jgraph_c)_{l'} = \jgraph_{l'}$ is total. As such, from \pt we cannot reach a defined literal of \Dgd in which $\jgraph_c$ is undefined. \change{Second, $\jgraph_c$ does not contain unfounded literals starting from \pt. This is because any path from \pt is either a path in $\jgraph_g$ (so well-founded as it justifies \Dg) or it has a tail in \jgraph (well-founded by property (\tbf{v}))}. Finally, the set of literals reachable from \pt in $\jgraph_c$ is consistent. Also this we can see if we look at paths in $\jgraph_c$ from \pt: at first we follow $\jgraph_g$ which consists of true literals in $\I$, then we may get into a path of \jgraph which contains literals that are consistent with $\I$. In any case, it is impossible to reach both a literal and its negation. 

It follows from Proposition~\ref{prop:just} that there exists a model of \Dgd that expands  $\restr{\I}{\vocf{\Dg}}$ and in which \pt is true. Since \Dgd is strongly equivalent with \D, the proposition follows.
\end{proof}

Recall that effectively computing such a model \M can be achieved by well-founded evaluation of \Dgd, with polynomial data complexity, starting from any two-valued $\open{\Dgd}$-structure expanding $\restr{\I}{\vocf{\Dg}}$~\cite{jacm/ChenW96}.

In the above theorem, it is required that \I is a model of \Dg. Actually, we do not need to compute a two-valued model of \Dg. It suffices to search for a partial structure and a \justification that justifies \pt. So, we can relax this requirement at the expense of also maintaining justifications for literals true in \I and defined in \Dg.

\begin{corollary}\label{col:acceptable}
  Let $\tuple{\Dg,\Dd,\jgraph,\I}$ be an acceptable state of a theory
  $\theory = \{\pt,\D\}$ with input structure \Iin such that \pt is
  true in \I and \jgraph justifies \pt over \Dgd.
Then there exists a model \M of \theory that expands $\restr{\I}{S}$ with $S$ the set of defined literals in $\jgraph_{\pt}$. 
\end{corollary}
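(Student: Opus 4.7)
The plan is to prove this by directly constructing the desired model $\M$ as a well-founded model of $\Dgd$ extending a well-chosen open structure, and then invoking Corollary~\ref{cor:voila} (or equivalently Proposition~\ref{prop:just}) rather than re-running the full argument of Theorem~\ref{theo:acceptablemodel}. The key observation is that, unlike in Theorem~\ref{theo:acceptablemodel}, we are no longer told that $\restr{\I}{\vocf{\Dg}}$ is a model of $\Dg$, so we cannot build a combined justification $\jgraph_c$ by merging $\jgraph$ with a $\Dg$-justification obtained from $\I$. Instead, $\jgraph$ alone already justifies $\pt$, which is strong enough to apply Corollary~\ref{cor:voila} directly.

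First I would fix notation: let $L_{open}$ denote the open leaves of $\jgraph_\pt$ (these are literals over $\open{\Dgd}$), and note that by condition (\tbf{iii}) of Definition~\ref{def:acc_state} and consistency of $\jgraph_\pt$ (coming from the fact that $\jgraph$ justifies $\pt$), the set $L_{open}$ is consistent. Extend $L_{open}$ arbitrarily to a two-valued $\open{\Dgd}$-structure $\I_{open}$ (assigning any truth value to open atoms not appearing in $L_{open}$). Since $\Dgd$ is total (condition~(\tbf{i})), $\M := wf_{\Dgd}(\I_{open})$ is two-valued and is the unique model of $\Dgd$ expanding $\I_{open}$. Proposition~\ref{prop:just} applied to the justification $\jgraph$ (or Corollary~\ref{cor:voila}) then guarantees that every literal of $\jgraph_\pt$, and in particular $\pt$ itself, is true in $\M$. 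Since $\Dgd$ is strongly $\vocf{\D}$-equivalent to $\D$, the restriction of $\M$ to $\vocf{\theory}$ is a model of $\theory$.

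Next I would verify that $\M$ expands $\restr{\I}{S}$. For any defined atom $a$ not corresponding to a literal in $\jgraph_\pt$ (equivalently, $a\notin S$), $\restr{\I}{S}$ assigns $a$ the value $\lunkn$, so $\M$ expands it trivially. For any defined literal $l\in\jgraph_\pt$ (so its atom lies in $S$): $l$ is not false in $\I$ by consistency of $\jgraph$ with $\I$ (Definition~\ref{def:acc_state}~(\tbf{vi})), so $l^{\I}\in\{\ltrue,\lunkn\}$, and hence $l^{\restr{\I}{S}}\in\{\ltrue,\lunkn\}$; since $\M$ makes $l$ true, $\M$ expands $\restr{\I}{S}$ on the atom of $l$.

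The main subtlety, and the only place where care is needed, is the consistency of $L_{open}$ and of $\jgraph_\pt$ with $\I$; both follow immediately from unpacking Definition~\ref{def:justifies} (property~(\tbf{iii}) there) and Definition~\ref{def:acc_state}~(\tbf{vi}), so there is no substantive obstacle. Unlike the proof of Theorem~\ref{theo:acceptablemodel}, no merging of justifications is required, because we no longer need the resulting $\M$ to agree with $\I$ on the whole of $\vocf{\Dg}$: we only need agreement on $S$, which is exactly what a justification-based expansion provides.
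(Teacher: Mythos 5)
The paper states this corollary without its own proof, presenting it as a relaxation of Theorem~\ref{theo:acceptablemodel}, so your direct construction fills a real gap, and your overall route is exactly the intended one: since $\jgraph$ is already assumed to justify $\pt$ over \Dgd, the justification-merging step of Theorem~\ref{theo:acceptablemodel} is unnecessary, and one can apply Proposition~\ref{prop:just} (or Corollary~\ref{cor:voila}) directly to a two-valued $\open{\Dgd}$-structure extending the (internally consistent) open leaves of $\jgraph_{\pt}$. The construction of $\M$ as $wf_{\Dgd}(\I_{open})$, the appeal to totality and to the strong $\vocf{\D}$-equivalence of \Dgd with $\D$, and the conclusion that $\pt$ and all of $\jgraph_{\pt}$ are true in $\M$ are all sound.

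The one step that does not go through as written is the last one, where you need that no defined literal of $\jgraph_{\pt}$ is false in $\I$ and cite Definition~\ref{def:acc_state}~(\tbf{vi}) for it. Condition (\tbf{vi}) asserts consistency with $\I$ only for $\jgraph_L$, the subgraph reachable from the literals that are \emph{true in $\I$ and defined in \Dd}; it says nothing about the rest of $\jgraph_{\pt}$. Since $\pt$ is typically defined in \Dg (its rule is moved there in the very first step), and $\jgraph_{\pt}$ may pass through \Dg-defined literals or through \Dd-defined literals that are unknown in $\I$, one cannot conclude $\jgraph_{\pt}\subseteq\jgraph_L$; and the added hypothesis ``$\jgraph$ justifies $\pt$ over \Dgd'' only yields totality, well-foundedness and \emph{internal} consistency of $\jgraph_{\pt}$, not consistency with $\I$. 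In Theorem~\ref{theo:acceptablemodel} the analogous fact is secured structurally: every path from $\pt$ first runs through $\jgraph_g$, which consists of literals true in $\I$, and enters $\jgraph$ only at literals of $L$, so condition (\tbf{vi}) does apply there. Here no such structure is given, so to make the ``expands $\restr{\I}{S}$'' clause rigorous you must either add ``$\jgraph_{\pt}$ is consistent with $\I$'' as an explicit hypothesis (which is what the justification manager maintains in practice, and what the corollary tacitly presupposes) or prove it from a strengthened notion of acceptable state; it is not a consequence of the conditions you cite.
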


Failure to find a model of \Dg expanding \Iin in which \pt is true implies the lack of models of \theory expanding \Iin. Indeed, if \Dg has no model expanding \Iin, then it has an unsatisfiable core, i.e., a set of rules from \Dg such that no model exists that expands \Iin. Hence, it is also an unsatisfiable core for $\theory =(\pt,\D)$. To find an unsatisfiable core, one can, for example, use techniques described by~\citeA{fm/TorlakCJ08}.

\subsection{\change{Practical Justification Management for \foid Theories}}\label{sec:default}

\newcommand{\rf}{\ensuremath{c_r}\xspace}
\newcommand{\crf}{\ensuremath{\cc_{\rf}}\xspace}

Roughly speaking, our lazy model expansion framework consists of two components. On the one hand, a standard model expansion algorithm that operates on $\{\pt, \Dg\}$ and, on the other hand, a justification manager that maintains a \justification over \Dgd and lazily grounds \Dd. Lazy model expansion performs search over the space of acceptable states and aims at reaching a state where Theorem~\ref{theo:acceptablemodel} (or Corollary~\ref{col:acceptable}) is applicable. To avoid slowing down the search during model expansion, the work done by the justification manager and the lazy grounding must be limited. To achieve this, we have designed a system in which the justification manager has no access to the grounded definition \Dg and need not restore its state when the search algorithm backtracks over the current structure \I.  The justification manager only has access to \I and maintains justifications that are restricted to \Dd. In particular, a literal defined in \Dg is not allowed in a direct justification. Our justification manager maintains the following properties: 
\begin{itemize}
\item Literals in direct justifications are either open in \Dgd or defined in \Dd.
\item All direct justifications in \jgraph are kept consistent with each other and with the current structure \I.
\item The justification graph defined by \jgraph has no \change{unfounded literals} and is total.
\end{itemize}

To distinguish acceptable states that meet these additional
requirements from acceptable states as defined in
Definition~\ref{def:acc_state}, we call them 
\emph{default acceptable states}; we define them as:

\begin{definition}[Default acceptable state]\label{def:default_acc_state}
%Note: intended to be backtrack-safe!
A state $\tuple{\Dg,\Dd,\jgraph,\I}$ is a default acceptable state if it is an acceptable state and, in addition,
(\tbf{i}) literals in direct justifications are either open in \Dgd or defined in \Dd, and  % simplification *1 : are open in \Dgd. 
(\tbf{ii}) \jgraph justifies the set of all literals for which $\jgraph$ is defined. % simplification *1: komt neer op: unie van direct justifications is consistent. 
\end{definition}
It follows that default acceptable states satisfy two extra conditions: they do not justify 
literals defined in \Dd in terms of literals defined in \Dg, and the set of all literals in \jgraph is consistent. \change{For an acceptable state, it suffices that the literals in \jgraph that are true in \I and defined in \Dd, are consistent.}
Since default acceptable states are acceptable states, Theorem~\ref{theo:acceptablemodel} and Corollary~\ref{col:acceptable} also hold for default acceptable states.
 
\change{During standard model expansion, the main state-changing operations are to make \I more precise (by making literals true, either through choice or propagation) and to make \I less precise (by backjumping).} When $S = \tuple{\Dg,\Dd,\jgraph,\I}$ is a default acceptable state and model expansion modifies \I into \I', the new state $\tuple{\Dg,\Dd,\jgraph,\I'}$ is not necessarily a default acceptable state. The following propositions identify situations where acceptability is preserved.

\begin{proposition}\label{prop:forward}
  Let $\tuple{\Dg, \Dd, \jgraph, \I}$ be a default acceptable state, $L$ a set of literals unknown in \I and $\I'$ the consistent structure $\I \cup L$. If (\tbf{i}) literals of $L$ either are not defined in \Dd or have a direct justification in \jgraph and (\tbf{ii}) no direct justification in \jgraph contains the negation of a literal in $L$, then $\tuple{\Dg, \Dd, \jgraph, \I'}$ is a default acceptable state.
\end{proposition}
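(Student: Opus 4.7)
The plan is largely bookkeeping: because $\Dg$, $\Dd$ and $\jgraph$ are all unchanged, every condition of an acceptable or default acceptable state that speaks only about these objects---clauses (i)--(iii) of Definition~\ref{def:acc_state} and (i'), (ii') of Definition~\ref{def:default_acc_state}---transfers from $S=\tuple{\Dg,\Dd,\jgraph,\I}$ to the new tuple $S'=\tuple{\Dg,\Dd,\jgraph,\I'}$ for free. Clause (iv) is immediate, since $\I' \supseteq \I \supseteq \Iin$ and $\I'$ is consistent by assumption. The real work goes into re-establishing (v) and (vi) against the new structure.

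For (v), let $L'$ denote the set of literals true in $\I'$ and defined in $\Dd$, and let $L_0$ be the analogous set for $\I$. Since $\I' = \I \cup L$, one has $L' = L_0 \cup (L \cap \defined{\Dd})$. Clause (v) of $S$ makes $\jgraph$ defined on every element of $L_0$, and hypothesis (i) of the proposition does the same for the elements of $L \cap \defined{\Dd}$. So $L'$ sits inside the set of all literals on which $\jgraph$ is defined; clause (ii') of default acceptability---that $\jgraph$ justifies that entire set---then yields that $\jgraph$ justifies $L'$, since restricting a justification to a subset of its defined literals preserves totality, well-foundedness and consistency as required by Definition~\ref{def:justifies}.

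For (vi), I must show that no literal $h$ for which $\jgraph_{L'}$ is defined is false in $\I'$. By the consistency-with-$\I$ invariant maintained on default acceptable states, no such $h$ is false in $\I$, so $h$ can only become false in $\I'$ if $\neg h \in L$. If $h$ appears as a member of some direct justification used inside $\jgraph_{L'}$, hypothesis (ii) rules this out directly. The subtler case is when $h$ itself has $\jgraph(h) \neq \emptyset$: here I use the design invariant that in a default acceptable state the literals on which $\jgraph$ is defined lie in $\defined{\Dd}$, so $\neg h$ is also defined in $\Dd$; if $\neg h$ were in $L$, hypothesis (i) would force $\jgraph(\neg h) \neq \emptyset$, and clause (ii') would then demand that $\jgraph$ justify the set $\{h, \neg h\}$, violating the consistency clause of Definition~\ref{def:justifies}. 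Hence $\neg h \notin L$ in every case, closing the argument.

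The main obstacle is this last case split inside (vi): the two hypotheses of the proposition are calibrated to block precisely the two ways in which the update from $\I$ to $\I'$ can falsify a literal already sitting in $\jgraph$---via a child inside some direct justification, or via a literal on which $\jgraph$ itself is defined---with the second case leaning on the default-acceptable design invariant that $\jgraph$ is kept inside $\Dd$.
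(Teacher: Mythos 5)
Your proof is correct and follows essentially the same route as the paper's (much terser) proof: hypothesis (\textbf{i}) supplies direct justifications for the newly true literals defined in \Dd, hypothesis (\textbf{ii}) keeps the existing justifications consistent with the enlarged structure, and everything else carries over unchanged. One minor simplification: in your case split for condition (\textbf{vi}), the ``subtler case'' where $h$ is a root of $\jgraph_{L'}$ (i.e.\ $h\in L'$ is true in $\I'$) is handled immediately by the assumed consistency of $\I'$, without needing the detour through clause (\textbf{ii}$'$) and hypothesis (\textbf{i}).
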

\begin{proof}
  As the literals true in \I and defined in \Dd have a direct justification, it follows from (\tbf{i}) that all literals true in \I' and defined in \Dd have a direct justification. As justifications in \jgraph are consistent with \I, then, by (\tbf{ii}), they are also consistent with \I'. Hence, \jgraph justifies all literals true in \I' and defined in \Dd.
\end{proof}

\begin{proposition}\label{prop:backward}
  Let $\tuple{\Dg, \Dd, \jgraph, \I}$ be a default acceptable state.
  Then $\tuple{\Dg, \Dd, \jgraph, \I'}$ with $\I' <_p \I$ is a
  default acceptable state.
\end{proposition}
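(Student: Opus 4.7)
The plan is to verify each of the conditions in Definition~\ref{def:default_acc_state} for the new tuple $\tuple{\Dg, \Dd, \jgraph, \I'}$, noting that only the structure has changed (and become less precise). The conditions (\tbf{i}), (\tbf{ii}), (\tbf{iii}) from Definition~\ref{def:acc_state} and the two extra conditions for default acceptability depend only on $\Dg$, $\Dd$ and $\jgraph$, so they carry over unchanged. Condition (\tbf{iv}) (that $\I'$ is an expansion of $\Iin$) I would treat as implicit in the context of search, or equivalently assume $\I'\geq_p\Iin$; this is a one-line observation about the precision order. The substantive work lies in conditions (\tbf{v}) and (\tbf{vi}).

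Let $L$ denote the set of literals true in $\I$ and defined in $\Dd$, and $L'$ the analogous set for $\I'$. The central observation is that since $\I'\leq_p \I$, every literal true in $\I'$ is also true in $\I$, so $L'\subseteq L$. Consequently $\jgraph_{L'}$ is a subgraph of $\jgraph_L$ (the set of nodes reachable from a smaller seed set can only shrink).

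Next I would argue that \emph{justifies} is monotone in its set argument: given that $\jgraph$ justifies $L$, the three defining properties (totality on $L$, well-foundedness of every literal in the set, consistency of the literals in the reachable subgraph) all transfer to the subset $L'$. Totality and well-foundedness are pointwise conditions on members of $L'\subseteq L$; consistency transfers because $\jgraph_{L'}\subseteq \jgraph_L$ and subsets of consistent literal sets are consistent. This handles condition (\tbf{v}).

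For condition (\tbf{vi}), I would use monotonicity of the precision order in the opposite direction: if a literal $l$ is false in $\I'$ then it was already false in the more precise $\I$ (since $\I$ refines $\I'$). Contrapositively, no literal of $\jgraph_L$ is false in $\I$, so none is false in $\I'$ either; hence $\jgraph_L$ (and a fortiori $\jgraph_{L'}$) is consistent with $\I'$. I do not expect any real obstacle here — the whole proof is a short monotonicity argument, and the only subtle point is being careful with the direction of the precision order when translating ``literal false in \I'\," into a statement about \I.
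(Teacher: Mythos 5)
Your proof is correct and follows essentially the same route as the paper, whose own argument is just the two-sentence core of yours: since $\I'\subseteq\I$ as sets of literals, the set of true literals defined in \Dd shrinks, so \jgraph still justifies them (and remains consistent with the less precise structure). Your extra care in checking the $\I$-independent conditions and in flagging the implicit assumption $\I'\geq_p\Iin$ for condition (\textbf{iv}) goes slightly beyond what the paper writes, but does not change the argument.
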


\begin{proof}
  The justification \jgraph justifies all literals defined in
  \D and true in \I. As \I' is a subset of \I, \jgraph
  justifies all literals defined in \D and true in \I'.
\end{proof}

\change{In a default acceptable state,  literals defined in \Dg are not allowed in direct justifications of  literals defined in \Dd. This restriction is quite limiting (see next section) but is to avoid hidden loops over \Dgd. Such loops can only be detected by  maintaining a justification over both \Dg and \Dd, which our current implementation does not do. 
Several methods exist to extend the class of default acceptable states. Literals $l$ defined in \Dg can be allowed in direct justifications of \Dd, provided it can be established that $l$'s justification cannot loop over \Dgd. One case is when the body of the rule of $l$  has no defined literals. A step further is to analyze the dependency graph: a literal  defined in \Dg can be allowed in the direct justification of a literal defined in \Dd provided both literals do not belong to the same strongly connected component of the dependency graph. In that case, they cannot be part of the same cycle.}

\newcommand{\rt}{\ensuremath{root}}
\newcommand{\rtx}{\ensuremath{root}\xspace}

\subsection{An Example}\label{sec:lazyExample}

\change{In the rest of the section, we illustrate the behavior of lazy
  model expansion on an artificial example, constructed in such a way
  that all main features are illustrated. In the next section, the
  processes involved are described in more detail. }

\change{We focus on the operation of   the justification manager  and its interaction with the  solving process.  The  manager is activated in an unacceptable state, either when the solver falsifies a literal that occurs in a direct justification of \jgraph or when a true  literal $l$  defined in \Dd is not justified by \jgraph. One option for repair is to search for a justification for $l$ to extend \jgraph. In general this problem is as hard as the model expansion problem itself, as shown by Corollary~\ref{cor:voila}. Our manager only searches {\em locally} for a direct justification that justifies $l$ to extend \jgraph, and if it does not find one, it grounds $l$'s definition and moves it to \Dg.   }

Our example uses a theory \theory which states that a symmetric graph
($edge/2$) exists where at least one node other than the root node (predicate $\rtx/1$) 
is reachable (predicate $R/1$) from the root node. 
The input structure \I
interprets the domain  as $D = \{d_1,
\ldots, d_n\}$ \change{ and the equality predicate as the identity
  relation on $D$ (below omitted in \I)}. Predicates  $edge, R$ and \rtx are not interpreted; $R$ and $\rtx$ are defined.  In particular, \rtx is defined as the singleton $\{d_1\}$, specifying the root as $d_1$.

\[\begin{array}{l}
\pt \\
\left\{\begin{array}{rlr}
\pt &\lrule C_1 \land C_2 & (1) \\
C_1&\lrule \exists \typed{x}{D} \lnot \rt(x) \land R(x) & (2)\\
C_2&\lrule \forall \typed{(x~y)}{D^2} edge(x,y) \limplies edge(y,x) & (3) \\
\forall \typed{x}{D} \rt(x)&\lrule x=d_1 & (4) \\
\forall \typed{x}{D} R(x)&\lrule \rt(x) \lor \exists \typed{y}{D} edge(x,y)\land R(y) & (5)
\end{array}\right\}
\end{array}\]

The lazy \MX algorithm proceeds as follows:
\begin{enumerate}
\item The initial default acceptable state is
  $\tuple{\Dg,\Dd,\jgraph,\I}$ in which \Dg, \I and \jgraph are empty,
  and $\Dd=\D$.

\item Propagation over $\{\pt,\Dg\}$ sets $\I$ to $\{ \pt\}$.
  \change{This expands the structure \I, but now the conditions of
    Proposition~\ref{prop:forward} are no longer satisfied. The
    resulting state is not acceptable since \pt is true and defined in
    \Dd while it has no direct justification in \jgraph.  One option
    to repair acceptability is to extend \jgraph with a direct
    justification for \pt.  The  atom \pt has a unique direct
    justification $\{C_1,C_2\}$ but extending \jgraph with it does not
    restore (default) acceptability since $C_1, C_2$ have no direct
    justification in \jgraph and  \pt remains unjustified. Therefore,
    the alternative is taken and rule (1) is moved to \Dg. Now, a default acceptable state is obtained.}

\item \change{Unit propagation sets $\I$ to $\{ \pt, C_1, C_2\}$. Now $C_1$ and $C_2$ have to be justified. Consider first $C_2$ and rule~(3). As $edge$ is open, our manager can build the direct justification $\{ \neg edge(d,d') \mid (d,d')\in D^2\}$, that sets all negative $edge$ literals true, and extends \jgraph with it (setting all positive $edge$ literals true would be equally good). This justifies $C_2$ and avoids the grounding of the rule defining $C_2$.}

\item Literal $C_1$ cannot be justified (with the local approach) since each of its direct justifications contains unjustified defined literals. However, as rule~(2) is existentially quantified, one can avoid grounding the whole rule by performing a Tseitin transformation to isolate one instance and then only ground that instance. For the purpose of illustration, we make the (bad) choice of  instantiating $x$ with $d_1$:
  \[\left\{\begin{array}{rlr}
      C_1&\lrule(\lnot \rt(d_1) \land R(d_1)) \lor T & (2a)\\
      T&\lrule\exists \typed{x}{D\setminus \{d_1\}} \lnot\rt(x) \land R(x) &(2b)
    \end{array}\right\}\]
  Rule~(2a) is moved to $\Dg$ and a default acceptable state is reached.

\item We are in an acceptable state in which no further propagation is possible, so a choice has to be made. As $C_1$ is true, the body of rule~(2a) has to become true. Preferably not selecting a Tseitin (this would trigger more grounding), the first disjunct is selected by model expansion and propagation extends the structure with $\neg\rt(d_1)$ and $R(d_1)$. \change{The literal $\neg\rt(d_1)$ is defined in \Dd by rule~(4) but cannot be justified since its unique direct justification $\{\neg (d_1=d_1)\}$ is false. The manager partially grounds the definition of $\rt$ and splits it up in a ground rule~(4a) and a non-ground  rule~(4b) defining $\rt$ for the other domain elements: }
  \[\begin{array}{c}
    \left\{\begin{array}{rlr}
        \rt(d_1)& \lrule \change{d_1=d_1} & (4a)\\
        \forall \typed{x}{D\setminus \{d_1\}} \rtx(x) &\lrule x=d_1 & (4b)
      \end{array}\right\}
  \end{array}\]
  Rule~(4a) is moved to \Dg. \change{Note that $\rt(d_1)$ is justified by $\{d_1=d_1\}$ in \Dgd,} hence it is safe to use $\rt(d_1)$ in direct justifications in \Dd. \change{Whenever grounding has been done, the justification manager is interrupted by propagation, which can infer the truth of additional literals, or detect an inconsistency (which will result in backjumping). In both cases, the manager has to resume the revision of the justification afterwards, until an acceptable state is reached. Here, even though the resulting state is still unacceptable (due to the unjustified $R(d_1)$), the creation of the new rule~(4a) in \Dg interrupts the manager. Propagation using the new rule derives $\rt(d_1)$ and a conflict.  After backtracking to $\I=\{ \pt, C_1, C_2\}$, the subsequent propagation sets the structure \I to $\{ \pt, C_1, C_2, \rt(d_1), T\}$. Still not in a default acceptable state ($T$ is not justified), rule~(2b) is further transformed to split off another instance.}
 \[\begin{array}{c}
   \left\{\begin{array}{rlr}
       T & \lrule (\lnot \rt(d_2) \land R(d_2)) \lor T_2 & (2ba) \\
       T_2 & \lrule \exists \typed{x }{D\setminus\{d_1,d_2\}} \lnot\rt(x)\land R(x) & (2bb)
     \end{array}\right\}
 \end{array}\]
 Rule~(2ba) is moved to $\Dg$, while rule~(2bb) remains in \Dd. This state is default acceptable.

\item Again, the search avoids the new Tseitin $T_2$, choosing the first
  disjunct in rule~(2ba) which propagates $\lnot \rt(d_2)$ and $R(d_2)$. The literal $\lnot\rt(d_2)$ is defined in \Dd, but is justified by the direct justification $\{\lnot (d_2=d_1)\}$. The literal $R(d_2)$ cannot be justified \change{by a direct justification} (as all $edge$ literals are false in the current justification graph) and rule~(5) is transformed to split off the instance for $d_2$. Actually, this instance in turn has a disjunctive body with a complex subformula, so to avoid grounding the subformula, we break it up in two parts and introduce another Tseitin.
  \[\begin{array}{c}
    \left\{\begin{array}{rlr}
        R(d_2) & \lrule \rt(d_2) \lor T_3   & (5aa) \\
        T_3 & \lrule \exists \typed{y}{D} edge(d_2,y)\land R(y)  & (5ab)\\
        \forall \typed{x}{D\setminus \{d_2\}} R(x)&\lrule \rt(x) &\\
        	&\qquad {} \lor \exists \typed{y}{D} edge(x,y)\land R(y) & (5b)
      \end{array}\right\}
  \end{array}\]
  Rule~(5aa)  is moved to $\Dg$, the others remain in $\Dd$.

\item The current structure \I is $\{ \pt, C_1, C_2, \rt(d_1), T,
  \lnot\rt(d_2), R(d_2) \}$, hence propagation on rule~(5aa) in \Dg extends it
  with $T_3$. There is no direct justification justifying $T_3$ and, hence, rule~(5ab) is partially grounded by splitting off the $d_1$ case:
  \[\begin{array}{c}
    \left\{\begin{array}{rlr}
        T_3 & \lrule (edge(d_2,d_1)\land R(d_1)) \lor T_4 & (5aba)\\
        T_4 & \lrule \exists \typed{y}{D\setminus  \{d_1\}} edge(d_2,y)\land R(y)  & (5abb)\\
      \end{array}\right\}
  \end{array}\]
Rule~(5aba) is moved to \Dg while rule~(5abb) remains in \Dd.

\item The search selects the first disjunct of $T_3$'s rule body and propagates  $edge(d_2,d_1)$ and $R(d_1)$. The literal $R(d_1)$ is defined in \Dd, but $\{\rt(d_1)\}$ is a direct justification for it. \change{Extending \jgraph with this direct justification yields an acceptable but not default acceptable state, since $\rt(d_1)$ is defined in \Dg. However,  $\rt(d_1)$ is justified in \Dgd, making it safe to extend \jgraph with this direct justification as discussed earlier.} \change{Now the justification manager faces a new problem: the true literal $edge(d_2,d_1)$ is in conflict with the direct justification $\{\lnot edge(d,d')\mid (d,d')\in D^2\}$ of $C_2$ (rule~(3)). To handle this conflict, it splits off the affected instance ($x=d_2,y=d_1$) from this rule:}
\[\begin{array}{c}
    \left\{\begin{array}{rlr}
        C_2 &\lrule (edge(d_2,d_1) \limplies edge(d_1,d_2)) \land T_5   & (3a) \\
        T_5 &\lrule \forall \typed{(x~y)}{D^2\setminus\{(d_2,d_1)\}} edge(x,y) \limplies edge(y,x)  & (3b) 
      \end{array}\right\}
  \end{array}\]
  Rule~(3a) is moved to \Dg while rule~(3b) remains in \Dd. 
  \change{The direct justification of  $T_5$ is set to $\{\lnot edge(d,d')\mid (d,d')\in D^2 \setminus \{(d_2,d_1)\}\}$, the unaffected part of the direct justification of $C_2$. This restores acceptability.}

\item Propagation on rule~(3a) extends \I with $edge(d_1,d_2)$ and
  $T_5$. 
%The true literal $edge(d_1,d_2)$ is in conflict with the direct
The literal $edge(d_1,d_2)$, which is true, is in conflict with the direct
  justification for $T_5$ (rule~(3b)). To resolve it, the justification manager partially grounds rule~(3b) and splits off the instance $\{x=d_1,y=d_2\}$ as follows.
  \[\begin{array}{c}
    \left\{\begin{array}{rlr}
        T_5 &\lrule (edge(d_1,d_2) \limplies edge(d_2,d_1)) \land T_6 & (3ba) \\
        T_6 &\lrule \forall \typed{(x~y)}{D^2\setminus\{(d_2,d_1),
          (d_1,d_2)\}}\\ 
            & \hspace{3.5cm} edge(x,y) \limplies edge(y,x)  & (3bb) 
      \end{array}\right\}
  \end{array}\]
  Rule~(3ba) is moved to \Dg while rule~(3bb) remains in \Dd; $T_6$ inherits the direct justification of $T_5$ with $\lnot edge(d_1,d_2)$ removed. Propagation on rule~(3ba) extends \I with $T_6$. The resulting state is acceptable, with $T_6$ defined in \Dd but justified. 
\end{enumerate}

By now, \Dg consists of the rules (1), (2a), (4a), (2ba), (5aa), (5aba), (3a), and (3ba), and the residual definition \Dd consists of the rules (4b), (2bb), (5b), (5abb), and (3bb). The current structure \I is $\{\pt, C_1, C_2, \rt(d_1), $  $\lnot\rt(d_2), edge(d_2,d_1), edge(d_1,d_2), R(d_1), R(d_2),$ $T, T_3, T_5, T_6 \}$, a model of $\pt \cup \Dg$. 

Of these literals, $\lnot\rt(d_2)$, $R(d_1)$ and $T_6$ are defined in \Dd. Literal $\lnot\rt(d_2)$, defined by rule (4b) has \change{$\{\neg (d_2=d_1)\}$} as direct justification.  Literal $R(d_1)$, defined by rule (5b), has $\{\rt(d1)\}$ as direct justification. Literal $T_6$, defined by rule (3bb) has as direct justification the set of all negative $edge$ literals over $D$ except $edge(d_1,d_2)$ and $edge(d_2,d_1)$. To obtain a full model of the theory, \I is extended with the literals of the above direct justifications. In this case, this assigns all open literals and the model can be completed by the well-founded model computation over \Dgd. Actually, this can be done without grounding the definition~\cite{tplp/Jansen13}.

\section{Justification Management}\label{sec:algorithms}

\newcommand{\buildconstr}{\textsf{build\_djust}\xspace}
\newcommand{\csplit}{\textsf{split}\xspace}
\newcommand{\lazyground}{\textsf{lazy\_ground}\xspace}
\newcommand{\lazymx}{\textsf{lazy\_mx}\xspace}

\newcommand{\dstate}{default acceptable state\xspace}

\newcommand{\propagate}{\textsf{propagate}\xspace}
\newcommand{\learnnogood}{\textsf{learn\_nogood}\xspace}
\newcommand{\checkliteral}{\textsf{check\_literal}\xspace}
\newcommand{\splitandground}{\textsf{split\_and\_ground}\xspace}
\newcommand{\construct}{\textsf{justify}\xspace}
\newcommand{\initjust}{\textsf{init\_just}\xspace}

\newcommand{\red}{\color{red}}
\newcommand{\blue}{\color{blue}}
\newcommand{\green}{\color{green}}
\newcommand{\gray}{\color{gray}}
\newcommand{\black}{\color{black}}

\tikzset{
 treenode/.style = {align=center, inner sep=0pt, text centered, font=\sffamily},
 node/.style = {treenode, ellipse, solid, minimum height=1em, minimum width=3em},
 r_node/.style = {treenode, ellipse, dashed, minimum height=1em, minimum width=3em},
 empty_node/.style = {treenode, ellipse},
 normal/.style={edge from parent/.style={black,solid,draw}},
 noline/.style={edge from parent/.style={}},
 halfline/.style={edge from parent/.style={black,draw,dashed}}
}

\newcommand{\algfalse}{\ensuremath{\mathit{false}}\xspace}
\newcommand{\algundef}{\ensuremath{\mathit{undef}}\xspace}
\newcommand{\algtrue}{\ensuremath{\mathit{true}}\xspace}

In Section~\ref{sec:default}, we have instantiated our general framework, developed in Section~\ref{sec:lazy-foid}, for a justification manager that only has access to \Dd. In the example of Section~\ref{sec:lazyExample}, the justification was constructed on demand, i.e., each time some literal needed a (different) direct justification, the body of its defining rule was analyzed and a justification was extracted. If that failed, part of the rule was grounded. This was called the \emph{local approach}. One can also imagine a \emph{global approach}, where more rules of \Dd are considered at once in an attempt to select direct justifications that minimize the grounding of the rules as a whole. Obviously, a global approach will be more time consuming, so should not be applied every time an adjustment of the justification is required. In this section, we describe both approaches.

Before describing the algorithms, we introduce some notations and assume some normalizations have been done. The function \textsf{nnf} reduces a formula to its negation normal form. With $S$ a set and $s$ a single element, $S+s$ and $S-s$ are used as shorthands for $S\cup \{s\}$ and $S \elim \{s\}$. \change{With \jgraph a justification, we denote by $\jgraph[l\rightarrow d]$ the graph identical to \jgraph except that $l$ is now justified by $d$.} We assume quantifiers range over a single variable and variable names are not reused in a formula. Furthermore, we assume basic reductions have been applied to formulas, e.g., $\true \land \f$ reduces to \f, $\forall \typed{x}{\emptyset} \f$ reduces to \ltrue, \ldots

\subsection{The Local Approach}\label{sec:local}
\newcommand{\changes}{\ensuremath{q_{ch}}\xspace}

Algorithm~\ref{fig:lazymx} shows the top level of the \lazymx model expansion algorithm, taking as input theory $\{\pt, \D\}$ and input structure \Iin. Definitions \Dd and \Dg are initialized with \D and the empty definition, respectively, and \I is initialized as \Iin. The set of ground sentences \Sg is initialized with the fact \pt and the initial justification \jgraph is empty. An auxiliary (FIFO) queue \changes is initialized as empty. The latter keeps track of literals for which the direct justification needs to be checked. 

The main loop performs model expansion over $\Sg\cup\Dg$, interleaved
with work by the justification manager towards establishing a
\dstate. The model expansion part consists of propagation (the call to
\textsf{propagate}), the test on whether the current state is
inconsistent (with learning and backjumping), the test on whether a
model of $\Sg\cup\Dg$ has been found (returning the model and the
justification) and of the choice step that selects a literal unknown
in $\Sg\cup\Dg$ and assigns it a value. \change{Propagation returns
  literals that are entailed by a ground theory over a (partial)
  structure, for example by applying unit propagation and
  unfounded/wellfoundedness propagation~\shortcite{sat/MarienWDB08}.}
The test for a model is only performed in a \dstate (i.e., when the
queue \changes is empty). If the test succeeds,  this ensures that the
well-founded model computation can expand the current structure
\I~---extended with the direct justifications of all literals--- into
a model of the whole theory. Also the choice step only takes place in
a \dstate; this ensures that the search space is limited to the state
space of {\dstate}s. 
\change{The justification manager is activated when a propagation or
  choice step assigns a literal $l$. By calling \checkliteral, it is
  checked whether the current justification remains valid. If $l$ is
  defined in \Dd and has no justification, it needs a justification
  and is added to the queue \changes for further processing by the
  justification manager. If $\neg l$ occurs in the justification of
  another literal $l'$, the justification becomes inconsistent with \I
  and $l'$ needs another justification so it is also added to
  \changes. The further processing is done by selecting elements from
  the queue and calling the \lazyground function. The latter function
  first attempts to find a (different) consistent direct justification
  for $l$; if that fails, it splits off the rule instance defining $l$
  from \Dd and partially grounds it, hence \Dg is extended. The new
  clauses may trigger propagation; therefore the
  processing of queued literals is interleaved with propagation
  and, possibly, with backtracking . Note
  that backtracking might restore the consistency with \I of the
  direct justification $\jgraph(l)$ of a literal $l$ on \changes.  }

\begin{algorithm}
\caption{The \lazymx lazy model expansion algorithm.}
\label{fig:lazymx}
\SetKwFunction{lmx}{\lazymx}
\myproc{\lmx{atomic sentence \pt, definition \D, structure \Iin}}{
\KwOut{either a model of \Dg and \jgraph or \algfalse}
$\Sg$ := \{\pt\}; $\Dg$ := $\emptyset$; $\Dd$ := \D; $\jgraph$ := $\emptyset$; $\I$ := \Iin; $\changes$ := $\emptyset$\;
\While{\algtrue}{
$L$ := \propagate{}($\Sg\cup\Dg$, \I)\;
$\I$ := \I $\cup$ $L$\;
\lForEach{$l \in L$}{\changes:=\checkliteral{}($l$,\changes)}
\uIf{\I is inconsistent}{
	$\Sg$ $+$= \learnnogood{}($\I$, $\Sg$)\;
	\lIf{conflict at root level}{\Return{\algfalse}}
	$\I$ := \I at state of backjump point\; % ; $changes$ := empty queue \\
}\uElseIf{\changes is not empty}{
	$(l, \changes)$ := \textsf{dequeue}(\changes)\;
	\lazyground{}($l$)\;
}\uElseIf{\I is a model of $\Sg\cup\Dg$\label{stop-early}}{\Return{$\I$, \jgraph}\;}
\Else{
	select choice literal $l$; $\I$ := $\I + l$\; 
	\changes:=\checkliteral{}($l$,\changes)\;
}
}
}
\SetKwFunction{cl}{\checkliteral}
\myproc{\cl{literal $l$, literal queue \changes}}{
\KwData{global \Dd and \jgraph \quad \tbf{Output:} updated queue}
\lIf{$l$ defined in \Dd and $\jgraph(l)=\algundef$}{\
	$\changes$ := \textsf{enqueue}($l$,\changes)
}
\lForEach{$l'$ such that $\lnot l \in \jgraph(l')$}{
	$\changes$ := \textsf{enqueue}($l'$,\changes)
}
\Return{\changes}\;}
\end{algorithm}

\subsubsection{Lazy Grounding of One Rule}

\change{The function \lazyground, Algorithm~\ref{fig:lazyground},
  checks whether the literal $l$ needs a direct justification; if not,
  it simply returns. Otherwise, it checks whether $l$ has a valid
  justification, i.e., one that satisfies the invariants detailed
  below. If so, it also returns; otherwise, it passes the rule body
  that has to be used to construct a justification (the negation of
  the defining rule when the literal is negative) to \buildconstr, a
  function that attempts to find a valid direct justification. Besides
  the literal and the rule body, also an initial justification,
  derived from the rule body, is passed to \buildconstr. If the latter
  function is successful, the justification is updated and \lazyground
  is done; if not, the direct justification of the literal $l$ is set
  to \algfalse and \splitandground is called to ground part of the
  rule defining $l$.
}

\begin{algorithm}
\caption{The lazy grounding of a literal.}
\label{fig:lazyground}
\SetKwFunction{lgl}{\lazyground}
\myproc{\lgl{literal $l$}}{
\KwData{global \Dd, \jgraph and \I}
\uIf{$l \in \I$ and $l$ defined in \Dd}{
\lIf{$\jgraph(l)$ exists and obeys the invariants}{\Return }
\uElse{% $\jgraph(l)$ := $\emptyset$\;
$\f$ := body of the rule defining $l$\;
\lIf{$l$ is a negative literal}{
	$\f$ := \textsf{nnf}$(\lnot\f)$
    }
$dj := \buildconstr(l,\f,\initjust(l))$\;
\lIf{$dj \neq \algfalse$}{\jgraph := $\jgraph[l \rightarrow dj]$;\ \Return}
\Else{
	$\jgraph$ = $\jgraph[l \rightarrow \algfalse]$\;
	$\splitandground(l)$\;
}
}
}
}
\end{algorithm}

Before going into more details, we first analyze which properties we want to maintain in the current justification \jgraph. The direct justifications of literals in the \changes queue are not considered part of \jgraph since they might be invalid. The global invariants of \jgraph are:
\begin{itemize}
\item \change{no literals are unfounded in \jgraph (recall, negative cycles are allowed)},
\item the set of literals in \jgraph is consistent.
\end{itemize}
For each direct justification $S=\jgraph(l)$ of \jgraph for some $l$ not on the queue, invariants of the lazy grounding process are:
\begin{itemize}
\item \change{$S$ contains no literals defined in \Dg (unless such a literal is safely justified in \Dgd, as discussed before)},
\item literals in $S$ that are defined in \Dd either have a direct justification in  $\jgraph$ or belong to the queue \changes.
\end{itemize}

These invariants imply that a \dstate is reached when the \changes queue is empty. Indeed, it follows from the invariants that the current justification is total in that situation and hence all literals that have a direct justification are justified (Definition~\ref{def:justifies}). Due to the policy followed to queue literals, the current justification is also consistent with \I while all literals true in \I and defined in \Dd have a justification, hence $\tuple{\Dg,\Dd,\jgraph,\I}$ is a \dstate.

\subsubsection{Building a Direct Justification}
The purpose of \buildconstr, Algorithm~\ref{fig:buildconstr}, is to extend \jgraph with  a suitable direct justification for  literal $l$ defined in \Dd. Here $l$ is a literal for which $\jgraph(l)$ is currently undefined or is inconsistent with \I. It is a recursive function which takes three parameters: (\tbf{i}) the literal $l$, (\tbf{ii}) the formula \f to be made true by the direct justification (initially the whole body of the rule defining the literal; note that the initialization takes the negation of the rule when the literal is negative), (\tbf{iii}) a description of the direct justification derived so far, initialized through $\initjust{}(l)$. \change{For this algorithm, we assume \f is such that different quantifiers range over different variables.}
 
\begin{algorithm}
\caption{The \buildconstr algorithm.}
\label{fig:buildconstr}
\SetKwFunction{buildc}{\buildconstr}
\myproc{\buildc{literal $l$, formula \f and justification $\tuple{L,B}$}}{
\KwIn{$B$ binds all free variables of \f}
\KwOut{either a direct justification or \algfalse}
\Switch{\f}{
\uCase{\f is a literal}{
	\lIf{$\textsf{valid}(l,\tuple{L\cup\{\f\},B})$}{\Return{$\tuple{L\cup\{\f\},B}$}}\label{build:pred}
	\lElse{\Return{\algfalse}}
}
\uCase{$\forall \typed{x}{D'} \psi$}{
	\Return{$\buildconstr(l, \psi, \tuple{L, B \cup\{x \in D'\}})$}\;
}
\uCase{$\exists \typed{x}{D'} \psi$}{
	\If{$\textsf{large}(D')$}{
		\Return{$\buildconstr(l, \psi, \tuple{L, B \cup\{x \in D'\}})$}\;
	}
	\ForEach{$d_i \in D'$}{
		$\tuple{L',B'}$ := $\buildconstr(l, \psi, \tuple{L, B \cup\{x \in \{d_i\}\}})$\;
		\lIf{$\tuple{L',B'} \neq \algfalse$\label{buildconstr-enougha}}{\Return{$\tuple{L',B'}$}}
	}
	\Return{\algfalse}\;
}
\uCase{$\f_1 \land \ldots \land \f_n$}{
	\ForEach{$i \in [1,n]$}{ 
		$\tuple{L',B'}$ := $\buildconstr(l, \f_i, \tuple{L, B})$\;
		\lIf{$\tuple{L',B'} = \algfalse$}{\Return{\algfalse}}
		\lElse{$\tuple{L,B} := \tuple{L',B'}$}
	}
	\Return{$\tuple{L,B}$}\;
}
\Case{$\f_1 \lor \ldots \lor \f_n$}{
	\ForEach{$i \in [1,n]$}{
		$\tuple{L',B'}$ := $\buildconstr(l,\f_i,\tuple{L,B})$\;
		\lIf{$\tuple{L',B'}\neq \algfalse$\label{buildconstr-enoughb}}{\Return{$\tuple{L',B'}$}}
	}
	\Return{\algfalse}\;
}
}
}
\end{algorithm}

Before going into details, we discuss how to represent direct justifications. Basically, we could represent a direct justification as a set of ground literals. However, this set can be quite large and using a ground representation might hence defy the purpose of lazy grounding. Instead, we represent a direct justification as a pair $\tuple{L,B}$ with $L$ a set of possibly non-ground literals and $B$ a set of bindings $x_i \in D_i$ with $x_i$ a variable and $D_i$ a domain. A set of bindings $B=\{x_1 \in D_1, \ldots, x_n \in D_n\}$ represents the set of variable substitutions $S_B=\{\{x_1\subs d_1, \ldots, x_n \subs d_n\} \mid d_i \in D_i \text{ for each } i \in [1,n]\}$. The set of ground literals represented by $\tuple{L,B}$ is then $\{l\theta \mid l \in L \mbox{ and } \theta \in S_B\}$. The direct justification of a literal $P(d_1,\ldots,d_n)$, defined by a rule $\forall \xxx \in \DDD: P(\xxx) \lrule \f$, is initialized by $\initjust{}(l)$ as $\tuple{\emptyset,\{x_1 \in \{d_1\}, \ldots, x_n \in \{d_n\}\}}$. In effect, $B$ allows to identify the relevant rule instantiation by providing the appropriate variable instantiation from the domains, while the set of literals is empty.

The \buildconstr algorithm  searches for a set of literals making \f true. It works by recursively calling itself on subformulas of \f and composing the results afterwards into a larger justification for \f. \change{When no such set of literals is found, for example because none exists that is consistent with all other direct justifications, $\algfalse$ is returned}.

The base case is when the formula is a literal. To make that literal true, all instances of the literal under the set of bindings $B$ must be true, hence, the set of literals $L$ is extended with the literal itself. The resulting direct justification has to satisfy all invariants, which is checked by the call to \textsf{valid}: it returns \algtrue for a call \textsf{valid}$(l,dj)$ if $dj$ satisfies the invariants to be (part of) a direct justification for $l$ and $\jgraph[l \rightarrow dj]$ satisfies the invariants on the justification.

A universally quantified formula $\forall \typed{x}{D} \psi$ has to be
true for each instance of the quantified variable. Hence, in the
recursive call, the set of bindings $B$ is extended with $x \in
D$. For an existentially quantified formula, it suffices that one
instance is true. Hence, a minimal approach is to try each instance
separately until one succeeds; if all fail, \algfalse is
returned. Note however that we do not want to iterate over each domain
element if $D$ is large, which would be similar to constructing the
grounding itself. Instead, if $D$ is large, we extend the binding with
$x\in D$. Conjunction is similar to universal quantification, except
that explicit iteration over each conjunct is needed. As soon as one
conjunct fails, the whole conjunction fails. Disjunction is similar to
existential quantification \change{with a small domain}.

\change{Note that \buildconstr is non-deterministic due to choices for a domain element to justify an existentially quantified formula, or for a disjunct to justify a disjunction.} 

\begin{example}\label{ex:build}
Consider the following rule over a large domain $D$. \[H \lrule \forall \typed{x}{D} \lnot P(x) \lor (\exists \typed{y}{D} Q(x,y) \land \lnot R(x,y))\] Assume that \jgraph is empty and we have no loops to keep track of. \change{Applying \buildconstr to $H$ returns $\tuple{\{\lnot P(x)\},\{x\in D\}}$ if the first disjunct $\lnot P(x)$ in the body is chosen. This corresponds to the direct justification $\{\lnot P(x) \mid x\in D\}$. Alternatively, if the second disjunct is chosen, it returns $\tuple{\{Q(x,y), \lnot R(x,y)\},\{x\in D, y\in D\}}$, which represents the direct justification $\{Q(x,y) \mid x\in D, y \in D\} \cup \{\lnot R(x,y) \mid x\in D, y \in D\}$.  }
\end{example}

\subsubsection{Partially Grounding a Rule}\label{sec:splitandground}
The last bit of the lazy model expansion algorithm handles the case where no justification can be found and the definition of  a literal $l$ is to be grounded.   A straightforward way  would be to call \groundone on the  rule defining $l$, and store  the result in \Dg and \Dd. 
However, in many cases such an operation results in too much grounding. 
\begin{example}\label{ex:splitting}
Consider a rule $r_1$ of the form $\forall x\in D: P(x) \lrule \f$ in a situation where no justification can be found for atom $P(d)$. Applying \groundone to $r_1$ would instantiate $x$ with all elements in $D$, resulting in $|D|$ rules, while in fact it  suffices to split $r_1$ in two rules, one for the instance $x=d$ and one for the remainder. \change{Another example applies to a rule $r_2$ of the form $H \lrule \forall x \in D: Q(x) \lor R(x)$ and a direct justification $\jgraph(H)=\{Q(x) \mid x\in D\}$. When $Q(d)$ becomes false, the justification manager may need to ground this rule. Applying \groundone to it would instantiate the universally quantified $x$ with all elements in $D$. Instead, it is better  to  split off the instance for $x=d$ and to  introduce a Tseitin $T$ for the remainder, producing $H\lrule (Q(d)\lor R(d)) \land T$ for \Dg and $T\lrule \forall x\in D-d: Q(x) \lor R(x)$ for \Dd. The direct justification for $T$ can be obtained  incrementally by removing $Q(d)$ from that of $H$, as discussed in Section~\ref{ssec:splitting}.}
\end{example}

The \splitandground algorithm (Algorithm~\ref{fig:splitandground}) has
to ground part of the rule defining a given literal $l$, say
$P\bracketddd$. The first step is to split off the rule instance for which
the rule defining $l$ has to be grounded (the call to \csplit). Let $\forall \xxx \in \DDD: P(\xxx) \lrule \f$ be the rule in \Dd that defines $P\bracketddd$. We then replace that rule by $\forall \xxx \in \DDD - \ddd: P(\xxx) \lrule \f$ in \Dd and additionally return the rule $P\bracketddd \lrule \f[\xxx\subs\ddd]$.
Afterwards, we apply \groundone to the latter rule and add the computed rules to either \Dg or \Dd. 
\footnote{Recall, the head of a new grounded rule is always different from the
head of already grounded rules.}

\begin{algorithm}
\caption{The \splitandground algorithm.}
\label{fig:splitandground}
\SetKwFunction{sag}{\splitandground}
\myproc{\sag{literal $l$}}{
\KwIn{$l$ is defined in \Dd}
\KwResult{update to \Dg, \Dd, \jgraph, and \changes}
$r$ := \csplit{}($l$); // \csplit updates \Dd \\
$(\Dg', \Dd')$ := \groundone{}($r$)\;
\Dg $\cup=$ \Dg'; \Dd $\cup=$ \Dd'\;
}
\end{algorithm}

The result of \splitandground is that definition \Dgd is ``more'' ground than the previous one. The limit is a ground definition \Dgd in which \Dd is empty and \Dg is strongly $\vocf{\D}$-equivalent with \D.

Even if no justification was found, we can do better than just splitting off $l$ and applying \groundone, as shown in Example~\ref{ex:splitting}. First, splitting can be made significantly more intelligent, which is discussed in Section~\ref{ssec:splitting}. Second, we can improve \groundone to only ground part of expressions if possible, which we describe below. 

\paragraph{Improving \groundone.}
Applying \groundone to a rule $l \lrule \f$ iterates over all
subformulas/instantiations of \f. For example if \f is the sentence
$\exists x \in D: P(x)$, the result consists of $|D|$ new rules and as
many new Tseitin symbols. Instead, depending on the value of $l$, 
\change{it is sufficient to introduce only} one (or some) of these
subformulas, as shown in Algorithm~\ref{algo:improveground}, which
extends the switch statement of \groundone with two higher-priority
cases. If $l$ is true, \change{it is sufficient to ground} one disjunct/existential instantiation and \change{to} delay the rest by Tseitin introduction. If $l$ is false, we take a similar approach for conjunction/universal quantification.

\begin{algorithm}
\caption{Additional cases for the \groundone algorithm.}
\label{algo:improveground}
\Switch{$r$}{
	\uCase{$l \lrule \f_1 \lor \ldots \lor \f_n$ and $\I(l) = \ltrue$}{
		choose $i \in [1,n]$\; \label{algo:selectonea}
		\Return{$\langle \{l \lrule \f_i \lor T\},
                  \{T \lrule \bigor_{j \in \{1...n\} - i}\f_j\}\rangle$}\;		
	}
	\uCase{$l \lrule \exists \typed{x}{D} \f$ and $\I(l) = \ltrue$}{
		choose $d \in D$\; \label{algo:selectoneb}
		\Return{$\langle \{l \lrule \f[x\subs d] \lor T\},
                  \{T \lrule \exists \typed{x}{D-d} \f\}\rangle$}\;	
	}
	\change{analogous cases for $\land$ and $\forall$ in
          combination with $\I(l) = \lfalse$.}\\
}
\end{algorithm}

\subsubsection{Algorithmic Properties}
Correctness and termination of the presented algorithms is discussed in the following theorem.
\begin{theorem}[Correctness and termination]
If \lazymx returns an interpretation \I, then expanding \I
with the literals in the direct justifications of \jgraph,
applying the well-founded evaluation over \Dgd and restricting it to
$\vocf{\theory}$ results in a model of \theory. If the algorithm
returns \algfalse, no interpretation exists that %which 
is more precise than \Iin and satisfies \theory.

%Algorithm~\lazymx terminates if \I is finite. 
Algorithm~\lazymx terminates if \I is over a finite domain $D$.
Otherwise, termination is possible but not guaranteed.\footnote{It is possible to change the integration of \lazyground in \lazymx to guarantee termination if a finite model exists, see Section~\ref{ssec:heur}.}
\end{theorem}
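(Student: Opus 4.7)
The plan is to address the three claims separately, in each case reducing to the semantic results of Section~\ref{sec:lazy-foid}.

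For soundness of positive answers, I would prove by induction on the main loop of \lazymx that, whenever the queue \changes is empty, the tuple $\tuple{\Dg,\Dd,\jgraph,\I}$ is a default acceptable state in the sense of Definition~\ref{def:default_acc_state}. The base case is immediate for the initial state. For the inductive step, I would check preservation for each state-changing operation: \propagate followed by the loop over \checkliteral queues exactly the literals whose direct justifications could have become invalid (either newly true in \Dd without a justification, or whose negation appears in some $\jgraph(l')$), so \jgraph continues to justify the set of literals on which it remains defined and which are not on the queue; backjumping preserves acceptability by Proposition~\ref{prop:backward}; \lazyground either extends \jgraph with a direct justification that passes the local validity test or calls \splitandground, in which case the defining rule instance of $l$ is moved into \Dg so that $l$ no longer needs to be justified by \jgraph; and \splitandground itself preserves the strong $\vocf{\D}$-equivalence of \Dgd with \D and keeps the heads defined by \Dg and \Dd disjoint, since \csplit removes exactly the split tuple from the universally quantified head of a rule and \groundone is a Tseitin-style equivalence-preserving transformation. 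At the exit point on line~\ref{stop-early}, \changes is empty, \pt is true in \I, and \I is a model of $\Sg \cup \Dg$; Theorem~\ref{theo:acceptablemodel} then yields a model of \theory expanding $\restr{\I}{\vocf{\Dg}}$, obtained concretely by extending \I with the literals of the direct justifications in \jgraph and completing the result through well-founded evaluation over \Dgd.

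For soundness of negative answers, I would rely on the standard property that every nogood returned by \learnnogood is a logical consequence of $\Sg \cup \Dg$ at the time of derivation, and that since these sets grow monotonically, all accumulated nogoods remain consequences throughout the run. A conflict at the root level therefore certifies that $\Sg \cup \Dg$ has no model that expands \Iin. Combining this with the invariant that $\Sg \cup \Dgd$ is strongly $\vocf{\theory}$-equivalent with \theory and that the predicates defined by \Dg and \Dd are disjoint, every model of $\Sg \cup \Dgd$ is also a model of $\Sg \cup \Dg$; consequently, \theory has no model expanding \Iin either.

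For termination on a finite domain $D$, I would use a lexicographic measure $(n_g, n_s)$, where $n_g$ counts the number of ground rules of the complete grounding of the original theory that are not yet subsumed by a rule in \Dg, and $n_s$ bounds the work left to the underlying CDCL-style search on the current $\Sg \cup \Dg$ before finding a model, declaring unsatisfiability, or triggering a call to \splitandground. Each call to \splitandground strictly decreases $n_g$ because the split rule instance is moved into \Dg, and \groundone can introduce only finitely many Tseitin-defined rules per call whose eventual grounding is again bounded by the finite total grounding. Between such calls, the search behaves as a standard CDCL procedure over a finite ground theory, whose termination follows from the usual arguments about bounded decision levels and finitely many learned clauses up to subsumption. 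The main obstacle is arguing that \buildconstr, \lazyground and the interleaved draining of \changes all terminate in finite time; this reduces to structural induction on the formula passed to \buildconstr (recursive calls are on strict subformulas, and iteration over small domains is over a finite set) and to the observation that \changes can only accumulate finitely many distinct literals between two propagation passes. Termination fails for infinite domains because a single existential over an infinite $D$ can cause \splitandground to produce an infinite sequence of residual rules, so $n_g$ need not be finite.
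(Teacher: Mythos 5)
Your proposal is correct and follows essentially the same route as the paper's proof: the positive case reduces to showing the exit point is a default acceptable state and invoking Theorem~\ref{theo:acceptablemodel}, the negative case lifts unsatisfiability of $\Sg\cup\Dg$ to \theory via the equivalence of \Dgd with \D, and termination rests on the finite full grounding being the limit of the ever-growing \Dg. Your version merely makes explicit (via the loop invariant and the lexicographic measure) what the paper leaves as a one-line appeal to the properties of \splitandground and the finiteness of the grounding.
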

\begin{proof}
  If \lazymx returns an interpretation \I, \I is a model of \Dg and
  \changes is empty. Given the properties of \splitandground, after
  applying \lazyground for a literal $l$, \change{either $l$ has a
    valid justification or is defined in \Dg. Hence if \changes is
  empty, we are in a default acceptable state and, by
  Theorem~\ref{theo:acceptablemodel}, \I can be expanded into a model
  of the whole theory.} If \lazymx returns \algfalse, it
  has been proven that \Dg has no models in \Iin. In that case, there
  can also be no models of \Dgd and hence \theory also has no models
  expanding \Iin.

Without calls to \lazyground, the search algorithm terminates for any
finite \Dg; the function \lazyground itself produces an ever-increasing ground
theory \Dg with the full grounding as limit. Hence, \lazymx always
terminates if $D$ %\Iin 
is finite.
If $D$ %\Iin 
is infinite, the limit of \Dg is an infinite grounding, so termination cannot be guaranteed.
\end{proof}

\subsubsection{Symbolic Justifications, Incremental Querying and Splitting}\label{ssec:splitting}
The algorithms presented above are sound and complete. %There are however two important observations that create room for improvements, by not treating justifications just as a set of literals but taking the formula from which they were derived into account.
\change{However, they can be further improved by taking the formulas from which
justifications are derived into account.}

\paragraph{Symbolic justifications and incremental querying.}
%First, we observe that if multiple justifications exist for
%(subformulas of) a formula \f, we can delay grounding even more.
%
\change{When multiple justifications exist for (subformulas of) a
  formula \f, grounding can be further delayed.}
\begin{example}
Consider the formula $\forall \typed{x}{D} P(x) \lor Q(x)$, where both
$\tuple{\{P(x)\},\{x\in D\}}$ and $\tuple{\{Q(x)\},\{x\in D\}}$ are
justifications. From that, we could derive the justification: for each
$d \in D$, make either $P(d)$ \emph{or} $Q(d)$ true. Hence, more grounding is necessary
\emph{only} when both $P(d)$ and $Q(d)$ become false for the same $d$.
\end{example}

We can do this automatically by changing \buildconstr as follows: 
\begin{itemize}
  \item The algorithm is allowed to select multiple disjunctions / existential quantifications even if a valid justification was already found for one (Lines~\ref{buildconstr-enougha} and \ref{buildconstr-enoughb}).
  \item \buildconstr no longer returns a justification, but a
    \change{symbolic} \emph{justification formula} that entails the original formula. The formula is built during \buildconstr and reflects the subformulas/instantiations that have been selected. From $\psi$, the justification itself can be derived directly as the set of non-false literals in the (full) grounding of $\psi$.
  For example, for a formula $\forall x \in D: P(x) \lor Q(x)$,
  instead of the justification $\{P(x) \mid x \in D\}$, \buildconstr
  might now return \change{ the justification formula $\forall x \in
    D: P(x) \lor   Q(x)$.}
  \item The validity check (\textsf{valid}) is extended to return false if the justification formula is false.
\end{itemize}
By allowing more complex formulas (instead of a conjunction
of universally quantified literals), \change{the validity check
  ---whether a formula has become false after incremental changes to
  \I--- is now more expensive.  This is in fact an \emph{incremental
    query} problem. In the experiments, we limit the depth of the
  allowed formulas and use a straightforward (expensive) algorithm
  that evaluates the whole formula whenever an assignment can falsify
  it.}
  
\paragraph{Body splitting.}
\change{As described in Algorithm~\ref{fig:lazyground} of
  Section~\ref{sec:splitandground}, \csplit simply splits off the rule
  instance that defines $l$, then \groundone grounds this rule
  instance step by step, in accordance with the structure of the
  formula. However, when the grounding is triggered by a conflict with
  the current justification, \groundone is blind for the origin of the
  conflict. By using the conflicting literals, one could focus on
  grounding the part of the formula that contributes to the
  conflict. One can do so by restructuring the rule in a part to be
  grounded, that contains the conflict, and a part not to be grounded,
  for which the old justification can be adjusted to still apply. The latter part can then
  be split off by introducing new Tseitins and the transformation
  is called body splitting. This approach can be inserted in
  Algorithm~\ref{fig:lazyground} after the call to \csplit. For this, the original justification (call it $j_{old}$) is passed as an extra argument to \splitandground.}

\begin{example}
Consider the rule $h \lrule \forall \typed{x}{D} P(x)$; let $\tuple{\{P(x)\},\{x\in D\}}$ be the justification  for $h$ true in \I.
% Consider the rule $h \lrule \forall \typed{x}{D} P(x)$ with  the
% justification $\tuple{\{P(x)\},\{x\in D\}}$ for $h$ true in \I.
When $P(d)$
becomes \change{false}, it is easy to see that we can split off the ``violating
instantiation'' by rewriting the original rule into $h \lrule P(d) \land T$
and adding the rule $T\lrule \forall \typed{x}{D-d} P(x)$. Crucially, a
justification for the  \change{second part}
can be derived from the original justification,
namely  $\tuple{\{P(x)\},\{x\in D-d\}}$. 
The  \change{second part} can hence be added to \Dd and its justification to
\jgraph while the \change{first part is added to $\Dg$}.
\end{example}

\change{This revision of a rule $r$ and its direct justification $j_{old}$ can be
  done in an efficient way.  Assume that $v$ is a true domain literal
  in the partial structure and that the direct justification of rule
  $r$ contains its negation $\lnot v$. The implementation is such
  that the binding(s) for which the
  justification instantiates to $\lnot v$ can be extracted from the
  representation of the direct justification of the rule. For
  simplicity, assume %$\{\xxx=\ddd_1, \ldots, \xxx=\ddd_n\}$ 
  $\{\xxx=d_1, \ldots, d_n\}$ is the
  single such instance. A recursive algorithm visits the formula \f in
  the body of $r$ depth-first. Whenever a quantification $\forall
  \typed{x}{D} \f$ is encountered with $x$ equal to $x_j\in \xxx$, it
  is replaced by $(\forall \typed{x}{D-d_j} \f) \land \f[x=d_j]$. Then
  a Tseitin transformation is applied to the left-hand conjunct and
  the algorithm recurses on the right-hand conjunct with what remains
  of the binding. The new rule defining the new Tseitin has
  $j_{old}-v$ as a direct justification. Similarly, an existential
  quantification is replaced by a disjunction.  }
The result is a set of new rules for which no new justification has to
be sought and a smaller rule $r'$ which is passed to
\groundone. Correctness follows from the fact the $j_{old}-v$ is a
valid justification, none of the new rules contains $v$, and from the
correctness of the Tseitin transformation. 

\begin{example}\label{ex:bodysplit}
In Example~\ref{ex:build}, justifications were sought for $H$ in the
rule \[H \lrule \forall \typed{x}{D} \lnot P(x) \lor (\exists
\typed{y}{D} Q(x,y) \land \lnot R(x,y)).\] Assume we selected the
justification $\jgraph=\{Q(x,y) \mid x\in D, y \in D\} \cup \{\lnot
R(x,y) \mid x\in D, y \in D\}$. 
\change{When $P(d_1)$ is true in \I, and $l=Q(d_1,d_2)$ becomes false,
  $\jgraph$ is no longer consistent with \I and cannot be repaired.}
%When $l=Q(d_1,d_2)$ becomes false, $\jgraph$ is no longer consistent
%with \I; 
$\jgraph-l$, however, is still consistent with \I, but is not a
justification of the whole body. On the other hand, $\jgraph-l$ is a
justification for the subformula $\lnot P(x) \lor \exists \typed{y}{D}
Q(x,y) \land \lnot R(x,y)$ for each instantiation of $x$ different
from $d_1$. Consequently, we can split the quantification $\forall
\typed{x}{D}$ into $x\in D-d_1$ and $x=d_1$ and apply a Tseitin
transformation to the former. Afterwards, we recursively visit the
latter formula and apply a similar reasoning to the existential
quantification. The operations on the formula are illustrated in
Figure~\ref{fig:split}. The result consists of the following rules, where the rule for $H$ is now even ground. 
\begin{ltheo}
\begin{ldef}
\LRule{H}{T_1 \land (\lnot P(d_1) \lor (T_2 \lor (Q(d_1,d_2) \land \lnot R(d_1,d_2))))}\\
\LRule{T_1}{\forall \typed{x}{D - d_1} \lnot P(x) \lor \exists \typed{y}{D} Q(x,y) \land \lnot R(x,y)}\\
\LRule{T_2}{\exists \typed{y}{D- d_2} Q(d_1,y) \land \lnot R(d_1,y)}
\end{ldef}
\end{ltheo}

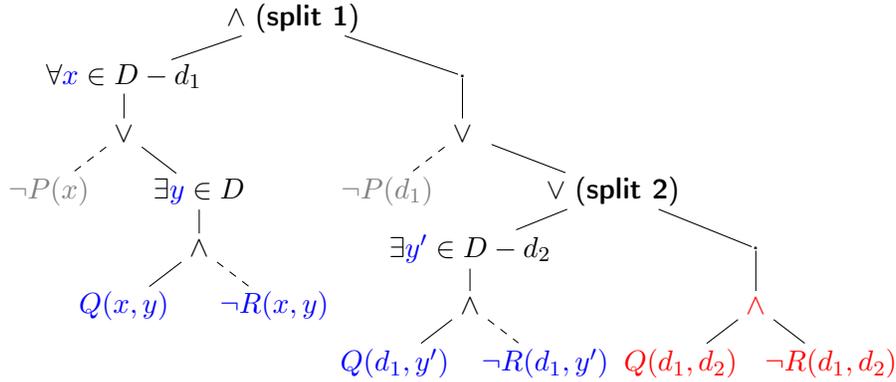
\begin{figure}
\centering
\begin{tikzpicture}[-,>=stealth',level 1/.style={sibling distance=4.5cm},level 2/.style={sibling distance=2cm},level 3/.style={sibling distance=2cm},level 4/.style={sibling distance=3.8cm},level 5/.style={sibling distance=2cm},level distance = 2em]
\node [empty_node] {$\land$ (\textbf{split 1})}
child{ node [node] {$\forall \blue x \black \in D - d_1$}
	child[normal]{ node [node] (OR) {$\lor$} 
		child[halfline]{ node [r_node] {\gray$\lnot P(x)$}}
		child{ node [node] {$\exists \blue y \black \in D$}
			child{ node [node] {$\land$}
				child{ node [node] {\blue$Q(x,y)$}}
				child[halfline]{ node [r_node] {\blue$\lnot R(x,y)$}}
			}
		}
	}
}
child{node [empty_node] {.}
	child{ node [node] (OR2) {$\lor$}
		child[halfline]{ node [r_node] (LN) {\gray$\lnot P(d_1)$}	}
		child{ node [empty_node] at ($(LN)+(3,0)$) {$\lor$ (\textbf{split 2})}
			child[normal]{ node [node] {$\exists \blue y' \black \in D - d_2$}
				child[normal]{ node [node] (OR3) {$\land$}
					child{ node [node] {\blue$Q(d_1,y')$}}
					child[halfline]{ node [r_node] {\blue$\lnot R(d_1,y')$}}
				}
			}
			child{node [empty_node] {.}
				child[normal]{ node [node] at ($(OR3)+(3.8,0)$) {\red$\land$}
					child{ node [node] {\red$Q(d_1,d_2)$}}
					child{ node [node] {\red$\lnot R(d_1,d_2)$}}
				}
			}
		}
	}
}
;
\end{tikzpicture}
\caption{The rule body $\forall \typed{x}{D} \lnot P(x) \lor \exists \typed{y}{D} Q(x,y) \land \lnot R(x,y)$ is split for a violating literal $Q(d_1,d_2)$. The original justification without $Q(d_1,d_2)$ is a justification of the left-hand side of all splits, with the justification formula shown in blue. The remaining non-justified formula is shown in red.
}
\label{fig:split}
\end{figure}

\end{example}

\change{To further optimize the traversal of the formula \f,
\buildconstr can be extended to store the path taken through the parse
tree of \f and the direct justifications of the subformulas.}

\begin{example}
Assume the rule $C_1 \lrule \forall \typed{(x~y)}{D^2} \lnot edge(x,y) \lor
edge(y,x)$ has to be justified, \jgraph is empty and \I does not
interpret $edge$. The  \buildconstr algorithm recursively visits the body of the rule
until $\lnot edge(x,y)$ is returned as it is a valid
literal to use. Going up one level, we store that for $\lnot edge(x,y) \lor
edge(y,x)$, we selected $\{\lnot edge(x,y)\}$. Assuming no more disjuncts
are selected, $\lnot edge(x,y)$ is returned again. Going back up through
both quantifications, we store that, for both quantifications, we selected
$D$ as the set of relevant domain elements, and \buildconstr returns the
justification formula 
$\forall \typed{(x~y)}{D^2} \lnot edge(x,y)$.
\end{example}

If \buildconstr is given access to $j_{old}$, similar optimizations are possible for repairing a direct justification. Consider again Example~\ref{ex:bodysplit}, but assume  $P(d_1)$ is unknown in \I. In this case, the left branch of Figure~\ref{fig:split} can also be transformed in a rule with a still valid, direct justification. For the right branch, a repair is to select the disjunct $\lnot P(d_1)$ as direct justification for the rule
$H \lrule {T_1 \land  (\lnot P(d_1) \lor (\exists \typed{y}{(D - d_1)} Q(d_1,y) \land \lnot R(d_1,y)))}$, where $T_1$ is as in Example~\ref{ex:bodysplit}.

\subsection{The Global Approach}\label{ssec:global}
Finding justifications using the greedy local approach can easily lead to more grounding than necessary. Consider for example the sentences $\forall \typed{x}{D} P(x)$ and $\forall \typed{x}{D} P(x) \limplies Q(x)$. Applying the local approach to the second sentence first (with an empty \jgraph), might result in the construction which makes atoms over $P$ false. 
Then applying the local approach to the first sentence finds no
valid justification for it; so it has to be fully grounded.
The global approach takes a set of rules as input and tries to select direct justifications for them such that the \emph{expected} grounding size of the whole set is minimal.

\change{We cast the task, called the \emph{optimal justification}
  problem, as a problem on a graph as follows. The graph consists of
  two types of nodes, \emph{rule nodes} $R$ and \emph{justification
    nodes} $J$. A justification node is a symbolic set of literals
  representing a possible justification (for the literals defined by a
  rule in \Dd, given the current partial structure \I). A rule node
  is a pair $\tuple{r,t}$ of a rule $r$ in \Dd and a truth value $t$
  (\ltrue, \lfalse, \lunkn); A pair $\tuple{r,\ltrue}$ for a rule $r$
  with head $l$ expresses that there exists a direct justification for
  $l$; the pair $\tuple{r,\lfalse}$ that there exists a direct
  justification for $\lnot l$ and the pair $\tuple{r,\lunkn}$ that
  r has no justification.}

\change{There are three types of edges: 
\begin{itemize}
  \item \emph{Valid edges} between a rule node $\tuple{r,\ltrue}$ ($\tuple{r,\lfalse}$) and a justification node $j$ where $j$ justifies (the negation of) the head of $r$.
  \item \emph{Conflict edges} between (\tbf{i}) rule nodes of the same
    rule \change{with different truth value}, (\tbf{ii}) justification nodes that contain opposite
    literals, (\tbf{iii}) a rule node $\tuple{r,\ltrue}$ 
    ($\tuple{r, \lfalse}$) where $r$ defines
    $l$, and a justification node that contains $\lnot l$ ($l$), and
    (\tbf{iv}) a rule node $\tuple{r, \lunkn}$, where $r$ defines
    $l$, and a justification node that contains $l$ or $\lnot l$ (a
    conflict because $l$ (or $\lnot l$) needs a justification).
  \item \emph{Depends-on edges} between a justification node $j$ and a rule node $\tuple{r,\ltrue}$ ($\tuple{r, \lfalse}$) where $j$ contains negative (positive) literals defined by $r$. 
\end{itemize}
The aim is then to select subsets $R_{sel} \subseteq R$ and $J_{sel} \subseteq J$ such that
\begin{itemize}
\item Each selected rule node is connected with a valid edge to at least
  one selected justification node.
  \item No conflict edges exist between pairs of selected nodes.
  \item Neither positive nor mixed cycles exist in the subgraph
    consisting of the valid and depends-on edges between the selected nodes. 
\end{itemize}}
\change{From a selection $\{R_{sel}, J_{sel}\}$ satisfying these
  constraints, an initial justification \jgraph can be extracted as follows. A
  literal $l$ ($\lnot l$) is given a direct justification if it is
  defined by a rule $r$ such that $\tuple{r,\ltrue}$
  ($\tuple{r,\lfalse}$) is a selected rule. Its direct justification
  is  the union of the justifications of the justification nodes in
  $J_{sel}$ connected with $\tuple{r,\ltrue}$
  ($\tuple{r,\lfalse}$) through a valid edge. Moreover, all literals
  defined by rules for which no rule node was selected can be added
  to the initial \changes queue, to be handled by the local approach,
  as a complete solution must have a justification for them. When
  $\tuple{r,\lunkn}$ is selected, it means that the grounding of instances of the rule is delayed until the literals it defines become assigned.}

This type of problem is somewhat related to the NP-hard \emph{hitting set} (or \emph{set cover}) problem~\cite{Kar72}: given a set of ``top'' and ``bottom'' nodes and edges between them, the task is to find a minimal set of bottom nodes such that each top node has an edge to at least one selected bottom node.

Given a default acceptable state $\tuple{\Dg,\Dd,\jgraph,\I}$, the
input for the optimal justification problem is generated as
follows. For any rule in \Dd, a node is constructed for each of the
three truth values (only one when the truth value is known in \I) and
also their conflict edges are added. \change{ Valid edges and
  justification nodes} are obtained using a (straightforward) adaptation of \buildconstr \change{that} returns a set of possible justifications that make the head of a rule true (false). E.g., for a rule $\forall \xxx \in \DDD: P(\xxx) \lrule \f$, \buildconstr is called with the literal $P(\xxx)$ and the binding is initialized at $\{\xxx \in \DDD\}$. Conflict and depends-on edges are derived by checking dependencies between justifications and between rules and justifications. To keep this efficient, it is done on the symbolic level.

\begin{example}\label{ex:global}
Consider the theory of our running example, after \pt, $C_1$ and $C_2$ have been propagated to be true. Definition \Dd is then
\[\begin{array}{l}
\left\{\begin{array}{rlr}
C_1&\lrule \exists \typed{x}{D} \lnot \rt(x) \land R(x) & (2)\\
C_2&\lrule \forall \typed{(x~y)}{D^2} edge(x,y) \limplies edge(y,x)  & (3)\\
\forall \typed{x}{D} \rt(x)&\lrule x=d_1  & (4)\\
\forall \typed{x}{D} R(x)&\lrule \rt(x) \lor \exists \typed{y}{D} edge(x,y)\land R(y)  & (5)
\end{array}\right\}
\end{array}\]

The associated optimal construction set input is shown in
  Figure~\ref{fig:ocs}. \change{Note that in rule nodes, we use the defined head literals to identify the rule.} Literal $C_1$ and $C_2$ are true in \I, hence
  there is only one rule node for rules (2) and (3). Neither $root(x)$
  nor $\lnot root(x)$ can be justified for all $x \in D$, hence
  \change{there is only a  $\tuple{root, \lunkn}$ tuple.}
%they have   been omitted. 
  
  There are four solutions that are subset-maximal with respect to rule nodes, namely the following rule node selections:
\begin{align*}
&\{\tuple{R,\lunkn},\tuple{root,\lunkn},\tuple{C_2,\ltrue}\} \tag{a}\label{rulenodea} \\
&\{\tuple{R,\lfalse},\tuple{C_2,\ltrue}\} \tag{b}\\
&\{\tuple{R,\ltrue},\tuple{C_2,\ltrue}\} \tag{c}\\
&\{\tuple{C_1,\ltrue},\tuple{C_2,\ltrue} \} \tag{d}
\end{align*}

For each of these, multiple justification selections are possible (also
shown in Figure~\ref{fig:ocs}). For $C_1$, we have to select justification
$(iv)$, but for $C_2$ we can choose from $(v)$ or $(vi)$ (but not both).

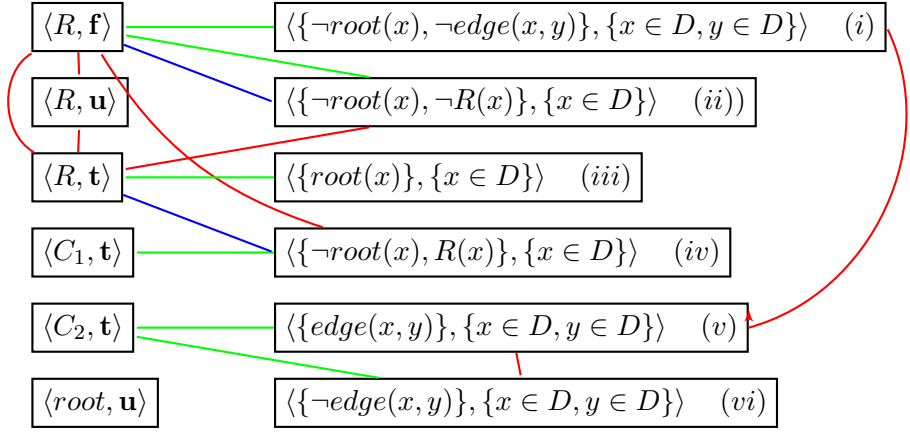
\begin{figure}
\centering
\begin{tikzpicture}[->,>=stealth',shorten >=1pt,node distance=1cm, thick]
    \node [block, anchor=west] (rf) {$\tuple{R, \lfalse}$};
    \node [block, right=2cm of rf, anchor=west] (jrfa) {$\tuple{\{\lnot root(x), \lnot edge(x,y)\}, \{x \in D, y \in D\}} \quad (i)$};
    \node [block, below=of jrfa.west, anchor=west] (jrfb) {$\tuple{\{\lnot root(x), \lnot R(x)\}, \{x \in D\}} \quad (ii))$};
    \node [block, below=of rf.west, anchor=west] (ru) {$\tuple{R,\lunkn}$};
    \node [block, below=of ru.west, anchor=west] (rt) {$\tuple{R, \ltrue}$};
    \node [block, below=of jrfb.west, anchor=west] (jrt) {$\tuple{\{root(x)\}, \{x \in D\}} \quad (iii)$};
    
    \node [block, below=of rt.west, anchor=west] (c1t) {$\tuple{C_1, \ltrue}$};
    \node [block, below=of jrt.west, anchor=west] (jc1t) {$\tuple{\{\lnot root(x), R(x)\}, \{x \in D\}} \quad (iv)$};
    
    \node [block, below=of c1t.west, anchor=west] (c2t) {$\tuple{C_2, \ltrue}$};
    \node [block, below=of jc1t.west, anchor=west] (jc2at) {$\tuple{\{edge(x,y)\}, \{x \in D, y \in D\}} \quad (v)$};
    \node [block, below=of jc2at.west, anchor=west] (jc2bt) {$\tuple{\{\lnot edge(x,y)\}, \{x \in D, y \in D\}} \quad (vi)$};
    
    \node [block, below=of c2t.west, anchor=west] (rootu) {$\tuple{root, \lunkn}$};
    
    % Draw edges
    \path [line,red] (rt) edge[-,bend left=+60] (rf);
    \path [line,red] (rt) edge[-] (ru);
    \path [line,red] (rf) edge[-] (ru);
    
    \path [line,red] (jrfb) edge[-] (rt);
    \path [line,red] (jc1t) edge[-,bend left=+20] (rf);
    
    \path [line,green] (jc1t) edge[-] (c1t);
    
    \path [line,green] (jc2at) edge[-] (c2t);
    \path [line,green] (jc2bt) edge[-] (c2t);
    \path [line,red] (jc2at) edge[-] (jc2bt);
    \path [line,red] (jc2at.east) edge[-,bend left=-50] (jrfa.east);
    
    \path [line,green] (jrt) edge[-] (rt);
    
    \path [line,green] (jrfa) edge[-] (rf);
    \path [line,green] (jrfb) edge[-] (rf);
    
    \path [line,blue] (rt) edge[-] (jc1t.west);
     \path [line,blue] (rf) edge[-] (jrfb.west);
\end{tikzpicture}
\caption{The graph that is part of the input to the optimal justification problem of Example~\ref{ex:global}. Rule nodes are shown on the left, justification nodes on the right; valid edges are shown in green, conflict edges in red and depends-on edges in blue. For readability, conflicts between justifications and unknown rule nodes are not shown.}
\label{fig:ocs}
\end{figure}

\end{example}

The objective is then not to maximize the number of selected rule nodes, but to minimize the expected grounding size. To obtain an estimate of the expected grounding size, the following conditions were taken into account:
\begin{itemize}
  \item It should depend on the size of the grounding of the rule.
  \item Assigning multiple justifications to a rule should result in
  a lower estimate as the rule will only be grounded when all are false.
  \item Variables occurring in multiple justifications result in less matching instantiations.
  \item In most practical applications, the number of false atoms far exceeds the number of true ones in any model. Hence, positive literals in a justification should have a higher cost than negative ones. 
\end{itemize}

We approximate the expected grounding size with the function
$exp_{size}$ which takes as input a rule $r$ \change{(with head $h$)},
the selected type of justification (rule not selected (\tbf{n}), no
justification (\lunkn), \change{a justification for $h$ (\ltrue) or a
justification for $\lnot h$ (\lfalse))} and the set of justifications
$J$. \change{The function returns the expected grounding size of the
  rule ($size(r)$, defined below) weighted depending on the type of
  justification. The weights are derived from two estimates: $p_{val}$
  is the probability an atom will become assigned and $p_{tr}$ is the
  probability of an assigned atom to be true. Hence, as defined
  formally below, for non-delayed rules (\tbf{n}), the full size is
  used; a rule without justification (\tbf{u}) is weighted by
  $p_{val}$; a true justification (\tbf{t}) is weighted by $p_{tr}$, a
  false one (\tbf{f}) by $1-p_{tr}$; the latter two weights are
  multiplied by a product over the justifications $j$ in $J$. Each factor
  of this product is a sum of two terms, namely $p_{\ltrue}$ times the
  number of negative literals in $j$  and $1-p_{tr}$ times the number
  of positive literals in $j$. The effect is that the
  expected size decreases as the number of justifications increases
  and that the expected size increases as a justification has more
  literals. 
\begin{align*}
exp_{size}(r,\tbf{n},\emptyset) &= size(r)\\
exp_{size}(r,\lunkn,\emptyset) &= size(r)\times p_{val} \\
exp_{size}(r,\lfalse,J) &= size(r)\times p_{val} \times (1-p_{tr}) \times \prod_{j \in J} ((1-p_{tr}) \times | \text{pos. lits.}\in j| \\
&\qquad\qquad\qquad\qquad\qquad+ p_{tr} \times |\text{neg. lits.} \in j|) \\ 
exp_{size}(r,\ltrue,J) &= size(r) \times p_{val} \times p_{tr} \times \prod_{j \in J} ((1-p_{tr}) \times | \text{pos. lits.}\in j| \\
&\qquad\qquad\qquad\qquad\qquad\qquad+ p_{tr} \times |\text{neg. lits.} \in j|)
\end{align*}
For the probabilities, we assumed $p_{val}$ to be small (currently 0.1) to reflect the hope that lots of literals will not get a value, and $p_{tr}$ is less than half, to reflect that atoms are more often assigned false than true.}
\change{The function $size$ is defined below. The function returns the number of atoms in the grounding of a rule or formula, except for existential quantification and disjunction. For these, we take into account that they can be grounded only partially using Tseitin transformation, by taking the logarithm of the total grounding size.
\begin{align*}
size(L)&=1 \\
size(L \lrule \f)&=size(\f)+1 \\
size(\forall \typed{x}{D} \f)&=D\times size(\f) \\
size(\phi_1 \land \ldots \land \f_n)&=\sum_{i \in [1,n]}size(\f_i) \\ 
size(\exists \typed{x}{D} \f)&=log(D)\times size(\f) \\
size(\phi_1 \lor \ldots \lor \f_n)&=log(n)\times\frac{\sum_{i \in [1,n]}size(\f_i)}{n}
\end{align*}}
Solutions to the optimal justification problem should minimize the
term \[\sum_{r \in \Dd} exp_{size}(r, t(r),J(r))\] with $t(r)$ 
\change{the type (\ltrue, \lfalse, or \lunkn) and $J(r)$ the justification of the literal defined by $r$.}

 \begin{example}[Continued from Example~\ref{ex:global}]
The size of rule $C_1$ is $1 + log(D)\times 2$, that of $C_2$ is $1 + D^2 \times log(2)$,  that of $root$ is $D \times (1 + 1)$, and that of $R$ is $D \times (1 + log(2) \times (1 + log(D) \times 2)/2$.
Consider assigning justification (iv) to $C_1$: this results in an expected cost for that rule of $(1 + log(D) \times 2) \times 0.3 \times 1$ (as the construction relies on making $R$ true). Additionally, it would force the grounding of the rule defining $R$, increasing the cost with the size of the rule for $R$.
The optimal solution for the problem in Figure~\ref{fig:ocs} is then rule node selection~(\ref{rulenodea}) with justification (vi) for $C_2$. Its cost is the sum of $(1+D^2\times log(2) \times 0.3$ (for justification (vi)) and $1+ log(D) \times 2$ (the expected size of the rule for $C_1$). Now, only the rule for $C_1$ has to be passed to the local approach.
 \end{example}

To solve the optimal justification problem, \idp's optimization inference itself is applied to a (meta-level) declarative specification of the task.\footnote{The specification is part of \idp's public distribution.} For larger theories \theory, the problem turns out to be quite hard, so two approximations were considered to reduce the search space. First, the number of selected justifications for any rule was limited to 2. Second, as the values of $size(r)$ can grow quite large, the approximation $\lceil\log(size(r))\rceil$ is used (a standard approach). \change{Rounding to integer values was applied as \idp's support for floating point number is still preliminary.} The resulting specification could be solved to optimality within seconds for all tested theories. 
During lazy model expansion, the global approach is applied in the initial phase when all Tseitin literals representing sentences in the original theory have been propagated true.

\section{Heuristics and Inference Tasks}\label{sec:optimizations}
\change{This section discusses how to tune the lazy grounding and the
  search heuristics of the underlying SAT solver to obtain an
  effective implementation of lazy model expansion. We also describe a
  few other inferences tasks beyond model expansion that are useful in
  the context of lazy grounding. A few less important issues are discussed in Appendix~\ref{app:furtherconsid}. }

\subsection{Heuristics}\label{ssec:heur}
Heuristics play an important role in the lazy grounding algorithms, as they serve to find the right balance between how much to ground and how long to search. We first discuss how our heuristics were chosen. Afterwards, we discuss an alternative approach to minimize  grounding.

\subsubsection{The Balance between Grounding and Search}
The algorithms leave room for a number of heuristic choices that can have an important effect on the performance. We now briefly discuss these choices. As a guideline for our decisions, the following principles are used:
\begin{itemize}
  \item Avoid leaving the search process without enough information to make an informed decision; for example, avoid losing too much (unit) propagation or introducing too many Tseitin symbols.
  \item Prevent creating a grounding that is too large; this may for example happen as the result of a very long propagate-ground sequence.
\end{itemize}
Recall, the goal is not to create a minimal grounding, but to solve model expansion problems while avoiding a too large grounding.

\change{Below, we introduce a number of parameters that affect these heuristics. The exact values used in the experimental evaluation for the parameters introduced below are specified in Appendix~\ref{app:furtherconsid}. } 

In \splitandground, when handling a disjunction or existential quantification, there is a choice on how many disjuncts to expand. If we expand one instantiation at a time for a rule $h\lrule \exists \typed{x}{D} P(x)$, as done in Algorithm~\ref{algo:improveground} (lines~\ref{algo:selectonea} and \ref{algo:selectoneb}), iterative application results in a ground theory 
\begin{align*}
h&\lrule P(d_1) \lor T_1 \\
T_1 &\lrule P(d_2) \lor T_2 \\
T_2& \lrule P(d_3) \lor T_3\\
& \vdots\\
T_n &\lrule \exists\typed{x}{D \setminus \{d_1, d_2, \ldots, d_n\}} P(x).
\end{align*}

\change{A SAT-solver such as MiniSAT, which is used in the \idp
  system, initially assigns $\lfalse$ to the $P(d_i)$ atoms; such a
  choice triggers an iteration of propagation and grounding. The
  resulting thrashing behavior can be reduced somewhat, and the
  grounding is more compact when the grounding introduces $n$
  disjuncts at a time:}
%
% If we adapt the algorithm to introduce $n$ elements at a time, we obtain 
\begin{align*}
h&\lrule P(d_1) \lor \ldots \lor P(d_n) \lor T \\
T &\lrule \exists\typed{x}{D \setminus \{d_1, d_2, \ldots, d_n\}} P(x).
\end{align*} 

To further remedy this, two search-related heuristics are changed.
First, the initial truth value is randomized, but favoring false (as in most models, many more atoms are false than true). 
Second, search algorithms typically \emph{restart} after an (ever-increasing) threshold on the number of conflicts, sometimes caching the truth value assigned to atoms (\emph{polarity caching}). This allows the solver to take learned information into account in the search heuristic while staying approximately in the same part of the search space. In case of lazy grounding, we might want to jump to another part of the search space when we come across long propagate-ground sequences. To this end, we introduce the concept of \emph{randomized restarts}, which take place after an (ever-increasing) threshold on the number of times \Dg is extended and randomly flipping some of the cached truth values.

In addition, \buildconstr always returns \algfalse if it is estimated that
the formula has a small grounding. Indeed, grounding such formulas
can help the search. Whether a formula is considered small is determined in terms of its (estimated) grounding size. The same strategy is used in \splitandground: whenever the formula to which \groundone would be applied is small, \ground is applied instead, to completely ground the formula. 

\subsubsection{Late Grounding} 
Grounding is applied during the search process as soon as unit propagation has taken place. The result is a focus on the current location in the search space, but with the danger of grounding too much if there is no solution in that part of the space. Alternatively, we could apply the opposite strategy, namely to ground as late as possible: only apply additional grounding when the search algorithm terminates without ever having found a model in an acceptable default state. Such a strategy is well-known from the fields of incremental proving and planning, where the domain (number of time steps) is only increased after search over the previous, smaller bound has finished. This guarantees a minimal grounding. A prototype of this strategy has been implemented in \idp with good results on planning problems.

\subsection{Related Inference Tasks}\label{sec:reltasks}
The bulk of the paper focuses on model expansion (MX) for
  \foid theories \theory, for which solutions are structures that are
  two-valued on $\vocf{\theory}$.  Often, one is only interested in a
  small subset of the symbols in $\vocf{\theory}$. This is for example
  the case for model generation for \esoid, the language which extends
  \foid with existential quantification over relations. An \esoid
  problem $\exists P_1, \ldots, P_n: \theory$ with an initial
  structure \I, relation symbols $P_1$, \ldots, $P_n$, and
  \theory an \foid theory, can be solved by model generation for the
  \foid theory $\theory$ with initial structure \I and by
  dropping the interpretation of the symbols $P_1$, \ldots, $P_n$ in
  the models. Another example is query evaluation for \foid: given a
  theory \theory, an initial structure \I and a formula $\varphi$
  with free variables \xxx (all in \foid), the purpose of evaluating
  the query $\tuple{\theory, \I, \varphi}$ is to find
  assignments of domain elements \ddd  to \xxx such that a model of
  \theory exists that expands \I and in which
  $\varphi[\xxx\subs\ddd]$ is true. To solve it by model expansion
  in \foid, a new predicate symbol $T$ is introduced and answers to the query
  are tuples of domain elements $\ddd$ such that $T\bracketddd$ is
  true in a model of the theory $\theory$ extended with the sentence $\exists
  \typed{\xxx}{\DDD} T(\xxx)$ and the definition $\{\forall \typed{\xxx}{\DDD} T(\xxx) \lrule \varphi\}$.

In both cases, approaches using (standard) model expansion compute a total
  interpretation and afterwards drop all unnecessary information,
  which is quite inefficient. Lazy model expansion can save a lot
  of work by only partially grounding the theory. However, once a
  model is found for the grounded part, the justifications and the
  remaining definitions are used to expand the structure to a model of the
  full theory. Although this expansion is obtained in polynomial time,
  it is still inefficient when afterwards a large part of the model is
  dropped.

To remedy this, we define a variant of the model expansion task, denoted \emph{restricted} MX. Restricted MX takes as input a theory \theory, a structure \I and an additional list of symbols $O$, called \emph{output symbols}.\footnote{Within the ASP community, they are sometimes referred to as ``show'' predicates.} Solutions are then structures $M$ which are two-valued on all symbols in $O$ and for which an expansion exists that extends \I and is a model of \theory. Adapting lazy grounding to solve restricted MX can be done through an analysis of which justifications need not be added (completely) to the structure, splitting \Dgd into multiple definitions and only evaluating those defining output symbols or symbols on which those depend (using a stratification argument).

The above-mentioned inference tasks can be cast trivially to restricted MX problems and lazy restricted MX then greatly improves the efficiency with respect to ground-and-solve, as shown in the experimental section.

The extension of \foid with \emph{procedurally interpreted} symbols~\cite{WarrenBook/DeCatBBD14} provides another class of interesting problems. Such predicate symbols have a fixed interpretation, but to know whether a tuple belongs to the predicate, a procedural function has to be executed.  Such an approach provides a clean way to combine declarative and procedural specifications. Consider for example a symbol $isPrime(\nat)$ that is interpreted by a procedure which executes an efficient prime-verification algorithm and returns true iff the given argument is prime. We are generally not interested in the complete interpretation of $isPrime$, so it can be cast as a restricted MX problem with $isPrime$ not in $O$.  Solving such a problem using lazy grounding then has the benefit of only executing the associated function \emph{during} search for relevant atoms $isPrime(d)$. Also for this task, we show an experimental evaluation in the next section.

\newcommand{\tout}{\texttt{T}}

\newcommand{\mout}{\texttt{M}}
\newcommand{\gs}{\texttt{g\&s}}

\section{Experiments}\label{sec:experiments}
The \idp system has a state-of-the-art model expansion engine, as can
be observed from previous Answer-Set Programming competitions~\shortcite{lpnmr/DeneckerVBGT09,journals/tplp/CalimeriIR14,conf/lpnmr/AlvianoCCDDIKKOPPRRSSSWX13}. The lazy model expansion algorithms presented in this paper were implemented in the \idp system, by extending the existing algorithms~\cite{ictai/DeCat13}.

\change{The current implementation is incomplete in the sense that the
  cycle check for justifications has not been implemented yet. This only
  affects inductive definitions as non-inductive ones can be replaced
  by the FO formulas of their completion. As a workaround for the lack
  of a cycle check, \buildconstr, the function that constructs a
  direct justification, returns false for rules defining
  inductive predicates. As a consequence, an instance of such a rule
  is immediately grounded, although lazily, when a domain atom defined
  by the rule is assigned a value. Another consequence is that inductively defined   predicates cannot be used in justifications of other rules. This affects three benchmarks of the ASP competition (described below in Section~\ref{ssec:aspcompexper}), namely \reach, \sokoban and \labyrinth. For these, the grounding might be delayed even more in a complete implementation.}

The section is organized as follows. In Section~\ref{ssec:overhead}, we evaluate the overhead of completely grounding a theory using the presented approach. In Section~\ref{ssec:aspcompexper}, we evaluate the effect of lazy grounding on a number of benchmarks of the ASP competition. In Section~\ref{ssec:additionalexper}, a number of additional properties of the presented algorithms are demonstrated.

We tested three different setups: \idp with the standard ground-and-solve approach (referred to as \texttt{g\&s}), \idp with lazy model expansion (\texttt{lazy}) and the award-winning ASP system Gringo-Clasp (\texttt{ASP}). We used \idp version 3.2.1-lazy, Gringo 3.0.5 and Clasp 2.1.2-st. \change{The parameters of the lazy grounding algorithms are discussed in Section~\ref{ssec:heur}, the values used in the experiments are documented in Appendix~\ref{app:furtherconsid}.} \change{The experiments for Sections~\ref{ssec:overhead} and \ref{ssec:additionalexper} were run on a 64-bit Ubuntu 13.10 system with a quad-core 2.53 GHz processor and 8 GB of RAM. Experiments for Section~\ref{ssec:aspcompexper} were run on a 64-bit Ubuntu 12.10 system with a 24-core 2.40-Ghz processor and 128 GB of RAM.} A timeout of 1000 seconds and a memory limit of 3 GB was used; out-of-time is indicated by \tout, out-of-memory by \mout.\footnote{Benchmarks, experimental data and complete results are available at \url{http://dtai.cs.kuleuven.be/krr/experiments/lazygrounding/jair}.}

\subsection{Effect on Grounding Time}\label{ssec:overhead}
\change{Lazy grounding may reduce grounding size and time but also causes overhead. For instance, we expect the (naive) incremental querying of justifications to be costly  as discussed previously. The aim of this section is to quantify the overhead caused by lazy grounding. In the experiments below we compare the grounding time of the standard IDP system with that of  a \emph{naive} instance of the lazy grounding algorithm that is forced to generate the complete grounding before starting the search. This instance was obtained from the standard algorithm  using some small changes: the shortcut to ground small formulas at once is turned off, disjuncts and instances of existentially quantified formulas are grounded one by one, a defined literal is enqueued for lazy grounding as soon as it appears in \Dg. For comparison, we also measure the cost of the standard lazy grounding algorithm that computes partial groundings.}

We devised six benchmarks to test various aspects of the novel algorithm. Each benchmark is a simple theory with at most two sentences that is simple to solve. The benchmarks are designed to measure the cost of different aspects of lazy grounding:  delaying and resuming grounding, the  querying needed to resume grounding,  the  splitting of formulas, etc. Specifically, the tested aspects are the following:
\begin{enumerate}
\item \change{Overhead of delaying and resuming grounding in case of
    an existential quantifier with a large domain. The sentence is
    $\exists \xxx: P(\xxx)$. Standard grounding creates a single
    clause with $n$ disjuncts; naive lazy grounding grounds the
    formula piece by piece and introduces $n-2$ Tseitin symbols.}
\item \change{Overhead in case of an inductive definition, here
    $\{\forall \xxx: P(\xxx) \lrule P(\xxx) \lor Q(\xxx)\}$. Both
    standard grounding and naive lazy grounding construct a ground
    rule for each $P(\ddd)$ atom.}
\item \change{Overhead in case of a universal quantification. The
    sentence is $\forall \xxx: P(\xxx)$. While standard grounding
    creates $n$ atomic formulas, naive lazy grounding splits off one
    instance at a time and introduces $n-2$ Tseitin symbols.}
\item \change{Lifted Unit Propagation
    (LUP)~\cite{jair/WittocxMD10,acm/wittocx} is an important preprocessing step to reduce the grounding size. Concretely, applying LUP to the rules}
  \begin{align*}
    &\forall \xxx: \lnot R(\xxx) \\
    &\forall \xxx: R(\xxx) \limplies \forall \yyy: P(\xxx, \yyy)
  \end{align*}
  \change{derives that the second formula follows from the first and
    hence does not need to be grounded at all. This theory is used to
    check whether LUP remains equally important in a system with lazy
    grounding.}
\item \change{Overhead in case of nested universal quantification. The
  sentence is of the form $\forall \xxx: R(\xxx) \limplies \forall
  \yyy: P(\xxx, \yyy)$. Standard grounding creates a formula for each
  instance $\ddd$ of $\xxx$ with a Tseitin for the grounding of $\forall
  \yyy: P(\ddd, \yyy)$. Naive lazy grounding creates an extra
  Tseitin for each instance $\ddd$ of $\xxx$ and an extra set of
  Tseitins for the piece by piece grounding of the subformula $\forall
  \yyy: P(\ddd, \yyy)$.}
\item \change{Overhead of the incremental querying in case a symbolic justification has to be validated. The sentence is $\forall \xxx: R(\xxx) \lor S(\xxx)$, with an identical justification formula. The formula is validated by checking the falsity of the query $\exists \xxx: \neg R(\xxx)\land\neg S(\xxx)$. This query is re-evaluated each time an $R$-atom or $S$-atom is falsified.}
\end{enumerate}

\subsubsection{Results}
Experiments were done \change{for predicates $P$ and $Q$ with arity 3 and $R$ and $S$ with arity 2,} and domains of size 10, 20, 30, 40 and 50. \change{None of the predicates symbols were interpreted in the structure.}

In all experiments, the overhead for the time required to solve the
initial optimization problem (for the global approach) was always
around 0.02 seconds, so in itself negligible.
\change{The results for the first three experiments are not shown as
the differences between standard grounding and naive lazy grounding
are negligible. While expected for experiment 2, for experiments 1 and
3, it shows that our actual implementation eliminates the overhead
for Tseitins when quantifiers are not nested.  
In each of these three experiments, standard lazy grounding is able to
justify the formulas without grounding them and hence fast and
almost insensitive to the domain size.
As shown in Figure~\ref{overhead}, there is no difference between
standard grounding and naive lazy grounding for experiment 4. In
both cases, the use of LUP has a big impact on the size of the
grounding and hence on the time.
While experiment 1 and 3 showed that a top level quantifier does not
create overhead for lazy grounding, experiment 5 shows that this does
not hold anymore for nested quantifiers and that naive lazy grounding
has substantial overhead when compared with standard grounding. Note
that this overhead is worst case. When Tseitins can be justified,
their definitions are not grounded, which explains why normal lazy
grounding is faster than standard grounding and insensitive to the
domain size.
Experiment 6 shows that a more complex justification formula causes
significant overhead for naive lazy grounding. Also here, the overhead
is worst case and not visible in normal lazy grounding. Still, it is
an important part of future research to reduce the overhead of the
incremental querying of complex justification formulas.}

\begin{figure}
\centering
\begin{minipage}{0.32\textwidth}
\centering
\includegraphics[width=\textwidth]{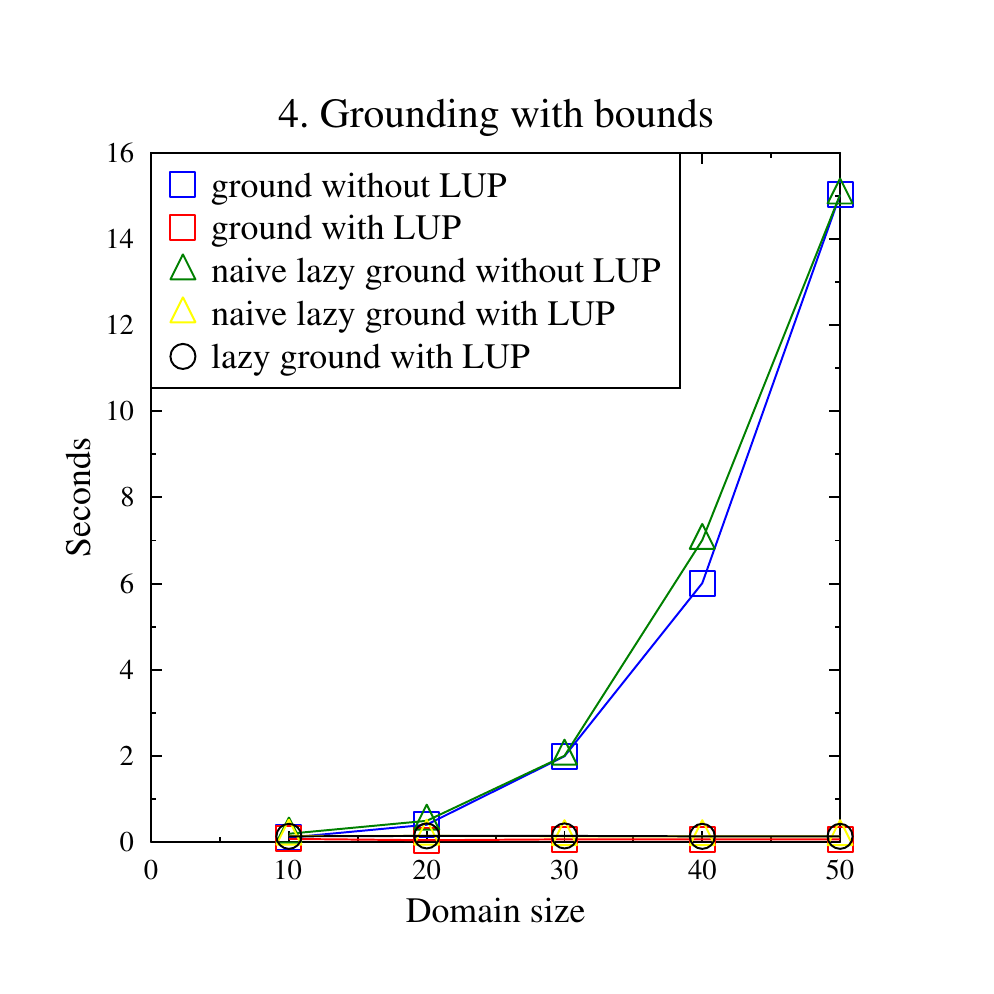}
\end{minipage}
\begin{minipage}{0.32\textwidth}
\centering
\includegraphics[width=\textwidth]{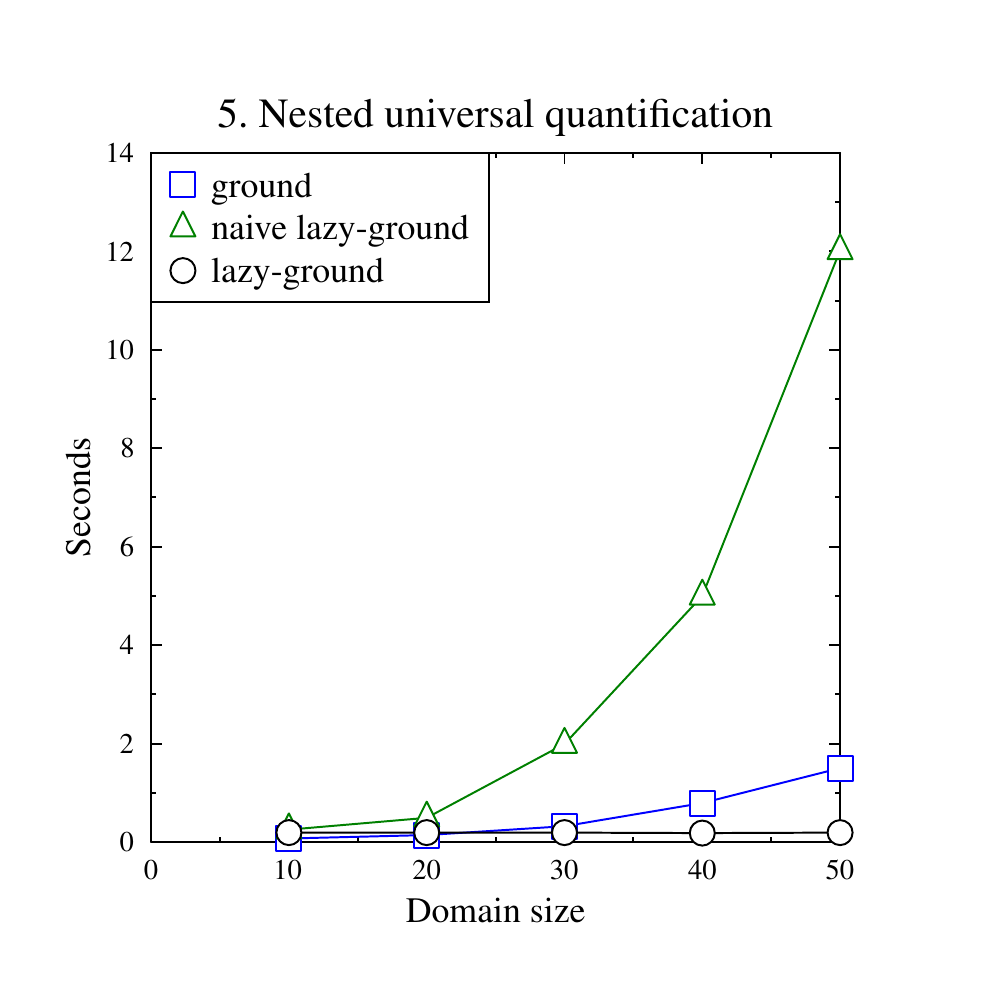}
\end{minipage}
\begin{minipage}{0.32\textwidth}
\centering
\includegraphics[width=\textwidth]{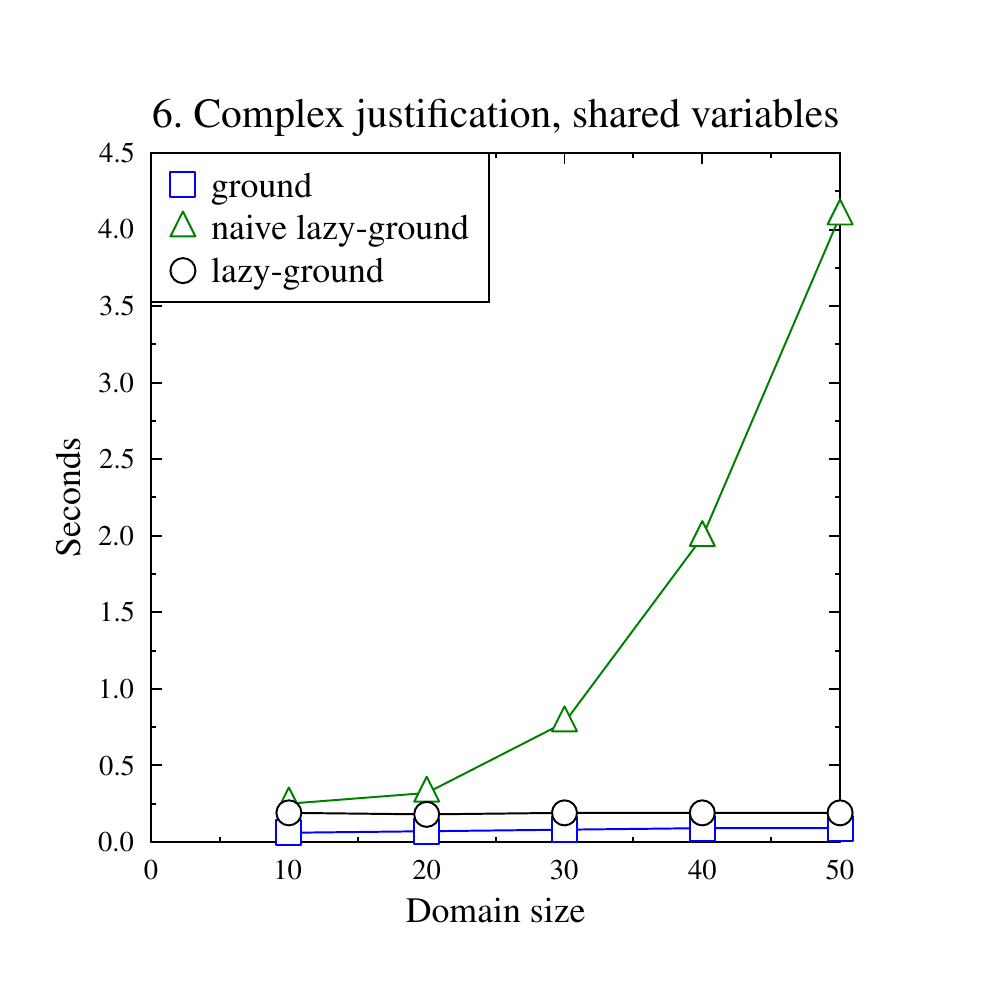}
\end{minipage}
\vspace{5pt}
\caption{Time overhead of \change{naive} lazy grounding over ground-and-solve when completely grounding the input theory, for benchmarks 4, 5 and 6. The time includes grounding, solving and the time needed to find justifications. The time required by the standard lazy grounding algorithm is also shown for comparison.}
\label{overhead}
\end{figure}

\subsection{ASP Competition Benchmarks}\label{ssec:aspcompexper}
Second, we selected benchmarks from previous ASP competitions to evaluate
the lazy grounding algorithm in a more realistic setting. \change{Many  benchmarks solutions  of that competition are carefully fine tuned for  speed and minimal   grounding.  Lazy grounding is usually unable to substantially reduce the grounding of such theories and, due to its overhead, is then slower than standard ground and solve. For this reason, we have sometimes selected  modelings of the benchmarks that are more natural but  less optimized in time and grounding size. We justify this on the ground that the aim of our work is  to improve inference for declarative \emph{modeling}~\cite{WarrenBook/DeCatBBD14}, where the  emphasis is not on developing intricate encodings, but on modeling a problem close to its natural language specification.}

We selected the following problems (see the competition websites for complete descriptions). They consist of problems that are known to be hard, in order to evaluate the effect of lazy model expansion on search, and problems that typically result in a large grounding.
\begin{itemize}
  \item \reach: Given a directed graph, determine whether a path exists between two given nodes.
  \item \labyrinth: \change{A planning problem where an agent traverses a graph by moving between connected nodes to reach a given goal node. In addition, the graph can be manipulated to change its connectedness.}
  \item \packing: Given a rectangle and a number of squares, fit all squares into the grid without overlaps.
  \item \disjsched: Schedule a number of actions with a given earliest start and latest end time with additional constraints on precedence and disjointness.
  \item \sokoban: A planning problem where a robot has to push a number of blocks to goal positions, constrained by a 2-D maze.
  \item \colouring: Given a graph, assign colour to nodes (from a given set of colours), such that no connected nodes have the same colour.
  \item \stablemarriage: Given a set of men and women and a set of
    preferences, find a ``stable'' assignment: no swap results in a better
    match.
\end{itemize}

\change{For each of these, we used all instances from the 2011 and 2013 competitions, except for the 2013 \reach instances, because of the huge data files which none of the systems is designed to handle.}
For \stablemarriage, \colouring and \reach, we based our encodings on the available ASP-Core-2 encodings. For \packing and \disjsched, we constructed a natural \foidp encoding and made a faithful translation to ASP. For the more complex benchmarks of \labyrinth and \sokoban, we used the original \foidp and Gringo-Clasp's \ASP specifications submitted to the 2011 competition. For the lazy model expansion, we replaced cardinality expressions by their \FO encoding as for the former no justifications are derived yet; this also increases the size of the full grounding.

\subsubsection{Results}

\begin{table}
\centering
\begin{tabular}{l||c||c|c|c||c|c|c}
 & \# inst. & \multicolumn{3}{c||}{\# solved} & \multicolumn{3}{c}{avg. time (sec.)} \\
benchmark & & \gs & \texttt{lazy} & \texttt{ASP}  & \gs & \texttt{lazy} & \texttt{ASP}\\
\hline
\hline
\sokobans		&50		&44	&25	&\tbf{50}		&102	&59		&\tbf{20}\\
\disjscheds		&21		&5	&\tbf{21}	&5		&130	&207	&\tbf{5}\\
\packings		&50		&\tbf{44}&\tbf{44}&6	&173 	&\tbf{121}&437	\\
\labyrinths		&261	&83	&72	&\tbf{141}		&196	&245	&\tbf{181}	\\
\reachs			&16		&2	&\tbf{16}	&4		&110	&\tbf{12}&44	\\
\stablemarriages&106	&21	&\tbf{94}	&5		&643	&402	&\tbf{40}	\\
\colourings		&60		&\tbf{34}	&12	&18		&211	&\tbf{21}&85	\\
\end{tabular}
\caption{\change{The number of solved instances for the ASP benchmarks and the average time taken on the solved instances. Different solvers solve quite different sets of instances.}}
\label{tab:timings}
\end{table}

The number of solved instances and average time are shown in Table~\ref{tab:timings}; the average grounding size for the \idp setup is shown in Table~\ref{tab:groundings}.\footnote{\change{Grounding consists of variable instantiation interleaved with formula simplification (e.g., dropping false disjuncts, true conjuncts, replacing disjunctions with true disjuncts by true and conjunctions with false conjunctions by false, etc). These simplification steps may seriously reduce the grounding size.}
} \change{For time and grounding size, unsolved instances were not taken into account. Memory overflows happened in \reach (9 times for \gs, 9 times for \texttt{ASP}), \disjsched (6 times for \texttt{ASP}) \labyrinth (160 times for \gs, once for \texttt{ASP}), \packing (4 times for \gs, 4 times for \texttt{lazy}, 30 times for \texttt{ASP}) and \stablemarriage (66 times for \texttt{ASP}); all other unsolved instances were caused by a time-out.}\footnote{\idp has automatic symmetry breaking, the cause of the difference between \texttt{g\&s} and \texttt{ASP} for \colouring.}

\begin{table}
\centering
\begin{tabularx}{0.95\textwidth}{l||c|c|c||c|c}
\multicolumn{1}{l||}{} 	& \multicolumn{3}{c||}{ground size (\# atoms)} & \multicolumn{2}{l}{ground time}\\
benchmark 				& \gs & \texttt{lazy} & \texttt{ASP} & \gs (sec.) & \texttt{ASP} (sec.)\\
\hline
\hline
\sokobans			&$\mathbf{2.65\times 10^4}$	&$2.90\times 10^5$ &$4.63\times 10^4$			& 2.0	&\tbf{0.3}\\	
\disjscheds			&$5.17\times 10^6$			&$2.72\times 10^6$ &$\mathbf{8.04\times 10^5}$	& 129.7	&\tbf{0.7}\\
\packings			&$3.86\times 10^7$			&$1.69\times 10^7$ &$\mathbf{4.53\times 10^6}$	& 165.6	&\tbf{4.7}\\
\labyrinths			&$1.68\times 10^6$			&$1.38\times 10^6$ &$\mathbf{3.55\times 10^5}$	& 101.0	&\tbf{2.3}\\
\reachs				&$2.87\times 10^7$			&$\mathbf{1.61\times 10^4}$ &$1.35\times 10^6$	& 109.7	&\tbf{14.5}\\
\stablemarriages	&$2.11\times 10^7$			&$1.20\times 10^7$ &$\mathbf{3.36\times 10^6}$	& 642.7	&\tbf{3.2}\\
\colourings			&$\mathbf{1.15\times 10^4}$	&$1.58\times 10^4$ &$2.80\times 10^4$		& \tbf{0.1}	&\tbf{0.1}
\end{tabularx}
\caption{\change{The average grounding size for the number of \emph{solved} instances of the ASP benchmarks, for all setups. For the \texttt{lazy} setup, the size of the final ground theory was taken. For \gs\ and \texttt{ASP}, the average grounding time is also shown.}}
\label{tab:groundings}
\end{table}

The results show that lazy model expansion solved more instances than the other setups in four out of seven cases. In those cases, the problems also got solved significantly below the time threshold. In five out of seven cases, the (final) grounding size was smaller for lazy model expansion, orders of magnitude in one case. 
For \sokoban, \labyrinth and \colouring, lazy model expansion was outperformed by ground-and-solve, indicating that the loss of information outweighed the gain of grounding less up-front. E.g., for \sokoban, the  final \texttt{lazy} grounding size was even higher than for \gs\ (possible due to the \FO encoding of cardinalities), indicating that a large part of the search space was explored.
For \stablemarriage, the relatively small difference in grounding size between \gs\ and \texttt{lazy} leads us to believe that the different search heuristic was the main factor, not lazy grounding itself.

We also experimented with the \airportpickup ASP-2011 benchmark, a
fairly standard scheduling problem (transporting passengers by taxis
taking into account fuel consumption) except that no upper bound on
time is provided.\footnote{\change{It is possible to derive finite
    worst-case thresholds for the Airport Pickup problem. This is,
    however, not part of the original specification.}} Hence any
ground-and-solve approach would need to construct an infinite
grounding. Applying straightforward lazy model expansion also resulted
in a grounding that was too large. %However, using the prototype 
\change{However, with the prototype that uses the late grounding heuristic described in
  Section~\ref{ssec:heur},}
%
%in which additional grounding is done when unsatisfiable has been
%found (see Section~\ref{ssec:heur}), 
\idp solved one out of ten instances. For the others, grounding was not the problem, but the search took too long at each of the time intervals $1..n$ considered to get up to a sufficient $n$ to solve the problem (even with the standard search heuristic).

The presented results show that, although often beneficial, lazy model expansion can be a considerable overhead for some hard search problems. \change{On the other hand, while inspecting the outcome of the experiments, we observed that the class of specifications and instances solved by lazy grounding and traditional grounding only partially overlap. This suggests that it might be a good idea to integrate both approaches into a \emph{portfolio} system. Such a system can either select heuristically whether to use ground-and-solve or lazy model expansion (based on the input) or running both in parallel, aborting either one if it uses too much memory}. %We expect specifically the latter approach to come close to the best solution in all presented benchmarks. 
However, on all the problems considered, lazy model expansion could start search much earlier than ground-and-solve, even though it got lost more often during search. This leads us to believe that  to realize the  full potential of lazy grounding, more work is necessary on developing suitable heuristics (possibly user-specified ones).

\subsection{Specific Experiments}\label{ssec:additionalexper}
In addition to the ASP competition benchmarks, some experiments were conducted using crafted benchmarks to illustrate specific properties of the lazy grounding algorithm.

The first part of Table~\ref{tab:spec_inst} shows the results of scalability experiments. For each of the benchmarks \packing, \sokoban and \disjsched, we selected a  simple problem instance and gradually extended its domain size by orders of magnitude: the size of the grid (\packing) or the number of time points (\sokoban, \disjsched). The results show that for each of the instances, lazy model expansion scales much better than the ground-and-solve strategies of \idp and Gringo-Clasp and for satisfiable as well as unsatisfiable instances. However, for \disjsched the solving time still increases significantly. The reason is that the lazy heuristics are still naive and make uninformed choices too often.

\change{As we mentioned in the previous section, ASP competition problems
  typically have small groundings since running benchmarks which are too
  large for any system to handle does not provide a useful comparison of
  the systems. 
  Hence, we also evaluated lazy model
  expansion on a number of crafted benchmarks where grounding is
  non-trivial. 
It is part of future work to look for more practical applications of this type. We constructed the following benchmarks:}
\begin{itemize}
  \item \texttt{Dynamic reachability}, the example described in Section~\ref{sec:lazyExample}.
  \item Lazy evaluation of \texttt{procedurally interpreted} symbols, using a simple theory over the prime numbers. As described in Section~\ref{sec:reltasks}, a predicate symbol $isPrime/1$ is interpreted by a procedure that returns true if the argument is prime.
  \item A predicate encoding of a \texttt{function} with a huge domain.
   \item  \change{An  experiment that simulates  model generation for a theory with an unknown domain.  The unknown domain is expressed by a new predicate $used/1$; quantified  formulas $\forall x: \varphi$ are  translated to $\forall x: (used(x) \Rightarrow \varphi)$ and $\exists x: \varphi$  to $\exists x: (used(x) \land \varphi)$; model generation is simulated by model expansion with  a domain of size $10^6$. }
\end{itemize}
For each one, a faithful ASP encoding was constructed. \change{The second part of Table~\ref{tab:spec_inst} shows the results for these benchmarks. They show a significant improvement of lazy model expansion over ground-and-solve on all examples: in each case, both \texttt{g\&s} and \texttt{ASP} went into memory overflow during grounding, while \texttt{lazy} found solutions within seconds. However, for \disjsched, it is also evident that the lazy approach would benefit from improved heuristics: increasing the domain size significantly increases the solving time, while the instances are not intrinsically harder.}

\begin{table}
\centering
\begin{tabular}{l|c|c|c}
benchmark & \texttt{lazy} & \texttt{g\&s} & \texttt{ASP} \\
\hline
\hline
\texttt{packing}-$10$ & 0.2& 2.0& 0.1\\
\texttt{packing}-$25$ & 0.3& 2.0& 0.1\\
\texttt{packing}-$50$ & 1.1& 10.03& 5.8\\
\hline
\texttt{sokoban}-$10^3$ & 0.31& 0.3& 0.1\\
\texttt{sokoban}-$10^4$ & 0.5& 20.0& 1.1\\
\texttt{sokoban}-$10^5$ & 2.6& \tout& 68.0\\
\hline
\texttt{disj-sched}-sat-$10^3$ & 0.39& 0.49& 0.07\\
\texttt{disj-sched}-sat-$10^4$ & 13.04& 16.05& 17.44\\
\texttt{disj-sched}-sat-$10^5$ & 164.18& \mout& \mout\\
%\texttt{disj-sched}-sat-$10^6$ & 273.57& M& 96.51\\
\hline
\texttt{disj-sched}-unsat-$10^3$ & 0.24& 0
49& 0.09\\
\texttt{disj-sched}-unsat-$10^4$ & 4.11& 16.04& 19.85\\
\texttt{disj-sched}-unsat-$10^5$ & 164.2& \mout& \mout\\
%\texttt{disj-sched}-unsat-$10^6$ & 165.44& M& M\\
\hline\hline
\texttt{dynamic reachability} & 0.18& \mout& \mout\\
\texttt{procedural}			& 1.24& \mout& \mout\\
\texttt{function}			& 0.79& \mout& \mout\\
\texttt{modelgeneration} 	& 0.19& \mout& \mout
\end{tabular}
\caption{The solving time for additional crafted benchmarks, one instance each.}
\label{tab:spec_inst}
\end{table}

\subsubsection{Closer to Inherent Complexity?}
During the modeling phase of an application, different encodings are typically tested out, in an attempt to improve performance or to locate bugs. While modeling our experimental benchmarks, we noticed that simplifying a theory by dropping constraints often resulted in a dramatic reduction in the time lazy model expansion took to find a model. Standard model expansion, on the other hand, was much less affected by such simplifications.
In our opinion, this observation, while hardly definitive evidence, is another indication that the presented algorithms are able to derive justifications for parts of a theory that can be satisfied cheaply. In that way, the approach is able to distinguish better between problems which are inherently difficult and problems which would just have a large grounding.

\section{Related Work}\label{related_work}
Lazy model expansion offers a solution for the blow-up of the
  grounding that often occurs in the ground-and-solve model expansion
  methodology for \foid theories. \glsreset{ASP}\ASP and \glsreset{SMT}\SMT techniques
  also process theories that can have a large grounding; the constraint 
  store of \glsreset{CP}\CP and Mixed Integer Programming and the clauses of SAT 
  can be considered the equivalent of a grounded theory (they are often derived from quantified descriptions such as ``$c_i<c_j$ \tbf{for all} $i$ and $j$ for which \ldots'') and can also become very large. \citeA{lpnmr/LefevreN09a} and \citeA{SMTQuantProblem2009} have reported a blow-up problem in these paradigms and a multitude of techniques has been developed to address it. We distinguish four approaches.

First, concerning grounding up-front, research has been done towards \emph{reducing the size of the grounding} itself through (\tbf{i}) \emph{static analysis} of the input to derive bounds on variable instantiations~\cite{jair/WittocxMD10,acm/wittocx}, (\tbf{ii}) techniques to \emph{compile} specific types of sentences into more compact ground sentences~\shortcite{constraints/TamuraTKB09,tplp/MetodiC12}, (\tbf{iii}) detect parts that can be evaluated polynomially~\shortcite{tocl/LeonePFEGPS06,lpnmr/GebserKKS11,tplp/Jansen13} and (\tbf{iv}) detect parts that are not relevant to the task at hand (e.g., in the context of query problems) as shown in the work of~\citeA{tocl/LeonePFEGPS06}. Naturally, each of these approaches can be used in conjunction with lazy grounding to further reduce the size of the grounding. In \idp, e.g., lazy grounding is already combined with (\tbf{i}) and (\tbf{iii}).

Second, the size of the grounding can be reduced by \emph{enriching}
the language. For example, ASP solvers
typically support ground aggregates (interpreted second-order
functions such as cardinality or sum that take sets as arguments), and
\CP and \SMT solvers support (uninterpreted) functions. More recently,
the Constraint-ASP paradigm was developed~\cite{tplp/OstrowskiS12},
that integrates ASP and CP by extending the ASP language with
\emph{constraint} atoms. These are interpreted as constraints in a CSP
problem and can thus be handled using \CP techniques. Various CASP
solvers are already available, such as
Clingcon~(\citeauthor{tplp/OstrowskiS12}), Ezcsp~\mycite{EZCSP},
Mingo~\cite{kr/LiuJN12} and Inca~\mycite{Inca}. \change{This technique is also integrated into \idp~\shortcite{ictai/DeCat13}.} Inca and \idp in fact implement Lazy Clause Generation~\cite{constraints/OhrimenkoSC09}, an optimized form of lazy grounding for specific types of constraints. The language HEX-ASP~\shortcite{ijcai/EiterIST05} also extends ASP, this time with \emph{external} atoms that represent (higher-order) external function calls.

\change{Third, \emph{incremental approaches} are well-known from model generation, theorem proving and planning.} For these tasks, the domain is typically not fixed in advance, but part of the structure being sought, such as the number of time steps in a planning problem \change{(recall the Sokoban example from the introduction). Such an approach typically works by grounding the problem for an initial guess of (the number of elements in) the domain.} Afterwards, search is applied; if no model was found, the domain is extended and more grounding is done. This is iterated until a model is found or a bound on the maximum domain size is hit (if one is known). This technique is applied, e.g., in the prover Paradox~\cite{model/ClaessenS03} and the ASP solver IClingo~\shortcite{iclp/GebserKKOST08}.

Fourth, and closest to lazy grounding itself, is a large body of research
devoted to delaying the grounding of specific types of expressions until
necessary (for example until they result in propagation). Propagation
techniques on the first-order level that delay grounding until propagation
ensues have been researched within
\ASP~\shortcite{lpnmr/LefevreN09a,iclp/PaluDPR09,jelia/Dao-TranEFWW12} and
within \CP~\cite{constraints/OhrimenkoSC09}. 
Such techniques can be used in
conjunction with lazy grounding as they derive more intelligent
justifications for specific types of constraints than presented here. For
example, Dao-Tran et al. also presented an efficient
algorithm for bottom-up propagation in a definition. Within SMT, various
theory propagators work by lazily transforming their theory into SAT, such
as for the theory of Bit Vectors by~\shortciteA{Bruttomesso07alazy}. \citeA{SMTQuantProblem2009} investigated quantifier handling by
combining heuristic instantiation methods with research into decidable
fragments of FO theories, as these can be efficiently checked for
models. Within \ASP, work has been done on goal-directed reasoning. Both~\citeA{aaai/BonattiPS08} and~\citeA{ppdp/MarpleBMG12} demonstrate approaches, in the style of SLD resolution, that apply top-down instantiation to answer queries over infinite domains. \citeA{epia/SaptawijayaP13} extend an abduction framework to lazily generate part of the relevant sentences. In search algorithms, justifications (or \emph{watches}) are used to derive when a constraint will not result in propagation or is already satisfied, and hence need not be checked in the propagation phase. \shortciteA{jair/NightingaleGJM13} show how maintaining (short) justifications can significantly reduce the cost of the propagation phase.

In fact, a well-known technique already exists that combines search with lazy instantiation of quantifiers, namely \emph{skolemization}, where existentially quantified variables are replaced by newly introduced function symbols. Universal quantifications are handled by instantiating them for those introduced function symbols. Reasoning on consistency can, e.g., be achieved by congruence closure algorithms, capable of deriving consistency without effectively assigning an interpretation to the function symbols. These techniques are used in Tableau theorem proving~\cite{el/RV01/Hahnle01} and SMT solvers~\cite{DBLP:journals/jacm/DetlefsNS05}.
Formula~\cite{formula/JacksonB13} interleaves creating a ground program and giving it to an SMT solver, iterating when symbolic guesses proved to be wrong.
Skolemization-based techniques typically work well in case only a small number of constants needs to be introduced, but have difficulty in case the relevant domain is large.
One can also see that lazy grounding (with support for function symbols) could incorporate skolemization by adapting the rules for grounding existential and universal quantification.
We expect skolemization to be complementary to lazy grounding, but an in-depth investigation is part of future work.

In the field of probabilistic inference, several related techniques have been developed that also rely on lazy instantiation. First, the Problog system uses a form of static dependency analysis to ground a (probabilistic) program in the context of a given query, by constructing all possible ways to derive the query in a top-down fashion~\shortcite{tplp/KimmigDRCR11}. Second, so-called \emph{lazy inference}, applied e.g. in \emph{LazySAT}~\cite{aaai/SinglaD06}, exploits the fact that, for the considered inference, a (fixed) \emph{default} assumption exists under which an expression certainly does not contribute to the probabilities. Hence, expressions for which the assumption certainly holds do not have to be considered during search. Third, \emph{cutting plane inference}~\cite{riedel09cutting} applies lazy inference in an interleaved setting, only constructing the part of the program for which the assumptions are not satisfied.

\section{Future Work}
\change{Several aspects of the presented work need further
  investigation.}
One aspect is extending support to lazily ground more complex expressions, including aggregate expressions and (nested) function terms. Consider for example the sentence $(\sum_{x\in D \text{ and } P(x)}f(x))>3$, which expresses that the sum of terms $f(d)$ for which the atom $P(d)$ is true, with $d\in D$, $P$ a predicate and $f$ a function, should be larger than 3. One can observe that it is not necessary to ground the whole sentence up-front. For example, if $f$ maps to the natural numbers (hence positive), the set $\{P(d_1), f(d_1)>3\}$ is a minimal justification. Even if no easy justification can be found, we can suffice by grounding only part of the sentence and delay the remainder. For example, we can create the ground sentence $(\sum_{P(d_1)}f(d_1)) > 3 \lor T$, with $T$ a Tseitin symbol defined as $(\sum_{P(d_1)} f(d_1)) + (\sum_{x\in D \elim d_1 \text{ and } P(x)}f(x))>3$. Indeed, in any model of the sentence in which $T$ is false, the original inequality is satisfied.

A second aspect is whether there are advantages to grounding earlier, for example to guarantee no propagation is lost, or grounding later, possibly reducing the size of the grounding even more. For example, consider the sentences $P \limplies \phi$ and $\lnot P \limplies \psi$, with $\phi$ and $\psi$ both large formulas for which no justification was found. Instead of grounding at least one of the sentences, we might add $P$ to the list of atoms the search algorithm has to assign and only ground either of the sentences when $P$ has been assigned a value (it might even be that unsatisfiability is detected before grounding either one).

Given that lazy grounding is useful, what about lazy \emph{forgetting}  the grounded theory? As the ground theory is extended when making the structure more precise, the ground theory could be reduced again during backtracking. By storing the justification violations that caused grounding, we can derive which grounding can be forgotten again if the violation is no longer problematic (e.g., after backtracking). For this, an algorithm needs to be developed which tracks grounding/splitting dependencies between rules given their justifications. This closely resembles techniques used in tableau theorem proving and SMT, where the theory at hand can be compacted when moving to a different part of the search space.

The approach described for lazy grounding can also be applied to answer set generation in the field of \ASP. In \ASP, a logic program under stable semantics can be seen as one rule set, a  single definition. However, such ASP programs do not satisfy a major condition  to apply lazy grounding. Indeed such programs are typically non-total, due to the presence of constraints and rules of the form $p \lrule not~np, np \lrule not~p$ or other \emph{choice rules} which result in multiple stable models.  However, as described by~\citeA{DeneckerLTV12}, most practical \ASP programs can be partitioned in a set of choice rules, a set of \emph{total} definitions and a set of constraints (the so-called Generate-Define-Test partition).  Any \ASP program that can be GDT-partitioned, can be translated straightforwardly into an equivalent \foid theory that only contains total definitions. This suggests a way to apply lazy grounding  to such \ASP programs.

\section{Conclusion}
Solvers used in the domains of SAT, SMT and ASP are often confronted with problems that are too large to ground. Lazy model expansion, the technique described in this paper, interleaves grounding and search in order to avoid the grounding bottleneck. The technique builds upon the concept of a  justification, a deterministic recipe to extend an interpretation such that it satisfies certain constraints. 
A theoretical framework has been developed for lazy model expansion for the language \foid and algorithms have been presented to derive and maintain such justifications and to interleave grounding with state-of-the-art CDCL search algorithms. The framework aims at bounded model expansion, in which all domains are finite, but is also an initial step towards handling infinite domains efficiently. Experimental evaluation has been provided, using an implementation in
the \idp system, in which lazy model expansion was compared with a state-of-the-art ground-and-solve approach. The experiments showed considerable improvement over ground-and-solve in existing benchmarks as well as in new applications.
The main disadvantage is the less-informed search algorithm, caused by the delay in propagation and the introduction of additional symbols. \change{A possible solution is to develop new heuristics or  portfolio approaches that combine the strengths of both methods.} Finally, we have indicated a way how the proposed methods can be applied beyond \foid, to \ASP solvers in general.  

\section*{Acknowledgements}
During this research, Broes De Cat was funded by the Agency for
Innovation by Science and Technology in Flanders (IWT). This research
was also supported by FWO-Vlaanderen and by the project GOA 13/010,
Research Fund KULeuven. NICTA is funded by the Australian Government through
the Department of Communications and the Australian Research Council through
the ICT Centre of Excellence Program.

\appendix

\section{More Details about the Algorithms}\label{app:furtherconsid}
In this appendix, we mention parameter values as well as some
optimizations that further reduce the grounding overhead and/or
improve the search. \change{For each optimization, we indicate what is currently implemented (and part of the experimental results) and what is part of future work.}

\subsection{Parameter Values}
\change{In~\ref{ssec:heur}, a number of parameters were introduced to control the behavior of lazy model expansion. Here, we provide details on the values used in the experimental evaluation. These values were set manually, based on experience and a limited number of observations (e.g., the extension threshold works similar to the conflict threshold of the SAT solver). It is part of future work to study the impact of different values.
\begin{itemize}
  \item For an existential quantification, 10 instantiations are grounded at a time; for a disjunction, 3 disjuncts are grounded at a time. This turned out to give the best balance between introducing too many Tseitin atoms and grounding too much.
  \item The initial truth value is $\ltrue$ with probability $0.2$ and $\lfalse$ otherwise.
  \item The initial threshold for randomized restarts is 100 extensions of the ground theory. It is doubled after each restart.
  \item A formula is considered small if its estimated grounding size is below $10^4$ atoms. 
\end{itemize}
}

\subsection{Extension to \fodotidp}
So far, we have described a lazy model expansion algorithm for
function-free \foid.  However, \fodotidp, the knowledge-base language
of the \idp system, supports a much richer input language.
\change{Besides types ---which we use to initialize the domains--- it
  also supports (partial) functions, aggregates and arithmetic. Our
  current implementation ignores the latter extensions through a
  straightforward adaptation of
  \buildconstr(Algorithm~\ref{fig:buildconstr}): the case for literals
  is extended to return \algfalse when the literal is not part of the
  function-free \foid language. For example, given a rule $h \lrule
  \forall x: P(x) \lor Q(f(x))$, $P(x)$ can be used in a justification
  but $Q(f(x))$ cannot. For functions, there is also the option to
replace them by graph predicates during the preprocessing step. As for
the experiments of Section~\ref{ssec:aspcompexper}, functions, if any,
are given in the input structure and hence play no role.
}

\change{It is part of future work to extend lazy grounding for these
  extensions, especially for functions. Techniques developed in SMT
  and in Constraint Programming to handle (ground) atoms containing
  function symbols are useful to reduce the size of the grounding and
  improve search. In previous work, these techniques have been
  integrated in the \idp system~\cite{ictai/DeCat13} and it is
  certainly worthwhile to fully integrate them with lazy grounding.}

\subsection{Cheap Propagation Checks.} 
In \lazymx, it is checked for each assigned
literal whether it is defined in \Dd and whether it
violates any justifications. To implement this cheaply, 
\change{our implementation maintains a mapping} %a mapping is maintained 
for literals in \Dg. \change{It states whether
  the literal is defined in \Dd and also lists the justifications in
  which its negation occurs.}
This mapping is extended  whenever a new literal is  added to \Dg and maintained whenever justifications change.  The performance of the search loop is unaffected as long as literals are assigned for which the mapping is empty.

\subsection{Stopping Early} 
\change{In Algorithm~\ref{fig:lazymx}, we took the standard stopping criterion used in most search algorithms (Line~\ref{stop-early}): to stop in a conflict-free state where \I is two-valued on all symbols of $\Sg \cup \Dg$.  In principle, we may stop earlier,  with a partial  structure \I that admits a total justification for \pt. Indeed,  Corollary~\ref{col:acceptable}  tells us that such an $\I$ can be expanded to a model. This has a considerable impact on grounding size. Indeed, assigning a truth value to an atom $A$ defined in \Dd that is irrelevant (in effect, does not appear in the justification) will trigger grounding of $A$'s definition, which in turn might introduce new literals defined in \Dd, causing a cascade of unnecessary groundings and assignments. Our solver algorithm does not maintain a justification of \Dg, so it cannot know exactly when a justification exists. Instead, the implemented algorithm only chooses literals that are watched by some formula/rule. It stops with a partial structure in which unwatched literals may not be assigned. It can be shown that this suffices to guarantee that \I admits a justification. Hence it is safe to stop search.}

\subsection{Approximate Justifications}
In some cases, \buildconstr cannot find a valid justification for a large formula because a few literals are already false in \I. For example for a formula $\forall x\in D: P(x)$, \buildconstr returns \algfalse if at least one atom of $P$ is false. Instead, we \change{have adapted} \buildconstr with a heuristic check on the number of expected violations. If it is small enough, the justification is still returned. Naturally, we are then required to check whether there are any real violations, by querying the justification formula over \I, and apply \lazyground to them.

\bibliographystyle{theapa}

\end{document}